\keywords{Finite Model Theory, Graph Isomorphism, Descriptive
  Complexity, Algebra}
\tikzset{fontscale/.style = {font=\relsize{#1}}
    }
 \DeclareMathOperator{\ifp}{ifp}
\DeclareMathOperator{\im}{im}
\DeclareMathOperator{\dom}{dom}
\DeclareMathOperator{\Aut}{Aut}
\DeclareMathOperator{\characteristic}{char}
\DeclarePairedDelimiter\floor{\lfloor}{\rfloor}
\newcommand{\@abbrev}[3]{
  \def\c@a@def##1{
      \if ##1.
        \relax
      \else
        \@ifdefinable{\@nameuse{#1##1}}{\@namedef{#1##1}{#2##1}}
        \expandafter\c@a@def
      \fi
    }
  \c@a@def #3.
}
\newcommand{\N}{\mbN}
\newcommand{\Inv}{\text{Inv}}
\newcommand{\field}{\mathbb}
\renewcommand{\phi}{\varphi}
\renewcommand{\theta}{\vartheta}
\newcommand{\tup}{\bar}
\newcommand{\Str}{\textup{Str}}
\newcommand{\CFIgraph}[3]{\text{\sf CFI}\,[#1; #2; #3]}
\newcommand{\CFIclass}[2]{\text{\sf CFI}\,[#1; #2]}
\newcommand{\Primes}{\mbP}
\newcommand{\taucfi}{\tau_{\text{\sf CFI}}}
\renewcommand{\mod}{\ensuremath{\text{mod }}}
\newcommand{\I}{\mcI}
\newcommand{\FPC}{\textup{FPC}\xspace}
\newcommand{\FO}{\textup{FO}\xspace}
\newcommand{\FOC}{\textup{FOC}\xspace}
\newcommand{\LFP}{\textup{LFP}\xspace}
\newcommand{\IFP}{\textup{IFP}\xspace}
\newcommand{\LL}{\textup{L}\xspace}
\newcommand{\LC}{\textup{C}\xspace}
\newcommand{\cequivx}[1]{\ensuremath{\equiv^{#1}}}
\newcommand{\cequivk}{\cequivx k}
\newcommand{\simequivx}[2]{\ensuremath{\equiv^\text{IM}_{#1, #2}}}
\newcommand{\IMequiv}{\simequivx}
\newcommand{\sms}{\ensuremath{\text{SimMatSim}}\xspace}
\newcommand{\HomMat}[2]{\ensuremath{\text{H}_{#1,#2}}}
\newcommand{\StabMat}[1]{\ensuremath{\text{C}_{#1}}}
\newcommand{\StabMatDiag}[1]{\ensuremath{\text{C}_{#1}^{\text{D}}}}
\newcommand{\HomMatDiag}[2]{\ensuremath{\text{H}_{#1,#2}^{\text{D}}}}
\newcommand{\idmat}[1]{\ensuremath{\text{Id}_{#1}}}
\newcommand{\MatAlg}[2]{\ensuremath{\text{Mat}_{#1 \times #1}({#2})}}
\newcommand{\MatMod}[3]{\ensuremath{\text{Mat}_{#1 \times #2}({#3})}}
\newcommand{\cip}{\ensuremath{\text{cip}}\xspace}
\newcommand{\nmo}{\ensuremath{{n-1}}}
\newcommand{\mmo}{\ensuremath{{m-1}}}
\newcommand{\smo}{\ensuremath{{s-1}}}
\newcommand{\lmo}{\ensuremath{{\ell-1}}}
\newcommand{\inseg}[1]{\ensuremath{[#1]}}
\newcommand{\Diag}{\ensuremath{\text{Diag}}}
\newcommand{\fbg}{\ensuremath{\text{f-block generated}}\xspace}
\newcommand{\locsimsim}{\ensuremath{\text{loc-sim similar}}\xspace}
\newcommand{\ctype}{\ensuremath{\textsc{ct}}}
\newcommand{\clalgebra}[3]{\text{\sf Alg}\,[#1; #2; #3]}
\newcommand{\CA}[3]{\text{\sf CA}\,[#1; #2; #3]}
\newcommand{\linisom}[3]{\ensuremath{({#1};{#2};{#3})\text{-isomorphic}}}
\newcommand{\Stab}{\ensuremath{\text{Stab}}\xspace}
\newcommand{\CBMat}{\ensuremath{\text{Basis}}\xspace}
\newcommand{\PrimeF}{\ensuremath{\text{Prim}}}
\newcommand{\FPR}{\mathrm{FPR}}
\newcommand{\LALogic}{\ensuremath{\mathrm{LA}^{\omega}}}
\newcommand{\LAkLogic}{\ensuremath{\mathrm{LA}^{k}}}
\newcommand{\iso}{\ensuremath{\cong}}
\newcommand{\nats}{\mathbb{N}}
\newcommand{\ra}{\rightarrow}
\newcommand{\lary}[1]{\ensuremath{{\ell}\text{-}{#1}}}
\newcommand{\commentout}[1]{}
\begin{document}

\title{Approximations of Isomorphism and Logics with Linear-Algebraic Operators}
\titlecomment{{\lsuper*}This article is an extended version  of the 
conference paper~\cite{DawarGraPak19a}.}

\author[A.~Dawar]{Anuj Dawar}	
\address{University of Cambridge, UK}
\email{anuj.dawar@cl.cam.ac.uk}  
\thanks{The research of the first author is supported by EPSRC grant EP/S03238X/1}

\author[E.~Gr\"adel]{Erich Gr\"adel}	
\address{RWTH Aachen University, Germany}
\email{graedel@logic.rwth-aachen.de}

\author[W.~Pakusa]{Wied Pakusa}
\address{RWTH Aachen University, Germany}
\email{pakusa@logic.rwth-aachen.de}

\begin{abstract}
Invertible map equivalences are approximations of graph isomorphism that refine the well-known Weisfeiler-Leman method.
They are parametrised by a number $k$ and a set $Q$ of primes. 
The intuition is that two graphs $G\IMequiv{k}{Q}H$ cannot be distinguished by a refinement of $k$-tuples given 
by linear operators acting on vector spaces over fields of characteristic $p$, for any $p \in Q$.
These equivalences have first appeared in the study of rank logic, but in fact they can be 
used to delimit the expressive power of any extension of fixed-point logic with linear-algebraic operators. We define $\LAkLogic(Q)$, an infinitary logic with $k$ variables and all linear-algebraic operators over finite vector spaces of characteristic $p \in Q$ and show
that $\IMequiv{k}{Q}$ is the natural notion of elementary equivalence for this logic.  The logic  $\LALogic(Q) = \bigcup_{k \in \omega} \LAkLogic(Q)$ is then a natural upper bound on the expressive power of any extension of fixed-point logics by means of $Q$-linear-algebraic operators.

By means of a new and much deeper algebraic analysis of a generalized variant, for any prime~$p$,
of the CFI-structures due to Cai, Fürer, and Immerman,  we prove that, as long as $Q$ is not the set of \emph{all} 
primes, there is no $k$ such that $\IMequiv{k}{Q}$ is the same as isomorphism. It follows that there are polynomial-time properties
of graphs which are not definable in $\LALogic(Q)$, which implies that no extension of fixed-point logic with linear-algebraic operators can capture PTIME, 
unless it includes such operators for all prime characteristics. Our analysis requires substantial algebraic machinery, 
including a homogeneity property of CFI-structures and Maschke's Theorem, an important result  
from the representation theory of finite groups. 
\end{abstract}

\maketitle


\section{Introduction}

The graph isomorphism problem (or more generally, the structure isomorphism problem) is an important computational problem which is also
very interesting from the point of view of complexity theory.   It is not known to be in P nor known to be NP-complete.  It is known to be solvable in quasi-polynomial time by Babai's algorithm~\cite{Babai15}.

An important theoretical approach to understanding the nature of the graph isomorphism problem is the Weisfeiler-Leman method.  For each positive integer $k$, the $k$-dimensional Weisfeiler-Leman method ($k$-WL method for short) defines an equivalence relation $\cequivk$ which over-approximates isomorphism in the sense that if $G\iso H$ for a pair of graphs $G$ and $H$, then $G \cequivk H$ for any $k$.  The relations form a refining family in the sense that if $G \not\cequivk H$ then $G \not\cequivx{k'} H$ for all $k'> k$.  Thus, the equivalence relation gets finer with increasing $k$ and approaches isomorphism in the limit.  Moreover, if $G$ and $H$ are $n$-vertex graphs then $G \cequivx{n} H$ if, and only if, $G \iso H$.  For each fixed $k$, the equivalence relation $\cequivk$ is decidable in polynomial time, indeed in time $n^{O(k)}$.  Thus, if there were a fixed $k$ such that $\cequivk$ were the same as isomorphism, we would have a polynomial-time algorithm for graph isomorphism.  However, we know this is not the case.  Cai, F\"urer and Immerman~\cite{CFI92} showed that there are pairs of non-isomorphic  graphs $G$ and $H$ with $O(k)$ vertices such that $G \cequivk H$.  We call the construction of such graphs the CFI construction.

The Weisfeiler-Leman equivalences arise naturally in the study of graphs in many different guises.  We have definitions based on combinatorics (such as Babai's original definition, see~\cite{CFI92}); in logic as the equivalences induced by bounded variable fragments of first-order logic with counting; linear programming (see~\cite{AtseriasManeva, GroheOtto}); and algebra (as in the original definition of Weisfeiler and Leman, extended to dimension $k$ in~\cite{DawarHolm}).  The equivalences have proved to be of central importance in the area of descriptive complexity theory.  In particular, they delimit the power of fixed-point logic with counting ($\FPC$), an important logic in the study of symmetric polynomial-time computation.  On many important classes of structures, it turns out that there is a fixed $k$ for which $k$-WL suffices to distinguish all non-isomorphic graphs.  Most significantly, Grohe~\cite{Grohe17} has shown that for any proper minor-closed class $\mcC$ of graphs, there is a $k$ such that $\cequivk$ coincides with isomorphism on graphs in $\mcC$.

Despite its importance in the interplay of graph structure theory and
logic, and its theoretical significance in understanding the graph
isomorphism problem, the Weisfeiler-Leman method does not give the
most efficient algorithms for solving the isomorphism problem.  The CFI construction demonstrates that using the WL method to decide isomorphism would yield an algorithm of complexity $n^{\Omega(n)}$ which is asymptotically no better than trying all permutations and far removed from the quasi-polynomial time algorithms known.  This has inspired the search for other structured families of equivalences  (see for example~\cite{Ponomarenko,Derksen}).  One particularly interesting such family are the \emph{invertible-map equivalences} defined in~\cite{DawarHol17}.  This gives, for each $k$ and each set $Q$ of prime numbers an equivalence relation $\IMequiv{k}{Q}$.  The precise definition is given in Section~\ref{sec:IMEquiv} but the intuition is that if $G \IMequiv{k}{Q} H$, then $G$ and $H$ are not distinguishable by a refinement of $k$-tuples given by linear operators acting on vector spaces over fields of characteristic $p$, for any $p \in Q$.  The reason for considering such equivalences stems from the realisation that the CFI-construction codes in graph form the problem of solving equations over $\field{F}_2$---the 2-element field (see~\cite{AtseriasBulDaw09}).  It can then be shown that the family of equivalences  $\IMequiv{k}{\{2\}}$ properly refine the Weisfeiler-Leman equivalences in that $G \IMequiv{k'}{\{2\}} H$ for sufficiently large $k'$ implies $G \cequivk H$ for all $G$ and $H$ and yet $G \not\IMequiv{3}{\{2\}} H$ for the pairs $G, H$ obtained in the CFI construction.

Furthermore, for any finite $Q$, the relation $\IMequiv{k}{Q}$ is decidable in time $n^{O(k)}$.  We can also vary $Q$ with $n$.  For instance, we could let $Q_s$ be the collection of all primes up to $s(n)$ for some growing function $s$.  In this case $\IMequiv{k}{Q_s}$ is decidable in time $s(n)n^{O(k)}$.  It is therefore an interesting question whether the family of equivalence relations is (like the Weisfeiler-Leman equivalences) infinitely refining.  Do increasing values of $k$ yield ever finer equivalence relations?  The r\^{o}le of the parameter $Q$ is also worth investigating.  If there were a fixed polynomial $s$ and constant $k$ for which $\IMequiv{k}{Q_s}$ was the same as isomorphism, we would have a polynomial-time test for isomorphism.  Even if we could prove this for $k$ growing poly-logarithmically, and $s$ quasi-polynomial, this would yield a new (and more systematic) quasi-polynomial algorithm for isomorphism.  We have no reason to conjecture that either of these upper bounds holds, but they have not been ruled out.

One reason for the interest in the invertible-map equivalences is the connection with logic.  In the long-running quest for a logic for PTIME (see~\cite{Grohe08}), an important direction is the study of extensions of fixed-point logic with rank operators (FPR)~\cite{DawarGroHolLau09} or other algebraic operators (see~\cite{DawarGraHolKopPak13}).  The relations $\IMequiv{k}{Q}$ were introduced first as a tool to study the expressive power of FPR.  It was shown in~\cite{DawarHol17} that for every formula $\phi$ of FPR (as originally defined in~\cite{DawarGroHolLau09}) there is a $k$ and a finite $Q$ such that the class of models of $\phi$ is closed under $\IMequiv{k}{Q}$.  For the more powerful rank logic $\FPR^*$ defined in~\cite{GraedelPak19}, we can show that for any formula $\phi$, there is a $k$ and a polynomial $s$ such that $\phi$ is invariant under $\IMequiv{k}{Q_s}$.  This implies, in particular that, if we could show that there is no fixed $k$ such that $\IMequiv{k}{Q}$ is the same as isomorphism when $Q$ is the set of all primes, we could, by means of padding, separate $\FPR^*$ from PTIME.  In short, any advance in understanding the structure of these equivalence relations is a significant step for resolving important questions.

The equivalence relations tell us about more than just rank logic.  They can be used to delimit the expressive power of any extension of fixed-point logic with linear-algebraic operators.  In this paper we define $\LAkLogic(Q)$, an infinitary logic with $k$ variables and all linear-algebraic operators (which we define formally below) over finite vector spaces of characteristic $p \in Q$.  This is the logic for which $\IMequiv{k}{Q}$ is the natural notion of elementary equivalence.  Then, $\LALogic(Q) = \bigcup_{k \in \omega} \LAkLogic(Q)$ is a natural upper bound on the expressive power of any extension of fixed-point logics by means of $Q$-linear-algebraic operators.

Our main results can now be stated as follows.  As long as $Q$ is not the set of all primes, there is no $k$ such that $\IMequiv{k}{Q}$ is the same as isomorphism.  From this, it follows that there are classes of graphs which are not definable in $\LALogic(Q)$.  Moreover, we can construct polynomial-time decidable such classes.   This implies that any logic with linear-algebraic operators, unless it includes such operators for all prime characteristics, does not capture PTIME.  Note, this does not separate $\FPR^*$ from PTIME, due to the restriction on $Q$, but it shows that if $\FPR^*$ is to capture PTIME, we need to use the set of all primes.

Establishing the result requires significant technical innovation.  In particular, we develop novel algebraic machinery that has not previously been deployed in the field of finite model theory.  As noted above, the CFI construction codes, in graph form, the problem of solving systems of linear equations over ${\field F}_2$.  We can give a similar construction that codes linear equations over the ${\field F}_p$ for any prime $p$.  Such a construction was given in~\cite{Holm10}, where it was used to establish that the resulting non-isomorphic graphs were not distinguished by a  variant of $\IMequiv{k}{\{q\}}$ for any $q \neq p$, where the matrix operations are restricted to a particularly simple form.  A more refined analysis of the construction was used in~\cite{GraedelPak19} to separate the expressive power of $\FPR$ from that of $\FPR^*$.  To be precise, they showed that the formulas of $\FPR$ that do not use an operator with the prime $p$ are no more expressive than formulas of $\FPC$ over these graphs.  Our result uses the same graph construction but brings significant new algebraic machinery to its analysis.

We are able to show, in this paper, that, on graphs obtained by the
CFI construction for ${\field F}_p$, the distinguishing power of
$\IMequiv{k}{Q}$, where $p \not\in Q$, is no greater than
$\cequivx{k'}$ for some fixed $k'$.  Note that the graphs are
definitely distinguished in  $\IMequiv{k}{Q}$ when $p\in Q$.  We
establish the result by showing that on these graphs, the equivalence
relation $\IMequiv{k}{\{q\}}$ is itself definable in $\FPC$ when $q
\neq p$.  This is done by implementing a matrix similarity test in
$\FPC$, based on the module isomorphism algorithm of Chistov et
al.~\cite{ChistovIK97}.  There are two key ingredients by which this
yields an $\FPC$ definition.  The first is that, on the graphs
obtained in the construction, the equivalence relation $\cequivk$ (now
understood as an equivalence relation on $k$-tuples of vertices rather
than on graphs) coincides with the partition into automorphism orbits,
for sufficiently large but constantly bounded $k$.  We say that the
graphs are $\LC^k$-homogeneous for large enough $k$.  The second
ingredient is that, because the automorphism groups of the graphs are
Abelian  $p$-groups,  this partition induces a matrix algebra over
${\field F}_q$, when $q \neq p$, which is semisimple and so admits a
nice decomposition, by Maschke's theorem.  
Maschke's
theorem, formally given as Theorem~\ref{thm:maschke} below is a
central result in the representation theory of finite groups, which
states conditions under which a linear-algebraic representation of a
finite group admits a decomposition into irreducible representations.
It is a powerful tool and we hope that its use opens the door to
further applications of representation theory in the context of finite
model theory.  Indeed, we see a major contribution of the present
work as being the introduction of Maschke's theorem and related tools
into the subject.


\section{Preliminaries}
\label{sec:preliminaries}
\label{sec:counting_logic}

We denote by $\Primes$ the set of prime numbers.
For a prime power $q$ we denote by $\field F_q$ the finite field with $q$ 
elements.
This is a paper in finite model theory and, if not stated otherwise, 
all relational structures, such a graphs, are implicitly assumed to be 
finite.
We denote relational structures by
$\mfA, \mfB, \mfC, \dots$ and we use corresponding latin letters 
$A, B, C, \dots$ to denote their universes. If $\mfA$ is a relational 
structure over the vocabulary $\tau = \{ R_1, \dots, R_k \}$, then we write 
$\mfA = (A, R_1^\mfA, \dots, R_k^\mfA)$ if $R_i^\mfA \subseteq A^{r_i}$ 
is the interpretation of relational symbol $R_i$ in $\mfA$.
We assume that the reader has a solid background in finite model theory 
and we refer to the texbooks~\cite{ebbflu99,lib04} for details.
Moreover, in order to follow our definability results in all detail, a 
good understanding of fixed-point logic with counting is necessary 
(see~\cite{Dawar15} for a survey).

\subsection*{Counting Logic}
The extension of first-order logic, denoted $\FO$, by \emph{counting quantifiers} 
$\exists^{\geq i} x \phi$, $i \geq 1$, which express the existence of at least $i$ many 
elements that satisfy $\phi$, is called \emph{counting logic} and it is 
denoted by $\LC$.
The fragments of $\FO$ and $\LC$ consisting of all formulae that contain at most 
$k$ variables (without loss of generality $x_1, \dots, x_k$) are denoted by $\LL^k$
and $\LC^k$, respectively. 
Note that $\LC$ is only a syntactic extension of $\FO$, because we can 
rewrite counting quantifiers $\exists^{\geq i} x$ using standard 
first-order quantifiers only. However, in general this translation 
will increase the number of variables. 
Hence, while $\FO \equiv \LC$ (the two logics are semantically 
equivalent), $k$-variable counting logic $\LC^k$ is strictly stronger than 
pure $k$-variable logic $\LL^k$.

\subsection*{Fixed-Point Logics}
We assume that the reader is familiar with \emph{least fixed-point logic} 
($\LFP$) and inflationary fixed-point logic ($\IFP$).
In a nutshell, \emph{fixed-point logic with counting} ($\FPC$) is the 
extension of $\IFP$ by operators for the cardinality of 
definable sets.
Formally, formulae of $\FPC$ are evaluated over the \emph{two-sorted 
extension} of an input structure $\mfA$ by a copy of the natural numbers.  
We denote by  $\mfA^{\#}$ the two-sorted extension 
of a $\tau$-structure $\mfA=(A,R_1,\dots,R_k)$ by the structure 
$\mfN=(\N,+,\cdot,0,1)$; that is $\mfA^{\#} = ( A, R_1,\dots , R_k , \N, 
+, \cdot, 0, 1)$ and the universe of the first sort (the 
\emph{vertex sort}) is $A$ and the universe of the second sort (the 
\emph{number sort} or \emph{counting sort}) is $\N$.
For both sorts, we have a collection of 
\emph{typed first-order variables}, that is the domain of any variable $x$ 
(over the input structure $\mfA$) is either $A$ or $\N$.
Similarly, for second-order variables $R$ we allow mixed types, that is a 
relation symbol $R$ of type $(k , \ell) \in \N \times \N$ stands for a 
relation $R \subseteq A^k \times \N^\ell$. 

Clearly, if we would allow unbounded first-order quantification over the 
second sort, then already $\FO$ over structures $\mfA^{\#}$ has
an undecidable model-checking problem. To obtain a logic with 
polynomial-time data 
complexity, we restrict the range of 
quantifiers over the numeric sort by fixed polynomials.
More precisely, $\FPC$-formulas can use quantifiers over the numeric sort 
only in the form $Qx \leq n^q . \phi$ where $Q \in \{ \exists, \forall 
\}$ and where $q \geq 1$ is a fixed constant.
The range of the quantifier $Q$ is $\{0, \dots, n^q\}$ where $n$ denotes 
the 
size of the input structure~$\mfA$. To simplify notation, we henceforth 
assume that each numeric variable $x$ comes with a built-in restricted 
range 
polynomial, that is $x = (x \leq n^q)$. For better readability, we 
usually omit this range polynomial in our notation.
By this convention, each variable $x$ has a 
predefined range in any input structure $\mfA^{\#}$ of polynomial size 
(which is either $A$ or $\{0, \dots, n^q\}$ for a fixed $q \geq 1$).
We denote this range by $\dom(\mfA,x)$ (or just by $\dom(x)$ if 
$\mfA$ is clear from the context). Analogously, for a tuple of variables 
$\tup x=(x_1, \dots, x_k)$ we set $\dom(\tup x) = \dom(x_1) \times \cdots 
\times \dom(x_k)$.
By this, we also obtain polynomial bounds for numeric components in 
fixed-point definitions $\left[ \ifp \, R\bx \,.\, \phi(R,\bx) 
\right](\bx)$.
Indeed, the inflationary fixed-point defined by this formula is of the 
form $R \subseteq \dom(\tup x)$. 

Crucial ingredients of $\FPC$ are \emph{counting terms} which 
allow to define cardinalities of sets. Starting with an arbitrary
$\FPC$-formula $\phi(x)$ we can form a new \emph{counting term} $s = [\# 
x : \phi]$ whose value in $\mfA$ is the size of the set defined 
by $\phi$ in $\mfA$.
In particular, the term $s$ is a \emph{numeric term}, that is $s$ takes 
its 
value in the number sort. 
One can also allow counting terms of a more general form 
without increasing the expressive power of \FPC. In particular, counting 
terms $[\# \bx : \phi]$
over mixed tuples of variables can be simulated with unary counting 
terms  and fixed-point operators; we refer to~\cite{Otto97} for more 
details and background on fixed-point logic with counting.

\subsection*{Counting Equivalence}

Let $k \geq 1$, let $\mfA$ and $\mfB$ be two structures of the same 
signature, and let $a_1, \dots, a_\ell \in A$ and $b_1, \dots, b_\ell \in 
B$ for some $0 \leq \ell \leq k$.
Then the structures $(\mfA, \tup a)$ and $(\mfB, \tup b)$ are called 
\emph{$k$-counting equivalent}, where $\tup a = (a_1, \dots, a_\ell)$ and 
$\tup b = (b_1, \dots, b_\ell)$, if for every formula $\phi(x_1, 
\dots, x_\ell) \in \LC^k$ we have $\mfA \models \phi(a_1, \dots, a_\ell)$ 
if, and only if, $\mfB \models \phi(b_1, \dots, b_\ell)$. In this case, we write $(\mfA, \tup a) \cequivk (\mfA, \tup b)$.
Obviously, for each fixed signature $\tau$ and each $\ell \leq k$, the 
relation $\cequivk$ forms an equivalence relation on the class of all 
pairs 
$(\mfC, \tup c)$ where $\mfC$ is a $\tau$-structure and where $\tup c \in 
C^\ell$ is a tuple of $\ell \leq k$ distinguished elements.
Moreover, if we fix a concrete $\tau$-structure $\mfC$, then 
$\cequivk$ induces an equivalence relation on $C^\ell$ which identifies 
$\ell$-tuples in $\mfC$ that cannot be distinguished from another by any 
$\LC^k$ formula.

A key property of the counting equivalence relation that we use is
that it is a congruence with respect to disjoint union.  So, if we
write $(\mfA,\mfB)$ for the structure that is the disjoint union of
$\mfA$ and $\mfB$, then $\mfA \cequivk \mfC$ and $\mfB \cequivk \mfD$
implies $(\mfA,\mfB) \cequivk (\mfC,\mfD)$.

\subsection*{Counting-Type Formulas}
One of the beautiful properties of the relations $\cequivk$ is 
that we can linearly order $\cequivk$-equivalence classes by means 
of a (uniform) family of $\FPC$-formulae that only use a linear number of 
variables. For technical reasons, we use a variant of these $\FPC$-formulas 
in which we can specify parameters for the equivalence relations 
$\cequivk$ (but this variant readily reduces to the standard version).

Formally, let $k \geq 1$, let $r \geq 0$, and let $\ell \leq k$.  
Fix an $r$-tuple of variables $\tup z = (z_1, \dots, z_r)$ and two 
$\ell$-tuples of variables $\tup x = (x_1, \dots, x_\ell)$ and $\tup y 
=(y_1, \dots, y_\ell)$ where all variables are pairwise distinct.
Then there exists an $\FPC$-formula
$\ctype_k[\tup z](\tup x, \tup y)$ with $\mcO(k + r)$ many variables
such that for every structure $\mfA$ and every parameter tuple $\tup c \in 
A^r$ we have that $\ctype_k[\tup c]$ defines a linear preorder $\preceq$ 
on $A^\ell$ which linearly orders the tuples in $A^\ell$ up to $k$-counting 
equivalence in the structure $(\mfA, \tup c)$ that is:
\begin{itemize}
 \item $\preceq \,\, =\,  \{ (\tup a, \tup b) \in A^\ell \times A^\ell : 
\mfA \models \ctype_k[\tup c](\tup a, \tup b) \}$ is a linear preorder on 
$A^\ell$, and
 \item for $\tup a, \tup b \in A^\ell$ we have that $\tup a \preceq 
\tup b$ and $\tup b \preceq \tup a$, that is $\tup a$ and $\tup b$ 
are incomparable (or equivalent) with respect to $\preceq$, if, and only 
if, $(\mfA, \tup c, \tup a) \cequivx{r + k} (\mfA, \tup c, \tup b)$.
\end{itemize}
In the special case where we do not have parameters, that is if $r = 0$, 
we write $\ctype_k(\tup x, \tup y)$ instead of $\ctype_k[](\tup x, \tup 
y)$. Note that for this parameter-free setting we obtain an $\FPC$-formula 
with $\mcO(k)$ many variables.
Moreover, we abuse notation and write $(\tup x \cequivk_{\tup z} \tup y)$ 
to abbreviate the formula $\ctype_k[\tup z](\tup x, \tup y) \wedge 
\ctype_k[\tup z](\tup y, \tup x)$ that is the formula which defines the 
$(r+k)$-counting equivalence with respect to the parameter tuple $\tup z$ 
of length $r \geq 0$.

Another useful fact is that for each $(\mfA, \tup c)$ and each $k$,
there is a formula $T_{\tup c}(\tup{x})$ of $\LC^k$ such that $\mfB
\models T_{\tup c}[\tup b]$ if, and only if, $(\mfB,\tup b) \cequivk (\mfA,\tup
c)$.  In particular, interpreted in $\mfA$, $T_{\tup c}$ defines
exactly the equivalence class of $\tup c$ under the relation $\cequivk$.

\subsection*{Logical Interpretations and Lindstr\"om Quantifiers}
The logical counterpart of an (algorithmic) reduction is 
the notion of a \emph{logical interpretation}. A logical interpretation $\mcI$ 
transforms an input structure $\mfA$ into a new structure $\mfB = 
\mcI(\mfA)$ 
and 
this transformation is defined by formulae of some logic $L$.
We further introduce \emph{Lindström 
quantifiers}, also known as \emph{generalised quantifiers}, which capture 
the notion of \emph{oracles} in the realm of finite model theory.

Let $\sigma, \tau$ be signatures 
with $\tau = \{S_1,...,S_{\ell}\}$, where $s_i$ denotes the arity of $S_i$. An 
$L[\sigma,\tau]$-\emph{interpretation} is a tuple
\[ I(\tup{z}) = 
(\varphi_{\delta}(\tup{x},\tup{z}),\varphi_{\approx}(\tup{x}_1,\tup{x}_2,
\tup{z}),\varphi_{S_1}(\tup{x}_1,...,\tup{x}_{s_1},\tup{z}),...,\varphi_{S_
{\ell}}(\tup{x}_1,...,\tup{x}_{s_{\ell}},\tup{z}))\]
where 
$\varphi_{\delta},\varphi_{\approx},\varphi_{S_1},...,\varphi_{S_{\ell}} 
\in L[\sigma]$ and $\tup{x},\tup{x}_1,...,\tup{x}_{s_{\ell}}$ are tuples 
of pairwise distinct variables of the same length $d$ and $\tup{z}$ is a 
tuple of variables pairwise distinct from the $x$-variables. We call $d$ 
the \emph{dimension} and $\tup{z}$ the \emph{parameters} of $\I(\tup{z})$.

A $d$-dimensional $L[\sigma,\tau]$-interpretation $\I(\tup{z})$  defines 
a partial mapping $\I \colon \Str(\sigma,\tup{z}) \rightarrow \Str(\tau)$ 
in the 
following way:  For $(\mfA,\tup{z} \mapsto \tup{a}) \in 
\Str(\sigma,\tup{z})$ we obtain a $\tau$-structure $\mfB$ over the universe 
$\{\tup{b} \in A^d \ | \ \mfA \models \varphi_{\delta}(\tup{b},\tup{a})\}$, 
setting $S_i^{\mfB} = \{(\tup{b}_1,..,\tup{b}_{s_i}) \in B^{s_i} \ | \ \mfA 
\models \varphi_{S_i}(\tup{b}_1,...,\tup{b}_{s_i},\tup{a})\}$ for each 
$S_i 
\in \tau$.
Moreover let $\mcE = \{(\tup{b}_1,\tup{b}_2) \in A^d \times A^d \ | \ \mfA 
\models \varphi_{\approx}(\tup{b}_1,\tup{b}_2,\tup{a})\}$. Now we define
\begin{equation*}
\I(\mfA,\tup{z} \mapsto \tup{a}) :=
\begin{cases}
\mfB/ \mcE &\text{if $\mcE$ is a congruence relation on } \mfB\\
\text{undefined} & \text{otherwise.}  
\end{cases}
\end{equation*}	 
We say that $\I$ interprets $\mfB/\mcE$ in $\mfA$. 

Next, we introduce Lindström quantifiers. Let $L$ be a logic and $\mcK 
\subseteq Str(\tau)$ a class of 
$\tau$-structures with $\tau = \{S_1,...,S_{\ell}\}$. The \emph{Lindström 
extension} $L(\mcQ_{\mcK})$ of $L$ by \emph{Lindström quantifiers} for 
the class $\mcK$ is obtained by extending the syntax of $L$ by the 
following formula creation rule:
\begin{quote}
	Let 
$\varphi_{\delta},\varphi_{\approx},\varphi_{S_1},...,\varphi_{S_{\ell}}$ 
be formulas in $L(\mcQ_{\mcK})$ that form an 
$L[\sigma,\tau]$-interpretation $\I(\tup{z})$. Then $\psi(\tup{z}) = 
\mcQ_{\mcK}\I(\tup{z})$
	is a formula in $L(\mcQ_{\mcK})$ over the signature $\sigma$, with
$(\mfA,\tup{z} \mapsto \tup{a}) \models \mcQ_{\mcK}\I(\tup{z})$, if, and only 
if,  $\mfB := 
\I(\mfA,\tup{z} \mapsto \tup{a})$ is defined and $\mfB \in \mcK$. 
\end{quote}  

Thus, adding the Lindström quantifier $\mcQ$ to the logic $L$ is 
the most direct way to make the class $\mcK$ definable in $L$.
Formally, if $L$ is a regular logic in the sense of~\cite{Ebb85}, then
its extension by $\mcQ$ is the minimal regular logic that can also
define $\mcK$.

\section{The Invertible Map Equivalence and Linear-Algebraic Logics}\label{sec:IMEquiv}

The invertible map equivalence relation was introduced by Dawar and
Holm~\cite{DawarHol17,Holm10} as a family of approximations of
isomorphism.  It was shown that it is at least as fine an
approximation as that induced by the infinitary logic with rank
quantifiers, introduced in~\cite{DawarGroHolLau09}.  Dawar and Holm
posed the question whether there is a logic which corresponds to the
invertible map equivalences.  In this section we answer the question by
showing that these equivalence relations are the right notions of
elementary equivalence for an infinitary logic extended with
\emph{all} linear algebraic operations.  We first review the
definition of invertible map equivalence in
Section~\ref{sec:IMEquivdef}.  We then introduce the infinitary logic,
and its various parameters, in Section~\ref{sec:LALogic}.  Finally, in
Section~\ref{sec:Equiv} we establish the relationship between the two.

\subsection{Invertible Map Equivalence}\label{sec:IMEquivdef}

We begin by defining the equivalence relations $\IMequiv{k}{Q}$ for $k
\in \nats$ and $Q$ a set of prime numbers.  To understand the
definition, it is worth reviewing the definition of the counting-logic
equivalence $\cequivx{k}$.  This is not only an equivalence relation
among finite structures, which serves as an approximation to the
isomorphism relation, it also induces a relation on the tuples in
$A^k$ for any structure $\mfA$ that serves as an approximation to the
partition into orbits of the automorphism group of $\mfA$.

On a structure $\mfA$, the relation $\cequivk$ can be obtained by an
iterative refinement process.  Suppose we are given a partition
$\mcP = \{P_i\}_{i\in I}$ of $A^k$ indexed by a set $I$.  Now, we say
that a pair of tuples ${\tup a}_1$ and ${\tup a}_2$ are
\emph{$\mcP$-similar} if they are in the same part of $\mcP$
\emph{and} for each $i \in I$ and each $j \in [k]$ the sets
$\{ b \in A \mid \tup a_1[b/j] \in P_i \}$ and
$\{ b \in A \mid \tup a_2[b/j] \in P_i \}$ have the same number of
elements.  The equivalence relation $\cequivk$ can then be
characterised as the coarsest partition $\mcP$ of $A^k$ that refines
the partition into atomic types, such that any two tuples in the same
part of $\mcP$ are $\mcP$-similar.  This means that we can arrive at
this partition by starting with the partition of $A^k$ into atomic types
and repeatedly refine it until we get a partition $\mcP$ for which the
notions of $\mcP$-equivalence and $\mcP$-similarity are the same.

We now modify this in two ways to obtain the definition of
$\IMequiv{k}{Q}$.  First we define similarity not in terms of the
substitution of  a single
element $b$ into a tuple $\tup{a} \in A^k$ but of an $\ell$-tuple
$\tup{b} \in A^{\ell}$ for some $\ell < k$.  So, for each injective function 
$\gamma: [\ell] \ra [k]$, let $\tup{a}[\tup{b}/\gamma]$ denote the tuple
in $A^k$ obtained from $\tup{a}$ by simultaneously substituting $b_i$ in position
$\gamma(i)$ for all $i \in [\ell]$.   If $\Gamma$ denotes the set of
all injective functions from $[\ell]$ to $[k]$, we say  tuples
${\tup a}_1$ and ${\tup a}_2$ are \emph{$\mcP$-similar} if they are in the
same part of $\mcP$ \emph{and} for each $\gamma \in \Gamma$ and each
$i \in I$, the sets
$\{ \tup{b} \in A^{\ell} \mid \tup a_1[\tup{b}/\gamma] \in P_i \}$ and
$\{ \tup{b} \in A^{\ell} \mid \tup a_2[\tup{b}/\gamma] \in P_i \}$
have the same size.  Taking the coarsest relation that is stable in
this sense still gives us $\equiv^k$ (though see~\cite{DawarVagnozzi}
for some nuances when comparing with the Weisfeiler-Leman equivalences).

For our purposes, we want a different notion of similarity.  Assume
that $\ell = 2m$ for some $m$.  We can then view any set $C \subseteq
A^{\ell}$ as giving us an $A^m \times A^m$ $0$-$1$ matrix, which we
denote $M$.  So the entry in row $\tup{b}_1 \in A^m$ and column
$\tup{b}_2 \in A^m$ of $M$ is $1$ if, and only if, the $\ell$-tuple
$\tup{b}_1\tup{b}_2$ is in $C$.  Hence, given, as before, a partition
$\mcP = \{P_i\}_{i\in I}$ of $A^k$, and an injective function 
$\gamma: [\ell] \ra [k]$, each tuple $\tup{a}$ induces a
partition of tuples $\tup b$ in $A^{\ell}$ according to which part $P_i$ contains
$\tup{a}[\tup{b}/\gamma]$.  We think of this as a collection
$(M^{\tup{a}}_i)_{i \in I}$ of $0$-$1$ matrices.  For a prime
number $p$, we say that two tuples ${\tup a}_1$ and ${\tup a}_2$ are
\emph{$\mcP$-$p$-$m$-similar} if they are in the same part of
$\mcP$ \emph{and} for every $\gamma$ there is an invertible matrix $S \in
\field{F}_p^{A^m \times A^m}$ such that for each type $i \in I$ we have $S M^{\tup{a}_1}_{i} S^{-1} = M^{\tup{a}_2}_{i}$.
In other words, the sequences of matrices $(M^{\tup{a}_1}_{i})_{i \in I}$ and $(M^{\tup{a}_2}_{i})_{i \in I}$ are
simultaneously similar, witnessed by $S$.  We say the tuples are
$\mcP$-$p$-similar if they are $\mcP$-$p$-$m$-similar for all $m
\leq k/2$.  The equivalence relation $\IMequiv{k}{p}$ is then the
coarsest partition $\mcP$ that refines the partition into atomic types
and such that any two tuples in the same part of $\mcP$ are
$\mcP$-$p$-similar.  Finally, for a set $Q$ of prime numbers,
$\tup{a}_1 \IMequiv{k}{Q} \tup{a}_2$ if, and only if, $\tup{a}_1
\IMequiv{k}{p} \tup{a}_2$ for each $p \in Q$.  So, $\IMequiv{k}{Q}$ is
the coarsest common refinement of the relations $(\IMequiv{k}{p})_{p
\in Q}$.

Given a fixed set $Q$ of primes with $|Q| = s$, it is possible to
compute, for a structure $\mfA$ with $n$ elements, the partition of
$A^k$ into $\IMequiv{k}{Q}$ equivalence classes in time $sn^{O(k)}$.
To see this, we note that the equivalence relation can be obtained by
an iterated refinement process.  First, let $\mcP_0$ be the partition
of $A^k$ into atomic types.  Then, for each $i$, let $\mcP_{i+1}$ be
the partition which places two tuples in the same class if, and only
if, they are $\mcP_i$-$p$-similar for all $p \in Q$.  This refinement
process converges in at most $n^k$ steps to the partition into
$\IMequiv{k}{Q}$-equivalence classes.  At 
each stage we compute, for each tuple $\tup{a} \in A^k$ and each
injective function $\gamma: [2m] \ra [n]$, the
partition of $A^{2m}$ into types, where $m = \floor{k/2}$.  This
suffices because $\mcP$-$p$-$m$-similarity implies
$\mcP$-$p$-$m'$-similarity for all $m' < m$.  Having computed the
partition, we need to check for each pair of tuples and for each $p$
in $Q$, whether the induced partitions are simultaneously similar.
For this, we use the simultaneous matrix similarity test of Chistov et
al.~\cite{ChistovIK97}.  Since this runs in polynomial time, it
follows that the whole procedure can be completed in time
$sn^{O(k)}$.

Finally, we want to make a remark about the connection with graph
isomorphism.  The partition of the tuples $A^k$ in a structure $\mfA$
into $\IMequiv{k}{Q}$ classes can be understood as approximating the
partition into orbits of the automorphism group.  Indeed, if two
tuples are in the same orbit then necessarily they are
$\IMequiv{k}{Q}$-equivalent, for all $Q$.  The relation to isomorphism
comes from the fact that computationally, the problem of partitioning
a structure into the orbits of its automorphism group and the problem
of testing a pair of structures for isomorphism are easily
inter-reducible.  For instance, given a pair of structures $\mfA$ and
$\mfB$, we define the structure $\mfA^+ \oplus \mfB^+$.  This is the
disjoint union of $\mfA^+$, the extension of $\mfA$ by a new element related by a
binary relation to every element of $\mfA$ and $\mfB^+$, a similar
extension of $\mfB$.  Then, $\mfA$ and $\mfB$ are isomorphic if, and
only if, there is some tuple of elements of $\mfA$ that is in the same
orbit as a tuple of elements of $\mfB$ in this new structure.  Hence,
any approximation of the partition into orbits of the automorphism
group gives us an approximation to the isomorphism problem.  It is in
this sense that  $\IMequiv{k}{Q}$ yields an approximation to
isomorphism.  It should be noted however that it is possible to have
a structure $\mfA$ such that for some fixed $k$ and $Q$,
$\IMequiv{k}{Q}$ does partition $A^k$ into the orbits of the
automorphism group but there is still a structure $\mfB$ that is not
isomorphic to $\mfA$ but $\IMequiv{k}{Q}$ does not distinguish between
$\mfA$ and $\mfB$.  Indeed, our key example (see
Section~\ref{sec:CFI}) has this property.

\subsection{Linear-Algebraic Logic}\label{sec:LALogic}
The study of logics with linear-algebraic operators over finite fields
was initiated in~\cite{DawarGroHolLau09}, where $\FPR$, the
fixed-point logic with rank operators, was first introduced.  As with
fixed-point logics generally, the expressive power of $\FPR$ is
naturally analysed by seeing it as a fragment of an infinitary logic, in this case
with rank \emph{quantifiers}.  The notion of elementary equivalence
that corresponds to this logic was given in terms of a game characterisation
in~\cite{DawarHol17}, where the invertible map equivalences were also
introduced.  Here, we define, for any set $Q$ of primes, 
an infinitary logic $\LALogic(Q)$ with quantifiers for \emph{all} linear-algebraic operators over finite fields
of characteristics in $Q$.
This logic is not really intended for practical use.
Instead it is designed to be strong enough so that inexpressibility results
for $\LALogic(Q)$ carry over to any well-defined logic that
extends first-order or fixed-point logic by any kind of linear-algebraic operators
over $Q$. 

We begin with a precise definition of what constitutes a
linear-algebraic operator.
Let $\field F$ be a field and let $\mcB$ be a (non-empty, finite) set that 
serves as a supply of abstract basis elements.
We consider the $\field F$-vector space $\field F^\mcB$.
For each subset $K \subseteq \mcB$ we identify the vector space $\field F^K$ 
with a subspace of $\field F^\mcB$ in the natural way: since $\field F^\mcB = 
\field F^{K} \oplus \field F^{\mcB \setminus K}$ we can (implicitly) set $\field 
F^K = \field F^{K} \oplus \{ 0 \}$.

Let $m \geq 1$. Then an \emph{$m$-ary linear-algebraic operator} $f$ is a 
function that defines a  linear-algebraic property 
$f(M_1, \dots, M_m)$ of an $m$-tuple of $\field F$-linear transformations  
$M_i$ on (subspaces of) $\field F^\mcB$.
To make things more precise, let $K_i, L_i \subseteq \mcB$, for $i \in [m]$, denote
pairs of (non-empty) subsets of basis elements.
We set $V_i = \field  F^{K_i}$ and  $W_i  = \field F^{L_i}$.
We consider $m$-tuples $(M_1, \dots, M_m)$ consisting of $\field F$-linear 
mappings $M_i\colon V_i \to W_i$ which are represented 
succinctly in terms of $m$-tuples $(M_1, \dots, M_m)$ of $L_i \times 
K_i$-matrices with entries in $\field F$.
Then an \emph{$m$-ary linear-algebraic operator} over~$\field F$ is a 
function $f$ that takes such sequences $(M_1, \dots, M_m)$ to some kind of linear-algebraic information $f(M_1, \dots, M_m)$
about the sequence.
This information is, without loss of generality, determined by a natural number 
$f(M_1, \dots, M_m) \in \N$.

Now, to say that $f$ outputs a ``linear-algebraic information'' means that the 
output   of $f$  is invariant under $\field F$-vector space isomorphisms. 
Formally, let $\mcC$ be another (abstract) set of basis elements, where $|\mcB| 
= |\mcC|$, let  $K_i', L_i'\subseteq \mcC$ where $|K_i| = |K_i'|$ and $|L_i| = 
|L_i'|$ for $i \in [m]$, and let $(N_1, \dots, N_m)$ be a sequence of 
matrices $N_i\colon L_i'\times K_i' \to  \field F$, $i \in [m]$, analogously to the 
above.
Moreover, let $V_i' = \field F^{K_i'}$ and $W_i' = \field F^{L_i'}$ 
for $i \in [m]$.
Then we say that $(N_1, \dots, N_m)$ results from $(M_1, \dots, M_m)$ by 
means of an $\field F$-vector space isomorphism if we can find an invertible 
$\field F$-linear mapping $S\colon \field  F^{\mcB} \to \field F^{\mcC}$ such 
that the following holds:
\begin{itemize}
 \item For all $i \in [m]$, $S$ maps each of the subspaces $V_i$ and 
 $W_i$ in $\field F^\mcB$ to the respective subspaces $V_i'$ and $W_i'$ in 
$\field F^{\mcC}$.
That is, if we represent  $S$ in terms of a $\mcC \times \mcB$-matrix with 
entries in $\field F$, then we have that for each of the subblocks $K_i' \times 
K_i$, $i \in [m]$, the restriction 
$S\upharpoonleft_{(K_i' \times K_i)} : K_i' \times K_i \to \field F$ of the 
matrix $S$ to this block is invertible and we have that $S(a,b) = 0$ for all $a 
\in \mcC \setminus K_i'$ and $b \in K_i$ (and the analogous holds for all 
subblocks $L_i' \times L_i$ and the corresponding restrictions
$S\upharpoonleft_{(L_i' \times L_i)} : L_i' \times L_i \to \field F$  of $S$ 
to the subblocks $L_i' \times L_i$).
 \item For each $i \in [m]$, the $\field F$-vector space isomorphism $S$ 
simultaneously transforms all linear operators $M_i \colon V_i \to W_i$ to the 
corresponding operators 
$N_i \colon V_i' \to W_i'$, that is for all $i \in [m]$ we have:
$ N_i \cdot S = S \cdot M_i$.
Note that if we want to read this as a matrix equation, then we formally have 
to replace the matrix $S$ by its restrictions to the subblocks $K_i' 
\times K_i$ and $L_i'\times L_i$ as we described above, that is 

\[ S\upharpoonleft_{(L_i' \times L_i)} \cdot M_i =  N_i \cdot 
S\upharpoonleft_{(K_i' \times K_i)}\]
\end{itemize}

Then we require that a linear algebraic operator $f$ outputs the same result 
for all  pairs of matrix sequences $(M_1, \dots, M_m)$ and $(N_1, \dots, 
N_m)$ that are related via an $\field F$-vector space isomorphism $S$ (as 
above), that is
\[ f(M_1, \dots, M_m) = f( N_1, \dots, N_m). \]
This condition guarantees that $f$ is not
able to distinguish between isomorphic objects and here, in the realm of linear 
algebra, isomorphisms are $\field F$-vector space isomorphisms. 
Besides this basic invariance condition, we do not put any kind of 
additional restrictions onto $f$.  For instance, $f$ may not even be a  
computable function.  Note that, though in introducing the function
$f$, we considered a fixed set $\mcB$, really $f$ defines, for any
$\mcB$, a function on $m$-tuples of linear operators over subspaces of
$\field{F}^{\mcB}$.   Without this, the notion of invariance would not
make sense.

Now, we can associate with $f$ a family of \emph{Lindstr\"om
quantifiers}.  For simplicity, we restrict our attention to operators
of a specific form and we explain later why this is no loss of
generality.  Specifically, we assume that $K_i = L_i = \mcB$ for all
$i$ in the above definition. In other words, $f$ is defined for a
tuple of \emph{square} matrices all with the same index set.

Let $\tau_m$ denote a vocabulary with $m$ distinct binary relations.
Given an operator that defines such an $f$ for each finite $\mcB$, for each $t \in \nats$ we define a class
of structures $\mcK^t_f$ in the vocabulary $\tau_m$.
We can think of an index set $\mcB$ with a collection $M_1,\ldots,M_m$
of $0$-$1$ $\mcB \times \mcB$ matrices as a $\tau_m$-structure
$(\mcB,M_1,\ldots,M_m)$.  The class  $\mcK^t_f$ is then the collection
of those $\tau_m$-structures where $f(M_1,\ldots,M_m)\geq t$.  For
each $\ell \geq 1$ we then have a quantifier $\mcQ^{t,\ell}_f$ such
that if $I(\tup{x})$ is an $L[\sigma,\tau_m]$-interpretation of
dimension $\ell$, then $\mcQ^{t,\ell}_f I(\tup x)$ is a
formula true in a $\sigma$ structure $\mcA$ if $I(\mcA) \in \mcK^t_f$.

The infinitary logic $\text{LA}$ is defined as the closure of
first-order logic under \emph{infinitary} disjunction and
conjunction, along with quantification $\mcQ^{t,\ell}_f$ for any linear
algebraic operator $f$ over any finite field.  That is, if $\Phi$ is
any set of formulas of $\text{LA}$, then $\bigvee \Phi$ and $\bigwedge
\Phi$ are both formulas of $\text{LA}$.  And, if $f$ is an $m$-ary
linear algebraic operator over a finite field, and $\Theta(\tup{x})$
is an $\ell$-ary $\text{LA}$-interpretation of $\sigma_m$ in $\tau$,
then $\mcQ^{t,\ell}_f \tup{x} \Theta$ is an $\text{LA}$
$\tau$-formula.  We are interested in various fragments of the logic
$\text{LA}$ for which we introduce notation in the following definition.

\begin{defi}\label{def:logic}
  $\LAkLogic$ is the collection of formulas of $\text{LA}$ that contain at
  most $k$ distinct variables.
 
  $\LALogic = \bigcup_{k \in \omega}\LAkLogic$ is the  collection of
  formulas of $\text{LA}$ containing a finite number of variables.

  For any set $Q$ of primes, we write $\text{LA}(Q)$, $\LAkLogic(Q)$
  and $\LALogic(Q)$ to denote the restrictions of these logics to
  using only linear-algebraic operators over fields of characteristic
  $p \in Q$.

  If $\mcL$ is any of the logics $\text{LA}$, $\LALogic$, $\LAkLogic$,
  $\text{LA}(Q)$, $\LALogic(Q)$ or $\LAkLogic(Q)$, and $\ell \in
  \nats$ we write
  $\lary{\mcL}$ to denote the fragment of $\mcL$ where all
  algebraic quantifiers are  $\mcQ^{t,\ell}_f$ for some $t$ and $f$.
  In other words, interpretations are restricted to be of dimension $\ell$.
\end{defi}

There are a few observations we would like to make before we go on to
analyse these logics.

The first is that, as long as $k \geq 2$, we do not need the usual
quantifiers of first-order logic.  Indeed, the formula $\exists x \phi$
is equivalent to $\mcQ^{1,1}_r xy (x=y \land \phi(x))$ where $r$ is
the unary matrix rank function.  Thus, in the inductive arguments
about the logic below, we will dispense with the case of the
existential quantifier.  More generally, the counting formula
$\exists^{\geq t}  x \phi$
is equivalent to $\mcQ^{t,1}_r xy (x=y \land \phi(x))$, so the logic
$\LAkLogic(Q)$ subsumes $\LC^k$.

The second point is that in identifying matrices with binary
relations, we have restricted ourselves to $0$-$1$-matrices.  But,
this is no loss of generality as our operators are over fixed finite
fields.  To be precise, if $f$ is an $m$-ary linear algebraic operator
over a finite field $\field{F}_q$ with $q$ elements, let $\hat{f}$ be
the $m(q-1)$-ary operator defined by 
$$ \hat{f}(M^t_i)_{i\in [m], t \in \field{F}_q\setminus\{0\}} =
f(\sum_{t\in\field{F}_q\setminus \{0\}}tM^t_i)_{i \in [m]}.$$
Then, for any $m$-tuple of matrices $(M_i)_{i\in [m]}$, the value of
$f(M_i)_{i\in[m]}$ is given by $\hat{f}(M^t_i)_{i\in [m], t \in
  \field{F}_q\setminus\{0\}}$ where $M^t_i$ is the $0$-$1$ matrix defined by
$(M^t_i)_{x,y} = 1$ if, and only if, $(M_i)_{x,y} = t$.

This has another consequence.  If $M$ is a $0$-$1$ matrix over a field
$\field{F}_q$, it is also a matrix over the prime subfield $\field
F_p$ of $\field{F}_q$, where $p$ is the characteristic of
$\field{F}_q$.  And, any linear algebraic operator over $0$-$1$
matrices over $\field{F}_q$ is completely determined by its action on
$\field{F}_p$.  For this reason, from now on, we will assume that all
linear-algebraic operators used in the logic are over prime fields.

Finally, we would like to explain why the restriction to square
matrices involves no loss of generality.  Again, this is because we
can replace an arbitrary linear-algebraic operator by one which is
defined on a tuple of square matrices all over the same index set.
Again, this involves an increase in the arity of the operator, this
time by a factor of three.

Let us start with a sequence $(M_1, \dots, M_m)$ of linear mappings $M_i\colon \field F^{K_i} \to \field F^{L_i}$ as above.
Our strategy is to encode each $M_i$ by a $3$-tuple of endomorphisms 
$(M_i^{\text{dom}},M_i^{\text{im}},M_i^{\star})$.
First of all, $M_i^{\text{dom}}\colon V \to V$ is used to encode the domain 
$\field F^{K_i}$ of $M_i$.
To this end we set
\[ M_i^{\text{dom}}(v) = v_0, \text{ where } v = v_0 \oplus v_1 \in \field F^{K_i} \oplus 
\field F^{\mcB \setminus K_i}. \]
In other words, $M_i^{\text{dom}}$ is projection of $V$ onto the subspace $\field F^{K_i}$, that is $M_i^{\text{dom}}$ the identity function on the space generated by the basis vectors in $K_i$ and it is the constant $0$ on the space generated by $\mcB \setminus K_i$. In particular, the image of $M_i^{\text{dom}}$ is 
$\im(M)=\field F^{K_i}$. 
Hence, given $M_i^{\text{dom}}$, we can easily reconstruct 
the space $\field F^{K_i}$, which corresponds to the domain of $M_i$
(in matrix representation, $M_i^{\text{dom}}$ is the identity matrix on the block
$K_i \times K_i$ and $0$ for all remaining position).

In the same way, we define a mapping $M_i^{\text{im}}$ in order to encode the image $\im(M_i) = \field F^{L_i}$ of $M_i$ via an endomorphism on $V$.
Finally, we lift $M_i$ to an endomorphism $M_i^\star\colon V \to V$ by setting
$M_i^\star = M \circ M^{\text{dom}}$. 
It is easy to translate from $M_i\colon \field F^{K_i} \to \field F^{L_i}$ 
to the encoding $(M_i^{\text{dom}},M_i^{\text{im}},M_i^{\star})$ and vice versa.
In particular, this encoding allows us to simulate a $k$-ary linear-algebraic operator $f$ by a $3k$-ary linear-algebraic operator $f'$ that only takes square $\mcB \times \mcB$-matrices as input. Indeed, a suitable operator would first decode a given $3k$-tuple 
$(M_1^{\text{dom}},M_1^{\text{im}},M_1^{\star}, \dots, M_m^{\text{dom}},M_m^{\text{im}},M_m^{\star})$  to get the original $k$-tuple $(M_1, \dots, M_m)$ and would then simulate $f$ on the input $(M_1, \dots, M_m)$.
As we claimed, this reduction shows that the set of all linear-algebraic operators has the same expressiveness as the set of all linear-algebraic operators that only accept
square matrices over the same index set.

\subsection{Relating Logic to Equivalence}\label{sec:Equiv}

Having introduced the linear algebraic logic $\LALogic$ and the
invertible-map equivalences $\IMequiv{k}{Q}$, we are now in a position
to prove that the latter is the right notion of equivalence for the
former.  Here we prove it only for equivalence within a structure,
since this is how we defined the equivalence relations.  The results
are true more generally, but this suffices for our purposes, with it
being lifted to equivalence between structures by
Lemma~\ref{lem:elem-equiv} below.

At the end of Section~\ref{sec:LALogic}, we identified three
simplifying assumptions that were made in the definition of the logic
and argued that they resulted in no loss of expressive power.  We now
make another simplifying assumption, though without restricting the
definition of the language.  We assume that  in any use of a
linear-algebraic quantifier $\mcQ^{t,\ell}_f I(\tup{x})$, the
interpretation $I$ is one without relativisation and without congruences.  This means that
the formulae $\varphi_{\delta}(\tup{x})$ and $\varphi_{\approx}(\tup{x}_1,\tup{x}_2)$ defining the
universe and the congruence relation are trivial: the former is true
of all $\ell$-tuples and the latter just defines the equality
$\tup{x}_1 = \tup{x}_2$.  To see that this involves no loss of
generality, we need to show that any use of a quantifier with an
interpretation that involves a non-trivial relativisation and
congruence can be replaced by one that does not.  So, fix an $m$-ary
linear-algebraic function $f$ and let $I(\tup{x},\tup{y}) =
(\varphi_{\delta}(\tup{x}),\varphi_{\approx}(\tup{x},\tup{y}),\varphi_1(\tup{x},\tup{y}),\ldots,\varphi_m(\tup{x},\tup{y}))$
be an $\LALogic(Q)[\sigma,\tau_m]$-interpretation.  Now, define
$\hat{f}$ to be the $(m+2)$-ary function such that
$\hat{f}(M_d,M_e,M_1,\ldots,M_m) = f(M_1,\ldots,M_m)+1$ if the
following three conditions are satisfied
\begin{enumerate}
\item $M_d$ is a $0$-$1$ matrix with non-zero entries only on the
  diagonal;
\item $M_e$ is the matrix of an equivalence relation, i.e.\ it can be
  put in block-diagonal form by a row-column permutation with each
  block being an all $1$s matrix; and
\item each of the matrices $M_1,\ldots,M_m$ is invariant under the
  equivalence relation given by $M_e$ 
\end{enumerate}
and $\hat{f}(M_d,M_e,M_1,\ldots,M_m) =0$ otherwise.  It is easily
checked that this is a linear-algebraic operator.  Now, any formula 
$\mcQ_f{\ell,t}I(\tup{x},\tup{y},\tup{z})$ is equivalent to 
$$\mcQ_{f'}{\ell,t}(x=x,\tup{x}=\tup{y},\varphi_{\delta}(\tup{x})\land
\tup{x}=\tup{y},\varphi_{\approx}(\tup{x},\tup{y}),\varphi_1(\tup{x},\tup{y}),\ldots,\varphi_m(\tup{x},\tup{y})).$$
Thus, since we only deal with interpretations without
relativisation and congruence, we will not explicitly mention the domain and congruence formulas $\varphi_{\delta}$ and
$\varphi_{\approx}$ and just write the interpretation as $(\varphi_1(\tup{x},\tup{y}),\ldots,\varphi_m(\tup{x},\tup{y}))$.

With this simplification in hand, we next proceed to establish a basic
property of the relationship between the logic $\LAkLogic(Q)$ and the
equivalence relation  $\IMequiv{k}{Q}$, namely that, in any finite
structure, this equivalence relation corresponds to the partition into
types that can be defined by formulas of the logic.  
This is
similar to the remark in Section~\ref{sec:preliminaries} to the effect
that equivalence classes with respect to $\cequivk$ are definable by
formulas of $\LC^k$.  Note however that  we do not have a counterpart to the formulas $\ctype_k$ which order
the equivalence classes.

\begin{thm}
   Let $k \geq 2$ be a positive integer and $Q$ a set of prime
   numbers.  For any finite structure $\mfA$ and $\tup{a},\tup{b} \in
   A^k$, the following are equivalent:
   \begin{enumerate}
  \item  $\tup a  \IMequiv{k}{Q}  \tup b$; and
  \item  for every formula
  $\phi$ of $\LAkLogic(Q)$, $\mfA \models \phi[\tup a]$ if, and only if, $\mfA \models \phi[\tup b]$.
   \end{enumerate}
\end{thm}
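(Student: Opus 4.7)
My plan is to prove $(1) \Rightarrow (2)$ by induction on the structure of the formula, and $(2) \Rightarrow (1)$ by induction on the stages of the iterative refinement defining $\IMequiv{k}{Q}$. For $(1) \Rightarrow (2)$, the atomic case follows from $\IMequiv{k}{Q}$ refining the atomic-type partition, and Boolean and infinitary connectives pose no difficulty. The substantive case is a linear-algebraic quantifier $\mcQ^{t,\ell}_f I(\tup z)$ with $f$ an operator over $\field F_p$ for some $p \in Q$ and interpretation $I = (\varphi_1, \dots, \varphi_m)$. The $2\ell$ fresh interpretation variables, together with the parameters $\tup z$, fit within the $k$-variable budget, and are thereby identified with positions of a $k$-tuple via some injection $\gamma : [2\ell] \to [k]$. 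Each $\varphi_i$ is a formula of $\LAkLogic(Q)$, so by the induction hypothesis its truth set in $\mfA$ is a union of $\IMequiv{k}{Q}$-classes. Consequently the matrix $M^{\tup a}_i \in \field F_p^{A^\ell \times A^\ell}$ produced by $\varphi_i$ at the parameter $\tup a$ decomposes as $M^{\tup a}_i = \sum_P N^{\tup a}_P$ where $P$ ranges over those IM-classes on which $\varphi_i$ holds and $N^{\tup a}_P$ is the $\{0,1\}$-indicator matrix with $(N^{\tup a}_P)_{\tup b_1, \tup b_2} = 1$ iff $\tup a[\tup b_1 \tup b_2 / \gamma] \in P$. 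From $\tup a \IMequiv{k}{p} \tup b$ I obtain a single invertible $S \in \field F_p^{A^\ell \times A^\ell}$ with $S N^{\tup a}_P S^{-1} = N^{\tup b}_P$ for all such $P$; summing yields $S M^{\tup a}_i S^{-1} = M^{\tup b}_i$ for every $i$. Viewing $S$ as an $\field F_p$-automorphism of $\field F_p^{A^\ell}$, the invariance of $f$ forces $f(M^{\tup a}_1, \dots, M^{\tup a}_m) = f(M^{\tup b}_1, \dots, M^{\tup b}_m)$, completing the case.

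For $(2) \Rightarrow (1)$ I argue contrapositively. Let $\mcP_0 \succeq \mcP_1 \succeq \dots$ denote the iterative refinement defining $\IMequiv{k}{Q}$, starting from the atomic-type partition and stabilising in at most $|A|^k$ stages. I construct formulas $\phi^t_P(\tup x) \in \LAkLogic(Q)$ defining each class $P \in \mcP_t$ by induction on $t$, with the base case being a quantifier-free atomic-type formula. At stage $t+1$, for each $P \in \mcP_t$, each $p \in Q$, each injection $\gamma : [2\ell] \to [k]$ with $\ell \le k/2$, and each simultaneous similarity class $s$ of tuples of $A^\ell \times A^\ell$-matrices over $\field F_p$, I employ a Lindstr\"om quantifier $\mcQ^{1, \ell}_{f^p_s} I_\gamma$, where $f^p_s$ is the linear-algebraic operator over $\field F_p$ indicating membership in the $s$-th simultaneous similarity class (this is admissible because simultaneous similarity is exactly invariance under $\field F_p$-vector space automorphism), and $I_\gamma$ is the interpretation assembled from the formulas $\phi^t_{P'}$ with variables permuted according to $\gamma$. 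Conjoining $\phi^t_P$ with the correct such formulas over all $(p, \gamma, s)$, using infinitary conjunction when $Q$ is infinite, defines each class of $\mcP_{t+1}$; hence $\tup a \not\IMequiv{k}{Q} \tup b$ implies that $\tup a$ and $\tup b$ are separated by some $\phi^t_P$.

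The main conceptual obstacle is the $(1) \Rightarrow (2)$ step's requirement that a \emph{single} invertible matrix $S$ simultaneously similarises \emph{all} the IM-class indicator matrices. This uniformity is exactly the content of the $\IMequiv{k}{p}$ equivalence, once one exploits the fixed-point nature of the definition by taking the stabilising partition $\mcP$ to be $\IMequiv{k}{p}$ itself. The remainder is bookkeeping: aligning the $2\ell$ interpretation variables of the logic with the injections $\gamma$ appearing in the IM-equivalence definition.
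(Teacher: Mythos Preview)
Your proof is correct and follows essentially the same approach as the paper: structural induction on formulas for $(1)\Rightarrow(2)$, and induction on the refinement stages with a tailor-made linear-algebraic operator for $(2)\Rightarrow(1)$. One small expository point: in the quantifier case you decompose $M^{\tup a}_i$ into indicator matrices $N^{\tup a}_P$ for $\IMequiv{k}{Q}$-classes $P$, but then invoke the fixed-point property of $\IMequiv{k}{p}$ to obtain the simultaneous similarity---strictly speaking, $\IMequiv{k}{p}$ only guarantees similarity of the $\IMequiv{k}{p}$-class matrices, which are coarser. The clean fix (which the paper uses implicitly, via its description of the joint refinement algorithm) is to take $\mcP = \IMequiv{k}{Q}$ itself: at its fixed point, same-class tuples are $\mcP$-$p$-similar for \emph{every} $p\in Q$, so the $\IMequiv{k}{Q}$-class indicator matrices are simultaneously similar over each $\field F_p$ as needed.
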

\begin{proof}
  First suppose that $\tup a  \IMequiv{k}{Q}  \tup b$ and let $\phi$
  be a formula of $\LAkLogic(Q)$.  We show by induction on the
  structure of $\phi$ that it does not distinguish the two tuples.
  Clearly if $\phi$ is an atomic formula it does not distinguish them
  by the requirement that $\IMequiv{k}{Q}$ is a refinement of the
  partition into atomic types.  The case of Boolean connectives is
  straightforward.  So, let us assume that $\phi(\tup{z})$ is
  $\mcQ_f^{t,\ell}$ for some linear-algebraic
  quantifier $\mcQ_f^{t,\ell} I(\tup{x},\tup{y})$.  Here $I$ is an
  interpretation
  $(\varphi_1(\tup{x},\tup{y}),\ldots,\varphi_m(\tup{x},\tup{y}))$ of
  dimension $\ell$, so $\tup{x}$ and $\tup{y}$ are $\ell$-tuples of
  variables and furthermore each of the formulas $\varphi_i$ may have
  parameters from $\tup{z}$.  The total number of variables is at most
  $k$ so we can assume, without loss of generality that
  $\tup{x}\tup{y}\tup{z}$ is a $k$-tuple and let $\gamma:
  [2\ell]\ra[k]$ denote the injective function that picks out the
  $\ell$-tuple $\tup{x}\tup{y}$.  Further, let $(P_j)_{j\in[t]}$
  be an enumeration of the $\IMequiv{k}{Q}$-equivalence classes.
  Then, as we noted in defining $\IMequiv{k}{Q}$, a tuple $\tup{a}$,
  along with $\gamma$ induces a partition of $A^{2\ell}$ into sets
  $P^{\tup{a},\gamma}_j = \{\tup{c} \mid \tup{a}[\tup{c}/\gamma]\in P_j\}$.  
  By induction hypothesis, each $\varphi_i$ defines a relation closed
  under $\IMequiv{k}{Q}$.  So, when $\tup{z}$ is interpreted by
  $\tup{a}$, each $\varphi_i$ defines a union of classes from among
  $(P^{\tup{a},\gamma}_j)_{j\in[t]}$.  By the assumption that  $\tup a
  \IMequiv{k}{Q}  \tup b$, we have that
  $(P^{\tup{a},\gamma}_j)_{j\in[t]}$, seen as a sequence of
  $A^{\ell}\times A^{\ell}$ matrices is simultaneously similar to
  $(P^{\tup{b},\gamma}_j)_{j\in[t]}$ over $\field F_p$ for each $p \in
  Q$.  Hence, no linear-algebraic operator can distinguish them and
  the result follows.

  In the other direction, we show that for each $\tup{a} \in A^k$ we
  can construct a formula $\Theta_{\tup a}$ that defines exactly the
  $\IMequiv{k}{Q}$-class of $\tup a$ and the result immediately
  follows.  We construct $\Theta_{\tup a}$ by induction on the iterative process of
  refinement that defines the equivalence relation $\IMequiv{k}{Q}$.
  As we noted, if $A$ has $n$ elements, there is a refining sequence of
  partitions $(\mcP_m)_{m< n^k}$ of $A^k$  that
  converges into the partition into $\IMequiv{k}{Q}$-classes, where
  $\mcP_0$ is the partition of $k$-tuples into atomic types.  We show,
  by induction on $m$,
  that for each $m$ and each part $P$ of $\mcP_m$ there is a formula
  $\Theta^m_P$ that defines exactly that part.  This is immediate for
  $\mcP_0$ as every atomic type is defined by a quantifier-free
  formula.  Now, suppose we have formulas $\Theta^m_P$ for all parts
  $P$ in the partition $\mcP_m = (P_j)_{j\in [t]}$.  Now, if $\tup a$
  and $\tup b$ are tuples that are in the same part of $\mcP_m$ but in
  distinct parts of $\mcP_{m+1}$, then there is some $\ell$,  an
  injective function $\gamma: [2\ell]\ra [k]$ and some $p\in Q$ such
  that the partitions $(P^{\tup{a},\gamma}_j)_{j\in[t]}$ and
  $(P^{\tup{b},\gamma}_j)_{j\in[t]}$ of $A^{2\ell}$ induced by $\tup
  a$ and $\tup b$ respectively are not simultaneously similar over
  $\field F_p$.  There is then some linear-algebraic function that
  distinguishes these two partitions seen as tuples of matrices.
  Indeed, we could define a $t$-ary function $f(M_1,\ldots,M_t)$ which
  is $1$ exactly when $(M_1,\ldots,M_t)$ are simultaneously similar to
  $(P^{\tup{a},\gamma}_j)_{j\in[t]}$ and $0$ otherwise.  Thus, for
  this $f$, the formula $\theta(\tup x) =
  \mcQ^{1,\ell}_f(\Theta^m_{P_1},\ldots,\Theta^m_{P_t})(\tup
  x^{\gamma})$ distinguishes $\tup a$ from $\tup b$.  Here $\tup
  x^{\gamma}$ denotes the subtuple of $k$-tuple of the variables $\tup x$
  that is picked out by $\gamma$.  Thus, we can take $\Theta^m_{[\tup
    a]}$ to be the conjunction of all formulas of this form that are
  true of $\tup a$ along with the negation of all formulas that are
  false of $\tup a$.
\end{proof}

Thus, we can treat the equivalence relation $\IMequiv{k}{Q}$, at least
in a fixed structure, as the notion of indistinguishability with
respect to the logic $\LAkLogic(Q)$.  This can be extended in the
natural way to talk of indistinguishability between structures.  So,
we use it sometimes in the form $(\mfA,\tup a) \IMequiv{k}{Q}
(\mfB,\tup b)$.  Also, by extension we allow the tuples $\tup a$ and
$\tup b$ to be of length less than $k$.  In particular, they may have
length $0$ and we can write $\mfA \IMequiv{k}{Q} \mfB$ to mean that
the two structures cannot be distinguished.  See
Lemma~\ref{lem:elem-equiv} for further treatment of this.

\section{Cai-Fürer-Immerman Structures and Logic}\label{sec:CFI}

In this section we describe a generalised variant of the \emph{CFI-construction} 
due to Cai, Fürer, and Immerman \cite{CFI92}. 
It provides a family of pairs of non-isomorphic 
graphs $(G_n, H_n)$, $n \geq 1$, such that $\Omega(n)$
many variables are required in first-order formulae that distinguish 
between $G_n$ and $H_n$, even if we allow the use of \emph{counting 
quantifiers} $\exists^{\geq i} x$.
Moreover, the construction ensures that the graphs $G_n$ and $H_n$ contain 
$\mcO(n)$-many vertices only, so that $\mcO(n)$-many variables are sufficient to
identify $G_n$ and $H_n$ up to isomorphism. Hence, the CFI-construction 
provides an optimal (linear) lower bound on the number of variables that are 
required to distinguish pairs of $n$-vertex graphs in first-order logic 
with counting (\FOC). To put it in other words, the isomorphism problem on the 
class of graphs $\{ G_n, H_n : n \geq 1 \}$ is as hard as 
possible when we measure the logical resources required for 
\FOC-definability.

On the other hand, the CFI-construction ensures that the non-isomorphic graphs 
$G_n$ and $H_n$ can quite easily be distinguished by solving a linear equation 
system over $\field F_2$.
In particular, the isomorphism problem on the family of graphs
$\{ G_n, H_n : n \geq 1 \}$ is algorithmically easy since we can solve linear equation systems over 
$\field F_2$ efficiently.
In contrast, and in addition to the lower 
bound on \FOC-definability mentioned above, we prove in this 
paper that the graphs $G_n$ and $H_n$ cannot be distinguished by any 
linear-algebraic property over any field $\field F$ of characteristic 
$\characteristic(\field F) \neq 2$. 
Hence, although linear algebra over $\field F_2$ easily separates $G_n$ from 
$H_n$, it is of no help over any field of different characteristic.

It has been observed in different contexts that the 
CFI-construction can be adapted to other algebraic structures
than the field $\field F_2$. A very general version 
due to Holm \cite{Holm10} is based on arbitrary finite Abelian groups.
For the applications in this paper it suffices to consider a less 
general version  which works over prime fields 
$\field F_p$. 
We introduce this variant here  and establish a key property 
of the automorphism group and orbits in CFI-structures that allows us to
describe the automorphism-type of $k$-tuples in counting logic by using 
$\mcO(k)$ variables only. We refer to this property as \emph{homogeneity}.

\subsection{A Generalised CFI-Construction}
\label{sec:generalised_cfi}

Our variant of the CFI-construction associates with 
every 
\begin{itemize}
 \item connected, $3$-regular, and ordered (undirected) graph 
$G=(V,E,\leq)$, and
 \item every prime field $\field F_p$, $p \in \Primes$,
\end{itemize}
a set of CFI-graphs $\CFIgraph{G}{p}{\lambda}$, where the 
role of the parameter $\lambda$ will become clear in the following.
We briefly comment on our choice of assumptions on the underlying graph $G$.
First of all, the requirement that $G$ is a connected (undirected) graph 
is standard and it guarantees that the set $\{ \CFIgraph{G}{p}{\lambda} : 
\lambda \}$ of CFI-graphs over $G$ and $\field F_p$ can be partitioned
into precisely $p$ distinct isomorphism types.
The assumption that $G$ is $3$-regular is not important for our results 
and it would be sufficient to require that the maximal degree of $G$ is 
bounded by a constant $d \geq 1$. However, assuming that each 
vertex has precisely three neighbours makes the technical presentation 
slightly simpler.
Finally, requiring that the graph $G=(V,E,\leq)$ is \emph{ordered}, that 
is that $G$ contains besides the (symmetric) edge relation $E$ also a 
linear order $\leq$ on the set of vertices $V$, is crucial for many of 
our proofs and, more specifically, in most of our definability results.
The fact that $G$ is an ordered graph ensures that no 
symmetries of the 
underlying graph $G$ carry over to the CFI-graphs 
$\CFIgraph{G}{p}{\lambda}$ and thus the only symmetries of the 
CFI-graphs result from the CFI-construction itself.
This assumption of starting with ordered graphs is crucial 
for our later definability considerations.

We now go through the construction.
Let $p \in \Primes$ be a prime. For every vector $\lambda \in \mbF_p^V$ we 
construct the \emph{CFI-structure} $\CFIgraph{G}{p}{\lambda}$ over the 
(connected, $3$-regular, and ordered) graph $G=(V, E, \leq)$, the prime field 
$\field F_p$, and 
with \emph{load} $\lambda$ as the following relational structure. The 
signature of $\CFIgraph{G}{p}{\lambda}$ is $\taucfi = \{ \preceq, R, C, I \}$ 
where $R$ is a ternary 
relation symbol and where $\preceq, I, C$ are binary relation symbols.
The universe $A$ of the CFI-structure $\mfA = \CFIgraph{G}{p}{\lambda}$ is 
$A = E \times \mbF_p$.
The linear order $\leq$ on the vertex set $V$  
extends to a linear order on the edge set $E$ (as the lexicographic order, for 
example).
We use this linear order on $E$ to define the following 
\emph{total preorder} $\preceq$ on $A$: $(e,x) \preceq (f,y)$ if $e \leq 
f$. Note that $\preceq$ induces a linear order on the corresponding 
equivalence classes $e^p = e \times \mbF_p$. Clearly, each of these 
classes $e^p$ is of size $p$. 
Since $G$ is undirected every edge $e=(v,w) \in E$ comes with its
corresponding \emph{dual edge} $f=(w,v)\in E$. In what follows, we use
the notation $e^{-1} = f$ to denote the dual of the edge $e \in E$.
The relations $I$ and $C$ are defined as follows.
\begin{itemize}
 \item The \emph{cycle relation} $C$ defines the cyclic structure 
 of the additive group of $\field  F_p$ on each of the edge classes 
$e^p$.
 More precisely, 
 \[ C = \bigcup_{e \in E} \{ ((e,x), (e, x+1 \text{ mod } p) )
 : x \in \field F_p\}.\]
 \item The \emph{inverse relation} $I$ relates additive inverses for 
dual edges. Formally,
 \[ I = \bigcup_{e \in E} \{ ( (e,x), (e^{-1}, -x) : x \in \field F_p 
\}.\]
\end{itemize}
Note that while the cycle relation $C$ defines a directed graph, the 
inverse relation $I$ is symmetric. Furthermore, observe that the 
relations $\preceq, C$ and $I$ are defined independently of the load 
vector $\lambda$ and so only depend on the underlying graph $G$ and 
the prime field $\field F_p$.
In contrast, the \emph{CFI-relation} $R=R^\lambda$ is defined using the 
load vector $\lambda$ as follows.
Since $G$ is $3$-regular we can write the set of edges outgoing from $v$
as $E(v) = \{ e_1, e_2, e_3 \}$ where $e_1 < e_2 < e_3$. The 
\emph{CFI-relation} $R^\lambda(v)$ at vertex $v$ is defined as follows:
\[ R^\lambda(v) = \{ ((e_1, x_1), (e_2, x_2), (e_3,x_3)) : x_1 + x_2 + x_3 
= \lambda(v) \,\, \mod p \}. \]
The full CFI-relation $R^\lambda$ of  
$\CFIgraph{G}{p}{\lambda}$ is given as $R^\lambda = \bigcup_{v \in 
V} R^\lambda(v)$.

\subsection{Symmetries of CFI-Structures}
\label{sec:symmetriesCFI}

The automorphism group $\Gamma$ of a CFI-structure
$\CFIgraph{G}{p}{\lambda}$ only depends on $G$ and $p$, but not on 
$\lambda$. To see this, first observe that every automorphism $\pi \in 
\Gamma$ has to maintain the linear preorder $\preceq$. This means that each 
$\pi \in \Gamma$ has to fix each edge class, that is
$\pi(e^p) = e^p$ for all $e \in E$. Moreover, $\pi$ has to maintain the 
cycle relation~$C$. This means that the action of $\pi$ on an 
edge class $e^p$ is a cyclic shift in~$\field F_p$. 
Indeed, if $\pi(e,0) = (e,x)$ for $x \in \field F_p$, then the cycle 
relation $C$ enforces that $\pi(e,i) = (e,j)$ where $j = i+x \,\,\mod p$.
Let us write $\pi(e) 
\in \field F_p$ to denote the length $x \in \field F_p$ of the cyclic shift of 
$\pi$ on $e^p$ for $e \in E$.
Then, because of the inverse relation $I$, we have $\pi(e) + \pi(e^{-1}) = 
0$ for all $\pi \in \Gamma$.
Altogether this shows that 
\[ \Gamma \leq \{ \pi \in \field F^E_p : \pi(e) + 
\pi(e^{-1}) = 0 \text{ for } e \in E \} \leq \field F^E_p.\] 
So far we have not taken the CFI-relation $R^\lambda$ into 
account.
Since $\pi(e^p) = e^p$ for all $e \in E$ it follows that $\pi(R^\lambda(v)) = 
R^\lambda(v)$ for all $v \in V$.
Let  $v \in V$ and $vE = \{ w_1, w_2, w_3 \}$ and let 
$((w_1, x_1), (w_2, x_2) , (w_3, x_3)) \in R^\lambda(v)$, that is 
$x_1 + x_2 + x_3 = \lambda(v) \,\,\mod p$.
From our earlier observations we know that
\[\pi( (w_i, x_i)) =(w_i, 
x_i+\pi(v,w_i)).\]
Hence, the condition $\pi(R^\lambda(v))=R^\lambda(v)$ implies that
\[ x_1 + \pi(v,w_1) +  x_2 + \pi(v, w_2) + x_3 + \pi(v,w_3) = \lambda(v).\]
This, in turn, means that $\pi(v,w_1) + \pi(v,w_2) + \pi(v,w_3) = 
\sum_{e \in E(v)} \pi(e) = 0$. In fact, this last condition is not only 
necessary, but also sufficient for $\pi$ to preserve the relation
$R^\lambda(v)$, as one can verify easily. 
Moreover, this condition on $\pi$ is independent of the specific 
load
vector $\lambda$. The only requirement is that, for each vertex $v$, 
the three cyclic shifts $\pi(e)$ for $e \in E(v)$ sum up to $0 \,\, \mod 
p$.
Altogether this gives us the following characterisation of the automorphism 
group $\Gamma$ of $\CFIgraph{G}{p}{\lambda}$ as a subspace of the vector space 
$\field F_p^E$ that is determined by the following set of linear equations 
in variables $\pi(e)$ for $e \in E$:
\begin{align*}
 \tag{\text{Inv}}\pi(e) + \pi(e^{-1}) &=0 && \text{ for } e \in 
E \label{aut-constr-inv}
\\
 \tag{\text{CFI}}\pi(v) := \sum_{e \in E(v)} \pi(e) 
&=0 && \text{ for 
} v \in V.\label{aut-constr-CFI}
\end{align*}

More generally, we can apply each vector $\pi \in \field F_p^E$, that
satisfies the constraints (Inv), to a CFI-structure 
$\CFIgraph{G}{p}{\lambda}$ and obtain a new CFI-structure over the same 
underlying graph~$G$.
As it turns out the resulting structure is $\CFIgraph{G}{p}{\lambda+\pi}$ 
where $(\lambda + \pi) (v) = \lambda(v) + \pi(v)$ for all $v \in V$.
Let us denote by $\Inv(\field F_p^E) \leq \field F_p^E$ the set of all 
vectors $\pi$ that satisfy the $(\Inv)$-constraints.

\begin{rem}
The group $\Delta = \Inv(\field F_p^{E}) \leq 
\field F_p^{E(G)}$ acts on the set of all CFI-structures over 
$G$ 
that is on $\CFIgraph{G}{p}{\star} := \{ \CFIgraph{G}{p}{\lambda} : \lambda 
 \in\field F_p^V\}$ (and partitions this set into $p$ orbits, as we will see 
below).
\end{rem}

Clearly, the set $\CFIgraph{G}{p}{\star}$ has size $p^n$ where $n = 
|V|$. However, if we consider this set up to isomorphisms, then it turns out 
that there are only $p$ different types of CFI-structures over a fixed graph 
$G$ \cite{CFI92, Holm10, Pakusa16}. 
To put it differently, the action of $\Inv(\field F_p^E)$ on 
$\CFIgraph{G}{p}{\star}$ has $p$ orbits.

\begin{thm}\label{thm:cfi-isomorphism}
Two CFI-structures $\CFIgraph{G}{p}{\lambda}, \CFIgraph{G}{p}{\sigma}$ 
over the same graph $G$ are isomorphic if, and only if, 
\[\sum \lambda = \sum_{v \in V} \lambda (v) = \sum_{v \in V} \sigma(v) = 
\sum \sigma.\]
\end{thm}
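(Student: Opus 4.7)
The plan is to reduce the problem to the solvability of a linear system over $\field F_p$. By the analysis in Section~\ref{sec:symmetriesCFI}, an isomorphism $\psi: \CFIgraph{G}{p}{\lambda} \to \CFIgraph{G}{p}{\sigma}$ must preserve the preorder $\preceq$, which (because the underlying graph is ordered) forces $\psi$ to fix every edge class $e^p$ setwise. Preservation of the cycle relation $C$ then makes $\psi$ act as a cyclic shift $\pi(e) \in \field F_p$ on each $e^p$, preservation of $I$ gives $\pi(e) + \pi(e^{-1}) = 0$ (the $(\Inv)$ constraint), and the condition $\psi(R^\lambda) = R^\sigma$ translates exactly into $\pi(v) := \sum_{e \in E(v)} \pi(e) = \sigma(v) - \lambda(v)$ for every $v \in V$. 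The same identifications show, conversely, that any $\pi \in \Inv(\field F_p^E)$ satisfying these vertex equations induces an isomorphism. So the theorem amounts to characterising when this linear system is solvable.

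For the ``only if'' direction I would sum the vertex equations. The key computation is
\[ \sum_{v \in V} \pi(v) = \sum_{v \in V} \sum_{e \in E(v)} \pi(e) = \sum_{e \in E} \pi(e), \]
since each directed edge has a unique source, and this last sum pairs up as $\sum_{\{e, e^{-1}\}} (\pi(e) + \pi(e^{-1})) = 0$ by $(\Inv)$. Hence $\sum \sigma - \sum \lambda = \sum_v (\sigma(v) - \lambda(v)) = 0$.

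For the ``if'' direction I would construct $\pi$ explicitly using a spanning tree, exploiting that $G$ is connected. Set $\delta(v) := \sigma(v) - \lambda(v)$, so $\sum_v \delta(v) = 0$ by assumption. Fix a spanning tree $T \subseteq E$ (say with root $r$) and set $\pi(e) = \pi(e^{-1}) = 0$ for every pair of dual edges outside $T$. Process the vertices of $T$ from the leaves upward: for each non-root $v$, if all tree-edges below $v$ have been assigned, set the value of the single remaining tree-edge from $v$ to its parent so that $\pi(v) = \delta(v)$, and extend antisymmetrically. At the root, the constraint $\pi(r) = \delta(r)$ is automatic, since the identity $\sum_{v} \pi(v) = 0$ (established in the previous paragraph) together with $\sum_v \delta(v) = 0$ forces it.

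The conceptually delicate step is the second direction: one has to see the vertex equations as a discrete divergence problem on the directed edges of $G$ and realise that the antisymmetry built into $(\Inv)$ is precisely what lets a spanning-tree construction succeed whenever the total divergence vanishes. Everything else --- the reduction to a linear system and the ``only if'' direction --- is essentially book-keeping with the structure already unpacked in Section~\ref{sec:symmetriesCFI}.
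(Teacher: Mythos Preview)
Your proof is correct. Note that the paper itself does not supply a proof of this theorem; it merely states the result and cites \cite{CFI92, Holm10, Pakusa16}. Your argument is the standard one found in that literature: the analysis of Section~\ref{sec:symmetriesCFI} (which you correctly invoke) already shows that isomorphisms from $\CFIgraph{G}{p}{\lambda}$ to $\CFIgraph{G}{p}{\sigma}$ correspond exactly to vectors $\pi \in \Inv(\field F_p^E)$ satisfying $\pi(v) = \sigma(v) - \lambda(v)$ for every $v$; summing over $v$ and using the $(\Inv)$ pairing gives necessity, and your spanning-tree construction gives sufficiency. Nothing is missing.
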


For technical convenience, we have introduced 
CFI-structures as relational 
structures. However, it is easy to encode them as usual (unordered) 
graphs, and, in fact, this is the way in which they were originally 
defined in \cite{CFI92} (for $p=2$). The main step is to introduce for each 
CFI-constraint $i = ((e_1, x_1), (e_2, x_2), (e_3,x_3)) \in 
R^{\lambda(v)}$, $e_i \in vE$, $x_i \in \field F_p$, a new node 
$i^{\lambda(v)}$ and to connect it to the edge nodes $(e_i,x_i) \in e_i^p$ 
accordingly
(these additional constraint nodes $i^{\lambda(v)}$ are called \emph{inner 
nodes} in 
the original 
construction in~\cite{CFI92}). 
Furthermore, we can replace the linear preorder by a path of the
appropriate length and connect vertices in the edge classes to positions 
on this path accordingly.
All of these simple transformation steps are clearly definable in \FPC.

\begin{lem}
\label{lem:encodeCFI:asgraph}
 There exist \FPC-interpretations 
 $\mcJ$ and $\mcJ^{-1}$ 
 such that $\mcJ$ maps CFI-structures $\mfA = \CFIgraph{G}{p}{\lambda} 
\in \CFIclass{\mcF}{p}$ to  graphs $\mcJ(\mfA)$ of degree $\mcO(p^2)$ and 
with $\mcO(p^2 \cdot n)$ many  vertices, where $n = |V(G)|$, and such that 
$\mcJ^{-1}$, which maps graphs to CFI-structures, is the inverse of $\mcJ$ 
in the sense that for all $\mfA \in \CFIgraph{G}{p}{\lambda}$ we have that
$\mcJ^{-1} ( \mcJ(\mfA) )$ is isomorphic to $\mfA$, that is $ \mcJ^{-1} 
( \mcJ(\mfA) ) \cong \mfA$.
\end{lem}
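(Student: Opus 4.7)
The plan is to make the informal description in the paragraph preceding the lemma precise as two explicit interpretations, and then observe that each of the required gadget-substitutions and gadget-recognition steps is routinely definable. For $\mcJ$, I would take a two-dimensional interpretation whose universe consists of three disjoint vertex classes: (i) the original \emph{edge-nodes} $(e,x) \in E \times \mbF_p$; (ii) for each $v \in V$ and each triple $((e_1,x_1),(e_2,x_2),(e_3,x_3)) \in R^{\lambda(v)}$ one \emph{inner node} $i^{\lambda(v)}$ connected by ordinary graph edges to the three edge-nodes it contains; and (iii) a sequence of \emph{path-nodes} encoding the preorder $\preceq$, with each edge-node attached to exactly one path-node. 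To replace the relational symbols $C$ and $I$ by undirected edges while preserving information, each directed $C$-edge on an edge-class and each $I$-pair between dual edges is substituted by a small asymmetric gadget of constant size (for instance, a short path with an attached triangle) whose orientation and role can be read off locally from the graph. To make the three vertex classes distinguishable in the resulting pure graph, each class is decorated with a distinctive constant-size identifier gadget. Since all gadgets have bounded size and there are $\mcO(p^2)$ inner nodes per vertex $v$ (one per constraint triple) plus $\mcO(p)$ cycle- and inverse-gadget nodes per edge, the target graph has $\mcO(p^2 \cdot n)$ vertices and degree $\mcO(p^2)$ as claimed. Each gadget is placed by a quantifier-free formula, so $\mcJ$ is in fact already \FO-definable.

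For $\mcJ^{-1}$, I would proceed in the reverse direction. First, an \FO-formula identifies the vertex classes by their identifier gadgets; the universe of the recovered structure is then the subset of graph-vertices recognised as edge-nodes, given via a suitable \FO-interpretation. Second, for each pair of recovered edge-nodes we determine the relations $C$ and $I$ by inspecting the local asymmetric gadget that joins them (again an \FO step of bounded quantifier depth), and the relation $R$ is recovered as the set of triples of edge-nodes sharing a common inner node neighbour. Finally, to recover the preorder $\preceq$ from the path of path-nodes one needs reachability along the path, which is definable in \FPC via an inflationary fixed-point (or, if the path is chopped into constant-length segments by the gadget design, already in \FO). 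Putting the clauses together yields the FPC-interpretation $\mcJ^{-1}$.

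Correctness, namely $\mcJ^{-1}(\mcJ(\mfA)) \cong \mfA$, is then routine: the identity map on edge-nodes is an isomorphism between $\mfA$ and $\mcJ^{-1}(\mcJ(\mfA))$ because every tuple of every relation in $\taucfi$ is encoded by a unique gadget in $\mcJ(\mfA)$, and the recovery formulas of $\mcJ^{-1}$ put these tuples back unchanged. The only nontrivial point — and the main obstacle — is the careful choice of gadgets: they must be rigid enough that (a) the three vertex sorts are distinguishable in the undirected graph once $\preceq, C, I, R$ are forgotten, (b) the orientation of each $C$-edge and the pairing of each $I$-pair can be read off by a quantifier-free formula, and (c) no inserted gadget introduces spurious automorphisms that would identify CFI-structures of distinct isomorphism type in the sense of Theorem~\ref{thm:cfi-isomorphism}. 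A concrete choice of simple asymmetric tree-gadgets for $C$ and $I$ together with pairwise non-isomorphic small marker-gadgets for the three vertex sorts achieves all three properties simultaneously, and yields the stated degree and size bounds.
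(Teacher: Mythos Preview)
Your proposal is correct and follows the same approach as the paper, which in fact does not give a formal proof of this lemma at all: the paper merely sketches the construction in the paragraph immediately preceding the statement (inner nodes for the $R$-constraints, a path encoding $\preceq$) and declares the transformation ``clearly definable in \FPC.'' Your write-up fleshes out exactly that sketch, and the additional care you take with asymmetric gadgets for $C$ and $I$ and with sort-distinguishing markers is precisely what is needed to make the informal description rigorous; the one genuinely \FPC (as opposed to \FO) step you flag---recovering the preorder via reachability along the path---is the right place to invoke a fixed point.
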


\subsection{CFI-Structures over Expander Graphs}
\label{sec:homogeneity}

The CFI-construction unfolds its full power when it is based on a 
family of underlying graphs that is highly connected. 
A good choice is to take $3$-regular \emph{expander 
graphs} with $\mcO(n)$ vertices, as such graphs have a linear lower bound on
the size of their separators (which means that we cannot disconnect the graphs into components of size 
$\leq n/2$ by removing fewer than $\Omega(n)$ vertices).
We briefly recall some basic facts on expander graphs 
from~\cite{HoLiWi06}.
Let $G=(V,E)$ be an undirected $d$-regular 
graph (in this paper we have $d = 3$). 
For two subsets of vertices $S, T \subseteq V$ in $G$ we denote the set of 
directed edges from $S$ to $T$ by  $E[S;T] = E \cap (S \times T)$.
The \emph{edge boundary} of a set $S \subseteq V$ is $\partial S = E[S;V 
\setminus S]$ and the \emph{expansion ratio} $h(G)$ is defined as:
\[ h(G) = \min\limits_{\{ S : |S| \leq |V|/2 \}} \frac{|\partial S|}{|S|}. 
\]

\begin{defi}[Expander graphs]
A family $\mcF = \{ G_n = (V_n, E_n) : n \geq 1\}$ of undirected 
$d$-regular 
graphs is called a \emph{family of $d$-regular expander graphs} if 
\begin{itemize}
 \item $\mcF$ is \emph{increasing}, that is $|V_n|$ is monotone and 
unbounded, and
 \item $\mcF$ is \emph{expanding}, that is there exists a constant
$\varepsilon > 0$ such that $h(G_n) \geq \varepsilon$ for all $n \geq 1$.
\end{itemize}
\end{defi}

For our applications we fix a family $\mcF$ of $3$-regular, 
connected expander graphs as provided by the following theorem.

\begin{thm}[see e.g.\ Example~2.2 in~\cite{HoLiWi06}]
There exists a family of 3-regular, connected expander graphs $\mcF = \{ 
G_n : n \in \mathbb N\}$ such that each graph $G_n$, $n \in \mathbb N$, has 
$\mcO(n)$ vertices. 
\end{thm}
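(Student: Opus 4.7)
The plan is to invoke a known construction of expander graphs from the literature and, if necessary, reduce the degree to $3$. A classical algebraic construction (e.g., the Lubotzky–Phillips–Sarnak Ramanujan graphs, or Margulis's construction) provides, for some fixed constant $d\geq 3$, an explicit infinite family $\{H_m\}$ of connected $d$-regular graphs on $\Theta(m)$ vertices together with a fixed $\varepsilon_0>0$ satisfying $h(H_m)\geq \varepsilon_0$ for all $m$. Alternatively, one may appeal to the probabilistic fact (Bollobás, Friedman) that a uniformly random $3$-regular graph on $n$ vertices is a connected expander with high probability, which already delivers the desired family directly.

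If one starts from the probabilistic argument, there is nothing further to do: extract a deterministic sub-sequence by the usual first-moment / union-bound method on the events ``$H_m$ is connected'' and ``$h(H_m)\geq \varepsilon_0$'', both of which hold with probability tending to $1$. If instead one starts from an algebraic construction with $d>3$, apply the standard degree-reduction gadget: replace each vertex $v\in V(H_m)$ by a cycle $C_v$ on $d$ copies of $v$, one copy per edge of $H_m$ incident to $v$, and turn the original $d$-regular edge set of $H_m$ into a perfect matching between these cycles. Each vertex of the resulting graph $G_n$ has two cycle-neighbours and exactly one matching-neighbour, so $G_n$ is $3$-regular, on $n=d\cdot|V(H_m)|=\Theta(m)$ vertices, and connected since $H_m$ is.

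The substantive step is to check that the expansion is preserved up to a multiplicative constant depending only on $d$. Given a set $S\subseteq V(G_n)$ of size at most $|V(G_n)|/2$, one defines $S^\star\subseteq V(H_m)$ to consist of those $v$ whose cycle $C_v$ contributes ``many'' vertices to $S$ (a constant fraction, say more than $d/2$). Any cycle $C_v$ with $v\in S^\star$ whose neighbour cycle $C_w$ lies outside $S^\star$ contributes at least one matching edge to $\partial S$, so $|\partial_{G_n} S|\gtrsim |\partial_{H_m} S^\star|$; for cycles with $v\notin S^\star$ that still intersect $S$, the cycle-edges inside $C_v$ already account for at least a constant fraction of $|S\cap C_v|$ worth of boundary. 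Combining these two bookkeeping cases with $h(H_m)\geq \varepsilon_0$ yields $|\partial_{G_n} S|\geq \varepsilon\, |S|$ for a constant $\varepsilon=\varepsilon(\varepsilon_0,d)>0$, which is the required uniform expansion bound.

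The main obstacle is this last case analysis: one has to be careful about ``boundary-type'' sets $S$ that slice cycles $C_v$ roughly in half without being concentrated on any particular vertex of $H_m$, and check that the internal cycle-edges of $G_n$ already contribute enough to $\partial S$ in that regime. Everything else (connectedness, the $\mathcal{O}(n)$ vertex bound, and $3$-regularity) is immediate from the gadget construction.
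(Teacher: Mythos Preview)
The paper does not prove this theorem at all: it is stated with a citation to \cite{HoLiWi06} (Example~2.2) and treated as a known background result, with no argument given. So there is no ``paper's own proof'' to compare against; your proposal is strictly more than what the paper offers.

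That said, your sketch is essentially the standard story and is correct in outline. Two remarks. First, the probabilistic route (random $3$-regular graphs are expanders with high probability) is the cleanest and matches the spirit of the cited example in Hoory--Linial--Wigderson; if you take that route you are done immediately and need not go through the degree-reduction gadget at all. Second, your gadget argument is the replacement-product reduction, and the ``main obstacle'' you flag is real but well known: the bookkeeping for sets $S$ that slice cycles roughly in half does go through, but the clean way to state it is via the replacement/zig-zag product expansion bound rather than an ad hoc case split. If you want to keep the argument self-contained, the case analysis you describe works, but you should be explicit that a cycle $C_v$ with $0<|S\cap C_v|<d$ contributes at least two cycle-edges to $\partial S$, which is what makes the ``not concentrated'' case harmless.
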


Of course, we can also assume that the graphs in $\mcF$ are
\emph{ordered} just by adding to each graph $G_n=(V_n, E_n) \in \mcF$ an 
arbitrary linear order on $V_n$. 
From this family $\mcF$ of $3$-regular, connected, ordered expander graphs 
$G_n$ with $\mcO(n)$ many vertices we construct, for every $p \in 
\Primes$, the CFI-class $\CFIclass{\mcF}{p}$ consisting of 
all CFI-structures over graphs from $\mcF$ that is
\[ \CFIclass{\mcF}{p} = \bigcup_{n \in \mathbb N} 
\CFIgraph{G_n}{p}{\star}.\]

The \emph{CFI-problem} (over $\mcF$ and $p \in \Primes$) is to 
decide, given a structure 
$\CFIgraph{G}{p}{\lambda} \in \CFIclass{\mcF}{p}$ whether $\sum\lambda = 
0$. 
For the original form of the CFI-construction, it was shown in \cite{CFI92} that this problem is 
undefinable in counting logic with sublinearly many variables. Also the generalization to
more powerful variants, and in particular to our class $\CFIclass{\mcF}{p}$
is well-known.

\begin{thm}
\label{thm:cfi92}
 For any two structures $\CFIgraph{G_n}{p}{\lambda}, \CFIgraph{G_n}{p}{\sigma}  \in 
\CFIclass{\mcF}{p}$ we have
\[ \CFIgraph{G_n}{p}{\lambda} \equiv^{\Omega(n)} 
\CFIgraph{G_n}{p}{\sigma}.\]
\end{thm}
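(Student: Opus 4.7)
The plan is to prove this via the bijective $k$-pebble game of Hella, which characterises $\cequivx{k}$: we have $\mfA \cequivx{k} \mfB$ if and only if Duplicator has a winning strategy in the $k$-pebble bijective game on $\mfA,\mfB$. It therefore suffices to exhibit a Duplicator strategy for $k = \varepsilon n$, where $\varepsilon > 0$ depends on the expansion ratio $\varepsilon_0$ guaranteed for the family $\mcF$.

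The strategy is built around the algebraic characterisation of $\Gamma$ from Section~\ref{sec:symmetriesCFI}. At any game position with pebbled pairs $(a_i,b_i)$, each $a_i = (e_i, x_i)$ determines an edge $e_i$ of $G_n$; let $F \subseteq E(G_n)$ be the set of these edges, so $|F| \leq k$. Duplicator maintains the invariant that there exists some $\pi \in \Inv(\field F_p^E)$ with $\sum_v \pi(v) = \sum \sigma - \sum \lambda$ such that the induced bijection $h_\pi \colon A \to B$, $h_\pi(e',y) = (e', y + \pi(e'))$, is consistent with the current pebbled pairs; moreover, the restriction $\pi|_F$ has been ``committed'' but $\pi$ remains free on $E \setminus F$.

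The key technical step, where expansion enters, is the following rerouting lemma: for every $F \subseteq E(G_n)$ with $|F| < \varepsilon n$, every prescription $\pi_0 \colon F \to \field F_p$ satisfying the (Inv)-constraints on $F$, and every target $c \in \field F_p$, there exists $\pi \in \Inv(\field F_p^E)$ extending $\pi_0$ with $\sum_v \pi(v) = c$. Equivalently, the restriction map $\Inv(\field F_p^E) \to \field F_p^F \oplus \field F_p$ sending $\pi \mapsto (\pi|_F, \sum_v \pi(v))$ is surjective. The obstruction to surjectivity corresponds to a nontrivial cocycle supported on $F$ that cuts $G_n$ into two balanced pieces — but expansion of $G_n$ forbids such a cut as long as $|F|$ is sufficiently small compared to $\varepsilon_0 n$. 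With this lemma, Duplicator responds to each Spoiler move (which adds at most one edge to $F$) by choosing a fresh $\pi$ meeting all prior commitments and the global sum constraint; the induced bijection $h_\pi$ is, by the characterisation of $\Gamma$ and Theorem~\ref{thm:cfi-isomorphism}, an isomorphism from $\CFIgraph{G_n}{p}{\lambda}$ to $\CFIgraph{G_n}{p}{\sigma}$, so any element Spoiler then selects yields a partial isomorphism on the pebbled pairs.

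The main obstacle is to quantify the rerouting lemma precisely against the expansion constant and to handle the bookkeeping of which edges get ``frozen'' as the game proceeds. Concretely, one must show that as long as $|F|$ and the union $N(F)$ of incident vertices remain $o(n)$, the graph $G_n \setminus N(F)$ is still connected with expansion bounded away from $0$, and that any element of $\field F_p^F \oplus \field F_p$ in the image of the restriction extends to an element of $\Inv$ supported essentially on this remaining connected component. Once this is established, tracking the edge count across $k = \varepsilon n$ moves shows Duplicator survives the full game, giving $\CFIgraph{G_n}{p}{\lambda} \cequivx{\Omega(n)} \CFIgraph{G_n}{p}{\sigma}$ as required.
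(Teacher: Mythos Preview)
The paper does not give its own proof of this theorem; it is stated as a well-known result, citing \cite{CFI92} for the original $p=2$ case and noting that the generalisation is standard. Your overall plan --- Hella's bijective $k$-pebble game with a Duplicator strategy that maintains a global bijection of the form $h_\pi$ and ``reroutes'' $\pi$ using expansion --- is indeed the standard approach.

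However, your rerouting lemma and your invariant are wrong as stated, and this is not just a matter of bookkeeping. For any $\pi \in \Inv(\field F_p^E)$ one has
\[
\sum_{v \in V} \pi(v) \;=\; \sum_{v \in V}\sum_{e \in E(v)} \pi(e) \;=\; \sum_{e \in E} \pi(e) \;=\; \sum_{\text{undirected } \{e,e^{-1}\}} \bigl(\pi(e)+\pi(e^{-1})\bigr) \;=\; 0,
\]
so there is \emph{no} $\pi \in \Inv(\field F_p^E)$ with $\sum_v \pi(v) = \sum\sigma - \sum\lambda$ when $\sum\sigma \neq \sum\lambda$. Consequently your claim that $h_\pi$ is an isomorphism from $\CFIgraph{G_n}{p}{\lambda}$ to $\CFIgraph{G_n}{p}{\sigma}$ is impossible in the interesting case: these structures are non-isomorphic by Theorem~\ref{thm:cfi-isomorphism}, and $h_\pi$ always maps $\CFIgraph{G_n}{p}{\lambda}$ isomorphically onto $\CFIgraph{G_n}{p}{\lambda+\pi}$, which has the same load sum as $\lambda$.

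The correct invariant is that Duplicator maintains $\pi \in \Inv(\field F_p^E)$ together with a small set $D \subseteq V$ of ``defect'' vertices (in the cleanest version $|D|=1$) such that $(\lambda+\pi)(v) = \sigma(v)$ for all $v \notin D$, and such that no vertex of $D$ has all three incident edges among the pebbled edges $F$. Then $h_\pi$ preserves $\preceq, C, I$ everywhere and preserves $R$ on every triple that can actually be formed from pebbled elements. The rerouting lemma becomes: given the committed values on $F$ and the current defect vertex $v^\star$, one can move $v^\star$ to any other vertex reachable by a path avoiding $F$ without changing $\pi$ on $F$; expansion guarantees that removing the $O(k)$ edges in $F$ leaves a large connected component, so $v^\star$ can always be pushed away from whatever edge Spoiler is about to pebble. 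This is the step where the expansion bound $k = \Omega(n)$ enters.
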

Thus, from the perspective of counting logic (with 
$\Omega(n)$ many variables) CFI-structures over the same underlying graph $G_n$ look the same although, for  load vectors $\lambda$ and $\sigma$ with $\sum \lambda \neq \sum \sigma$, we know that $\CFIgraph{G_n}{p}{\lambda}$ and $\CFIgraph{G_n}{p}{\sigma}$ are not isomorphic. 

\subsection{Homogeneity}
\label{sec:cfi:homogeneous}

We have seen that the generalised CFI-construction starts with a family $\mcF$ of 
ordered, connected, three-regular expander graphs and 
generates a family of non-isomorphic structures that are hard to 
distinguish from the perspective of counting logic.
We now discuss a further useful property of the resulting 
structures. Despite the fact that counting logic cannot determine the 
(full) isomorphism type of a CFI-structure, it turns out that it can  
control the ``automorphism types'' of $k$-tuples inside a given CFI-structure. 
That is to say that counting logic with $\mcO(k)$ many variables can distinguish between all pairs of $k$-tuples which are not related via an automorphism of the CFI-structure.
This property is known as \emph{homogeneity}.

\begin{defi}
\label{def:homogeneity}
 Let $\ell \geq 1$. We say that a structure $\mfA$ with automorphism group 
$\Gamma$ is \emph{$\ell$-homogeneous} if for all $k \geq 1$ and all 
$k$-tuples $\tup a, \tup b \in A^k$ we have that
\[ (\mfA, \tup a) \equiv^{\ell \cdot k} (\mfA, \tup b) \text{ if, and only 
if, } \Gamma(\tup a) = \Gamma(\tup b).\]
In other words, the equivalence relation $\equiv^{\ell \cdot k}$ refines $k$-tuples in $\mfA$ up to orbits. 
Moreover, we say that a class $\mcK$ of structures is \emph{homogeneous} 
if each structure $\mfA \in \mcK$ is $\ell$-homogeneous 
for some fixed constant $\ell \geq 1$.
\end{defi}

\begin{thm} \label{thm:homogeneity}
For every prime $p$, the class $\CFIclass{\mcF}{p}$ is homogeneous.
\end{thm}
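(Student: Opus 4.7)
The backward direction is immediate: if $\pi\in\Gamma$ maps $\tup a$ to $\tup b$, then $\pi$ witnesses $(\mfA,\tup a)\cong(\mfA,\tup b)$, so the two tuples share every $\equiv^{m}$-type for all $m$. For the forward direction I seek a fixed constant $\ell$, depending only on $p$, such that any two $k$-tuples in distinct $\Gamma$-orbits are separated by a counting formula with $\ell k$ variables. My strategy is to exhibit an $\FPC$-interpretation using $\mcO(k)$ variables that computes a canonical ``orbit signature'' of any $k$-tuple; since any $\FPC$-formula with $m$ variables is equivalent to a counting formula with $m$ variables, unequal signatures will then give the required distinguishing formulas.

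\textbf{Orbit characterisation.} Since every $\pi\in\Gamma$ setwise fixes each edge class $e^p$, and edge classes are separated by the preorder $\preceq$ with only two variables, we may reduce, at the cost of $\mcO(1)$ extra variables, to the case in which $\tup a=((e_1,x_1),\dots,(e_k,x_k))$ and $\tup b=((e_1,y_1),\dots,(e_k,y_k))$ have matching edges. By the analysis in Section~\ref{sec:symmetriesCFI}, $\Gamma$ is the space of antisymmetric flows $\pi\in\field F_p^E$ satisfying $\sum_{e\in E(v)}\pi(e)=0$ for each $v$, acting by $\pi\cdot(e,x)=(e,x+\pi(e))$. Hence $\tup a$ and $\tup b$ lie in the same orbit iff the partial assignment $\pi(e_i):=y_i-x_i$ extends to a flow in $\Gamma$. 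A standard network-flow / homology argument shows such an extension exists iff, for every connected component $C$ of the subgraph $G\setminus\{e_1,\dots,e_k\}$, the signed boundary sum $\sum_{e_i\in\partial C}\varepsilon_i(y_i-x_i)$ vanishes in $\field F_p$, where $\varepsilon_i\in\{+1,-1\}$ records the orientation of $e_i$ with respect to $C$. The orbit of $\tup a$ is thus fully encoded by the function $C\mapsto \sum_{e_i\in\partial C}\varepsilon_i x_i \bmod p$ as $C$ ranges over these components.

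\textbf{Definability and main obstacle.} With $e_1,\dots,e_k$ as parameters, the partition of $V$ into connected components of $G\setminus\{e_1,\dots,e_k\}$ is $\FPC$-definable by a reachability fixed-point using only $\mcO(1)$ further variables, exploiting that $G$ is ordered. Given this partition, for each component $C$ the signed sum $\sum_{e_i\in\partial C}\varepsilon_i x_i\in\field F_p$ is computable via $p-1$ counting terms (one per nonzero residue), again reusing $\mcO(1)$ auxiliary variables. Comparing the resulting signatures for $\tup a$ and $\tup b$ yields an $\FPC$-formula with $\mcO(k)$ variables that decides orbit equality, which in turn furnishes the required distinguishing counting formulas in $\ell k$ variables for a constant $\ell=\ell(p)$. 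The main technical subtlety is holding the variable budget at $\mcO(k)$: although the number of components may grow linearly in $k$, we never introduce a fresh variable per component, because the reachability fixed-point handles all components in parallel and the mod-$p$ boundary sums are computed by parallel counting subterms that share a single constant pool of auxiliary variables.
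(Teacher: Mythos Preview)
There is a genuine gap in your argument, and it occurs precisely at the step you flag as routine. You write that for each component $C$ ``the signed sum $\sum_{e_i\in\partial C}\varepsilon_i x_i\in\field F_p$ is computable via $p-1$ counting terms (one per nonzero residue)''. This implicitly assumes that the field value $x_i$ can be read off from the element $a_i=(e_i,x_i)$ inside the CFI structure. It cannot. The vocabulary $\{\preceq,R,C,I\}$ contains no unary predicate or function that exposes the second coordinate: $\preceq$ only sees the edge class, $C$ only gives the \emph{cyclic} successor (so differences $x-x'$ within one class are accessible, but no absolute value), $I$ relates $(e,x)$ with $(e^{-1},-x)$, and $R$ only tells you whether three values sum to $\lambda(v)$. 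In fact any two elements of the same edge class lie in the same $\Gamma$-orbit, so no parameter-free formula can isolate a ``zero'' element of $e^p$; if it could, one would be able to read off $\lambda$ and distinguish non-isomorphic CFI structures in bounded-variable counting logic, contradicting Theorem~\ref{thm:cfi92}. Consequently your counting terms ``one per nonzero residue'' are not expressible as stated.

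What \emph{is} expressible is the signature relative to $\sum_{v\in C}\lambda(v)$, and only by threading the CFI relation $R$ and the inverse relation $I$ through the component $C$: given elements on two of the three edges at a vertex, $R$ determines the third, and $I$ lets you cross to the reverse edge. Carrying this out --- and controlling the number of variables when vertices of $C$ carry at most one parameter edge, so that the propagation cannot start without auxiliary ``guess'' elements --- is exactly where the work lies. The paper's proof handles this by invoking the expander property: whenever $\Stab(\tup a)\leq\Stab(b)$, the edges of $\tup a$ together with the edge of $b$ disconnect $G$, and expansion forces one of the resulting pieces to have size $\mcO(k)$; that small piece can then be described outright in $\LC^{\mcO(k)}$, which pins down $b$. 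Your outline neither uses expansion nor supplies an alternative mechanism for the propagation step, so as written it does not establish the theorem.
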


This theorem has been established very recently in~\cite{GraedelGroPagPak19}, and we refer to that paper for the full proof.
To give the reader some intuition, we briefly outline the proof strategy.
Assume that a CFI-structure $\mfA$ with a distinguished $k$-tuple $\ba \in A^k$ 
of elements is given.  Consider an element $b \in A$ 
that cannot be moved by any automorphism that fixes the tuple $\ba$, that is an 
element $b \in A$ such that the stabiliser group of the tuple $\ba$ 
is contained in the stabiliser group of the element $b$, 
formally: $ \Stab(\ba) \leq \Stab(b).$
In this situation the orbit of the element $b$ is trivial (given the elements 
$\ba$) and we need to show that the element $b \in A$ itself is definable in 
counting logic, using the elements in $\ba$ as parameters, with at most $\ell 
\cdot k$ many variables (the constant $\ell \geq 1$ depends on the underlying 
class $\mcF$ of expander graphs, more precisely on the expander constant 
$\varepsilon$).
The key insight is that if the tuple $\ba$ obstructs any automorphism that
moves $b$, then in the underlying expander graph the removal of the edges 
corresponding to the elements in $\ba$ and $b$ disconnects the graph.
Because of the expansion property it follows that the edges must be connected 
to some component which is small, where small means linearly bounded in $k$ 
(the constant for the linear bound depends on the expansion constant of the 
class $\mcF$). Since the component is small, its isomorphism type can be 
described in counting logic with $\mcO(k)$ many variables and we 
conclude that $b$ is indeed definable.

Homogeneity of CFI-structures is very useful because it 
implies that counting logic  (indeed, $\FPC$) can 
order $k$-tuples up to orbits using formulas with only a 
linear  number of variables. Indeed, by the above result, the counting-type formula 
$\ctype_{\ell \cdot k}(\tup x, \tup y) \in \FPC$ (see 
Section~\ref{sec:counting_logic}) 
defines a linear preorder on $k$-tuples which distinguishes between all pairs 
of $k$-tuples in different orbits, and it uses only $\mcO(\ell \cdot k)$ 
many variables.

One key consequence of homogeneity is that on the class of CFI
structures, the relations $\cequivk$ and $\IMequiv{k}{Q}$ coincide for
$k$ above some constant threshold.  Indeed,  $\IMequiv{k}{Q}$ is
always at least as fine as  $\cequivk$ and no finer than the
equivalence given by the partition into automorphism orbits.  When the
former and the latter are the same,  $\IMequiv{k}{Q}$ must be the
same.  In particular, this means that the counting-type formulas
$\ctype_{\ell \cdot k}(\tup x, \tup y) \in \FPC$  define a pre-order
on the  $\IMequiv{k}{Q}$ equivalence classes.

\section{Background on Associative Algebra}

In this section we present the required background on the structure theory 
of semisimple algebras and modules, following the monograph~\cite{Pierce82}.
The definitions and results  are certainly well-known in the field of 
associative algebra. However, since this is a paper in finite 
model theory, some readers may appreciate a detailed 
presentation of the algebraic background. 

Let us start with the central definition of an \emph{algebra}.
Although algebras are in general defined and studied  over commutative 
rings with unity, we consider here only algebras over fields.

\begin{defi}[Algebra]
 Let $\field{F}$ be a field. An $\field{F}$-algebra $A$ is a (non-trivial) ring with unity 
that is also an $\field{F}$-vector space and which additionally satisfies the 
identity $a(xy) = (ax)y = x(ay)$ for all $a \in \field{F}$ and $x,y \in A$ 
(intuitively, we require that the $\field{F}$-scalar multiplication of the 
vector space structure and the inner multiplication of the algebra are 
compatible).
\end{defi}

By definition, we only consider \emph{associative and unital} 
algebras, that is we require the algebra to be a ring with unity.
If  one defines algebras over commutative rings $R$ instead of fields $\field{F}$, then one needs
to replace the requirement that $A$ is an $\field{F}$-vector space by the 
requirement that $A$ is an $R$-module.
However, $\field{F}$-algebras provide much more structure than 
general $R$-algebras. Most importantly, $A$ is an $\field{F}$-vector space, 
rather than only an $R$-module, which means that powerful 
linear-algebraic machinery becomes available to us.
In particular, we can speak of the dimension $\dim(A)$ of an 
$\field{F}$-algebra $A$. In this paper all algebras will be \emph{$\field{F}$-algebras of 
finite dimension}. 
Note that while the dimension describes the structure of the underlying 
$\field{F}$-vector space up to isomorphism, due the presence of 
the inner multiplication operation on the elements of $A$, the 
dimension does certainly not characterise the whole algebra $A$ up to 
isomorphism.


\begin{defi}[Group algebra]
\label{def:groupalgebra}
 Let $G$ be a finite group and let $\field{F}$ be a field.
 Then the \emph{group algebra} $\field{F}[G]$ is the $\field{F}$-algebra whose elements 
are formal sums of the form $\sum_{g \in G} r_g g$ with coefficients $r_g 
\in \field{F}$ and such that 
\begin{itemize}
 \item addition and scalar multiplication are defined component-wise, and
 \item multiplication is defined by convolution based on the group 
operation in $G$, that is for $x = \sum_{g \in G} r_g g$ and $y = \sum_{g 
\in G} s_g g$ we have 
\[ x \cdot y = \sum_{g \in G} \big( \sum_{h_1 \cdot h_2 = g} r_{h_1} 
\cdot r_{h_2} \, \big) \, g.\]
\end{itemize}
\end{defi}
We remark that this definition can be generalised to cover the case
of infinite groups $G$ and even infinite monoids $G$. However, in this 
paper we will not require this more general form of group algebras.
Note that since we assume that the group $G$ is finite, all 
group algebras $\field{F}[G]$ that we consider are finite-dimensional 
$\field{F}$-algebras.

\begin{defi}[Matrix algebra]
 Let $A$ be an $\field{F}$-algebra and let $I$ be a non-empty (finite) set.
 Then we denote by $\MatAlg{I}{A}$ the \emph{$\field{F}$-matrix algebra} which 
consists of all $(I \times I)$-matrices with entries in $A$ and for which
(matrix) addition and multiplication and scalar multiplication are defined 
in the usual way.
\end{defi}
Again, we will not need this definition in its full generality. 
In fact, we will only encounter the special case of $\field{F}$-matrix algebras 
$\MatAlg{I}{\field{F}}$ where the entries of the matrices lie in some field $\field{F}$ 
(note that each field $\field{F}$ is an $\field{F}$-algebra over itself).
Such algebras are again finite-dimensional $\field{F}$-algebras.

\subsection{Simple and Semisimple Modules}

We now go a step further 
and consider \emph{modules over algebras}.
Our  goal is to characterise the structure of semisimple modules 
over finite-dimensional algebras and to formulate \emph{Maschke's Theorem} 
which gives a sufficient condition for modules over group algebras 
to be semisimple, see~\cite[Section 2]{Pierce82}.

Before we proceed with more definitions, let us discuss the 
prototype setting for algebras and modules that we are interested in.
Let $\field{F}$ be a field and let $I$ and $J$ be two non-empty (finite) sets.
We have introduced the $\field{F}$-algebra $\MatAlg{I}{\field{F}}$ 
consisting of all $(I \times I)$-matrices with entries in $\field{F}$ above.
Now consider the set $\MatMod{I}{J}{\field{F}}$ consisting of all $(I \times 
J)$-matrices with entries in $\field{F}$. Clearly this set forms an $\field{F}$-vector 
space as well, but, in contrast to $\MatAlg{I}{\field{F}}$, the standard matrix 
multiplication operation is not defined for pairs of $(I \times 
J)$-matrices. Hence we do not obtain an $\field{F}$-algebra structure on 
$\MatMod{I}{J}{\field{F}}$, since we are missing a multiplication operation. 
However, we can clearly multiply matrices of the $\field{F}$-algebra 
$\MatAlg{I}{\field{F}}$ from the left to matrices in $\MatMod{I}{J}{\field{F}}$.
This means that the structure of $\MatMod{I}{J}{\field{F}}$ is not only that of an 
$\field{F}$-vector space, but it obtains, with the additional (left) 
multiplication by elements from the $\field{F}$-algebra $\MatAlg{I}{\field{F}}$, the 
structure of a $\MatAlg{I}{\field{F}}$-module.
The algebras and modules that we consider in this paper arise as 
subalgebras and submodules of these prototype matrix algebras and modules.
Since each $\field{F}$-algebra $A$ is also a ring with unity, the notion of an 
$A$-module coincides with the usual definition of modules over rings.
For completeness we give this definition here from the viewpoint of 
algebras.

\begin{defi}[Module]
 Let $\field{F}$ be a field and let $A$ be an $\field{F}$-algebra.
 Then a (left) $A$-module $M$ is an Abelian group $(M,+)$ together with a
 multiplication operation $A \times M \to M$ which satisfies the following 
 for $a, b \in A$ and $x, y \in M$:
 \begin{itemize}
  \item $a (x + y) = ax + ay$
  \item $(a + b) x = ax + bx$
  \item $(ab) x = a(bx)$
  \item $1 \cdot x = x$ (where $1$ is the neutral element for 
multiplication in $A$).
 \end{itemize}
\end{defi}
As there is a natural embedding of the field $\field{F}$ into the $\field{F}$-algebra $A$, 
via $x \mapsto x \cdot 1$, it follows that every $A$-module is also an 
$\field{F}$-vector space.
Note that whenever we speak of a module in this paper, we implicitly 
refer to a left module. We refrain from introducing further notions such as submodules, 
module homomorphisms, direct sums of 
modules, and so on, as these are straightforward adaptations 
of the related notions for, say, vector spaces.
We next consider the important classes of simple and, more 
generally, semisimple modules. 
\begin{defi}
 An $A$-module $M$ is \emph{simple} if every submodule $N$ of $M$ is 
trivial, i.e.\ $N = 0$ or $N = M$. 
Moreover, an $A$-module $M$ is \emph{semisimple} if it is a direct sum of 
simple modules. The corresponding notions for an $\field{F}$-algebra $A$ are 
defined by considering the algebra as an $A$-module over itself.
\end{defi}
Intuitively, a module is simple if it is a basic building block 
that cannot be refined any further. More formally, we say that an 
$A$-module $M$ is \emph{indecomposable} if whenever $M = S \oplus T$ for 
submodules $S, T$, then $S = 0$ or $T = 0$.
\begin{thm}
 A semisimple $A$-module $M$ is simple if, and only if, $M$ is 
indecomposable.
\end{thm}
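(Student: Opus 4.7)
The plan is to prove the two directions separately, observing that one direction is essentially trivial while the other requires the semisimplicity hypothesis in an essential way.

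For the forward direction ($\Rightarrow$), I would argue directly, without using semisimplicity. Suppose $M$ is simple and $M = S \oplus T$ for submodules $S, T$. Since the only submodules of $M$ are $0$ and $M$, each of $S$ and $T$ is either $0$ or all of $M$. If $S = M$, then, because the sum is direct, $T = S \cap T = 0$; symmetrically, if $T = M$ then $S = 0$. Hence $M$ is indecomposable. (This uses the standing convention that a simple module is nonzero, which avoids the degenerate possibility $S = T = M = 0$.)

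For the backward direction ($\Leftarrow$), I would use the semisimplicity assumption to write $M$ as a direct sum $M = \bigoplus_{i \in I} M_i$ of simple submodules $M_i$. The key move is to pick any $i_0 \in I$ and split the sum:
\[
M \;=\; M_{i_0} \,\oplus\, \bigoplus_{i \in I \setminus \{i_0\}} M_i.
\]
Since $M$ is indecomposable, one of the two summands must be $0$. By convention each simple $M_{i_0}$ is nonzero, so the second summand vanishes, which forces $I = \{i_0\}$ and hence $M = M_{i_0}$ is simple.

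I do not expect any substantial obstacle: both directions are essentially a bookkeeping exercise with the definitions, and the only subtle point is the convention that simple modules are nonzero, which I would state explicitly at the start if it has not been fixed earlier. The backward direction is the one where the semisimplicity hypothesis is used, and it is used in the most direct way possible, namely to provide a decomposition into simples which indecomposability then collapses to a single summand.
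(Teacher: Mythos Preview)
Your argument is correct and is the standard textbook proof. Note, however, that the paper does not actually supply a proof of this statement: it is recorded as a background fact from the theory of associative algebras, with the reader referred to Pierce's monograph. There is therefore nothing in the paper to compare against, but your proof is exactly the argument one would expect to find there.

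One tiny point worth making explicit: in the backward direction you tacitly assume $I \neq \emptyset$, i.e.\ $M \neq 0$, when you pick $i_0$. The zero module is semisimple (as an empty direct sum) and vacuously satisfies the paper's stated definition of indecomposability, yet it is not simple. The standard remedy is to require indecomposable modules to be nonzero, just as you already note for simple modules; stating both conventions up front removes this edge case entirely.
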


A key property of semisimple modules is that submodules have 
complements. More precisely, let $M$ be an $A$-module and let $N$ be a 
submodule of $M$. Then a \emph{complement of $N$ in $M$} is a submodule 
$P$ of $M$ such that
$M = N \oplus P$, i.e.\  $M = N + P$ and $N \cap P = 0$.
As it turns out, in a semisimple module each submodule has a complement. 
If we think of vector spaces, then this should sound quite familiar. 
Indeed, also in a vector space each subspace has a complement.
However, in contrast to vector spaces, this property is not shared by 
every module. In fact, it rather leads to an alternative characterisation 
of the notion of a semisimple module.
\begin{thm}[Complements in semisimple modules]
 An $A$-module $M$ is semisimple if, and only if, every submodule of $M$ 
has a complement in $M$.
\end{thm}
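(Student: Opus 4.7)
The plan is to handle each direction separately. The forward direction is a direct Zorn's-lemma argument, while the converse requires first a hereditariness observation and then the extraction of simple submodules, which is where the main subtlety lies.

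For the forward direction, suppose $M = \bigoplus_{i \in I} S_i$ with each $S_i$ simple, and let $N \subseteq M$ be a submodule. I would apply Zorn's lemma to the family $\mathcal{F}$ of subsets $J \subseteq I$ satisfying $N \cap \bigoplus_{j \in J} S_j = 0$; this family is non-empty and closed under unions of chains, so it has a maximal element $J$. Setting $P = \bigoplus_{j \in J} S_j$, we have $N \cap P = 0$ by construction. To see that $N + P = M$, it suffices to show $S_i \subseteq N + P$ for each $i \in I$. For $i \in J$ this is immediate; for $i \notin J$, maximality of $J$ forces $N \cap (S_i \oplus P) \neq 0$, producing a nonzero element of $(N+P) \cap S_i$, and since $S_i$ is simple this intersection must equal all of $S_i$.

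For the converse, I would first note a hereditary property: if every submodule of $M$ has a complement, then the same holds inside every submodule $N \subseteq M$, because given $K \subseteq N$ with complement $K'$ of $K$ in $M$, the intersection $N \cap K'$ complements $K$ inside $N$ (via the modular law). The crucial lemma is then that every nonzero submodule $N$ of $M$ contains a simple submodule. To prove this, pick $0 \neq x \in N$, consider the cyclic submodule $Ax$, and use Zorn's lemma to choose a submodule $K \subseteq Ax$ maximal with respect to $x \notin K$; then $Ax/K$ is simple, since every nonzero submodule of $Ax/K$ pulls back to a submodule of $Ax$ strictly containing $K$, hence containing $x$, hence equal to $Ax$. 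Applying hereditariness inside $Ax$, the submodule $K$ has a complement $L$ in $Ax$, and $L \cong Ax/K$ is a simple submodule of $N$. This lemma is the main obstacle, since it packages the whole reason why the complement hypothesis yields finer structure than a merely vector-space-like decomposition.

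With the lemma in hand, let $S$ denote the sum of all simple submodules of $M$, and let $T$ be a complement of $S$ in $M$. By hereditariness $T$ inherits the complement property, so if $T \neq 0$ the lemma would supply a simple submodule of $T$, which would then lie in both $S$ and $T$, contradicting $S \cap T = 0$; hence $M = S$. Finally, to exhibit $M$ as a direct sum of simple modules, I would apply Zorn's lemma to families of simple submodules whose sum is direct, obtaining a maximal such family with direct sum $P$. Letting $T'$ complement $P$ in $M$, if $T' \neq 0$ then $T'$ contains a simple submodule $S'$, which can be adjoined to the family (since $S' \cap P \subseteq T' \cap P = 0$), contradicting maximality. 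Thus $M = P$ is a direct sum of simple submodules, completing the proof.
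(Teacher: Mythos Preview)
The paper does not actually prove this theorem: it is stated without proof as part of the background section on associative algebra, with the reader referred to Pierce's monograph. Your proof is correct and is precisely the standard argument one finds in Pierce or any comparable textbook, so there is nothing to contrast here.
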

Although complements in semisimple modules always exist, 
they are clearly not unique (not even in the case 
of vector spaces). 

We can now describe the structure of semisimple 
modules as follows. For a semisimple $A$-module $M$ let $S(M)$ denote a set of representatives 
for the simple submodules of $M$ (up to isomorphism).
Then $M = \bigoplus_{N \in S(M)}  N^{\alpha(N)}$ where the 
multiplicities $\alpha(N) \geq 1$ of the simple submodules $N$ are cardinal 
numbers (but since we are here only dealing with finite modules the $\alpha(N)$ are just natural numbers).
Moreover, if we consider another $A$-module $M'$ with the same set 
$S(M')=S(M)$ of representatives of simple submodules, 
then $M$ and $M'$ are isomorphic if, and only if, $M' = \bigoplus_{N \in 
S(M)} N^{\beta(N)}$ and $\alpha(N) = \beta(N)$ for all $N \in S(M)$.
Thus the multiset of simple submodules that occur in (any) decomposition 
of the module (up to isomorphism) characterises its isomorphism 
class uniquely.

\subsection{Semisimple Algebras and Maschke's Theorem}
\label{sec:maschke}
So far we considered simple and semisimple $A$-modules over $\field{F}$-algebras 
$A$. We now turn our attention to the algebras $A$ themselves.
As pointed out above, any $\field F$-algebra $A$ can naturally be considered as 
an $A$-module over itself. We follow~\cite{Pierce82} and denote this 
$A$-module by ${}_A A$. Hence, we can use 
the same terminology that we established for modules also in the realm of 
algebras.
Understanding the structure of a semisimple algebra $A$ is quite valuable. 
Most importantly, it suffices in order to understand the structure of 
\emph{any} $A$-module $M$:
\begin{thm}[Modules over semisimple algebras]
\label{thm:semisimple}
 Let $A$ be a semisimple algebra. 
 Then every $A$-module is semisimple. 
 Moreover, we can decompose the algebra~$A$, again considered as an 
$A$-module, into a  finite direct sum of (some of its) simple submodules 
$N_1 \oplus \cdots \oplus N_m$.
 It then holds that \emph{every} simple $A$-module is isomorphic to one of 
 the $A$-submodules $N_i$ of $A$.
 As a result, the number of isomorphism types of simple $A$-modules is 
 finite.
\end{thm}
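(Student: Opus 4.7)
The plan is to establish the three assertions in order, using the fact that semisimplicity of a module is equivalent to the property that every submodule admits a complement (Theorem on Complements stated just above).

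First I would prove that every $A$-module is semisimple. The key lemma is that the class of semisimple modules is closed under taking submodules, quotients, and arbitrary direct sums; each of these follows directly from the complement characterisation by a short argument (for quotients: a submodule of $M/N$ pulls back to a submodule of $M$, which has a complement, and the image of that complement is a complement in $M/N$; for direct sums: glue complements coordinate-wise). Once this is in hand, any $A$-module $M$ is a quotient of a free $A$-module $F = \bigoplus_{i \in I} {}_A A$. Since ${}_A A$ is semisimple by hypothesis, $F$ is a direct sum of semisimples and hence semisimple, and therefore so is its quotient $M$.

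Next I would produce the decomposition of ${}_A A$. By semisimplicity, ${}_A A$ is a direct sum of simple submodules, say ${}_A A = \bigoplus_{j \in J} N_j$. To see that $J$ can be taken finite, write $1 = \sum_{j \in J_0} e_j$ for some finite subset $J_0 \subseteq J$ with $e_j \in N_j$. Then for every $a \in A$ we have $a = a \cdot 1 = \sum_{j \in J_0} a e_j \in \bigoplus_{j \in J_0} N_j$, so $A = N_1 \oplus \cdots \oplus N_m$ with $m = |J_0|$. (In our setting $A$ is finite-dimensional, so finiteness is immediate anyway.)

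Finally, for the third assertion, let $M$ be a simple $A$-module and pick any nonzero $x \in M$. The map $\varphi \colon {}_A A \to M$ defined by $\varphi(a) = a x$ is an $A$-module homomorphism whose image is a nonzero submodule of $M$, hence equal to $M$ by simplicity. Restricting $\varphi$ to each $N_i$ in the decomposition $A = N_1 \oplus \cdots \oplus N_m$ gives homomorphisms $\varphi_i \colon N_i \to M$ whose images sum to $M$, so at least one $\varphi_i$ is nonzero. Since $N_i$ is simple, the kernel of $\varphi_i$ is either $N_i$ or $0$; being nonzero, it must be $0$, so $\varphi_i$ is injective, and since the image is a nonzero submodule of the simple module $M$ it is all of $M$. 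Hence $\varphi_i$ is an isomorphism and $M \cong N_i$. The finiteness of the set of isomorphism types of simple $A$-modules then follows from the finiteness of $\{N_1, \dots, N_m\}$.

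The only step that requires care is the closure of semisimplicity under submodules, quotients, and direct sums; this is the technical heart, but it is standard and reduces to manipulating complements as in the Complements theorem. The rest is a one-line application of the simplicity of $M$ combined with the pigeonhole argument across the decomposition of ${}_A A$.
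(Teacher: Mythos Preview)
The paper does not prove this theorem at all: it is stated in Section~5 as part of the algebraic background drawn from Pierce's monograph, with no proof environment following it. So there is no ``paper's own proof'' to compare against.

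Your argument is correct and is essentially the standard textbook proof one would find in Pierce. One small remark: for closure of semisimplicity under direct sums you write ``glue complements coordinate-wise,'' but a submodule of $\bigoplus_i M_i$ need not decompose coordinate-wise, so that phrase is misleading. The honest argument is simpler anyway: by the paper's definition, semisimple means ``direct sum of simples,'' so a direct sum of semisimple modules is trivially semisimple. Closure under submodules and quotients does require the complement characterisation (via the modular law for submodules, and pushing complements forward for quotients), and you handle those correctly. The rest of your proof---writing $1$ as a finite sum to get the finite decomposition of ${}_A A$, and the Schur-type argument showing any simple $M$ receives a nonzero map from some $N_i$---is clean and complete.
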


The structure of semisimple algebras is characterised by 
\emph{Wedderburn's Theorem}. It states that a semisimple algebra 
can be expressed as a finite sum of matrix algebras over 
appropriate division algebras in a \emph{unique} way.
We do not need this structure theorem in our paper and the 
interested reader is referred to~\cite[Section 3.4]{Pierce82} for more 
details. Instead, the our most important tool will be \emph{Maschke's Theorem}
which tells us that semisimple algebras occur naturally in the context of 
algebras over finite groups.

\begin{thm}[Maschke]
\label{thm:maschke}
 Let $G$ be a finite group and let $\field{F}$ be a field. The 
group algebra $\field{F}[G]$ is semisimple if, and only if, the characteristic 
of $\field{F}$ does not divide the order of $G$.
\end{thm}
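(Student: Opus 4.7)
The plan is to use the characterisation of semisimplicity in terms of complements (the theorem on complements in semisimple modules stated in the excerpt): an $A$-module $M$ is semisimple if, and only if, every submodule of $M$ has a complement. Hence, to prove the forward direction, assuming $\characteristic(\field{F}) \nmid |G|$, it suffices to show that every submodule $N$ of any $\field{F}[G]$-module $M$ admits an $\field{F}[G]$-submodule complement, and then specialise to $M = {}_{\field{F}[G]} \field{F}[G]$.

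For this forward direction I would use the classical group-averaging trick. Since $M$ is in particular an $\field{F}$-vector space and $N$ an $\field{F}$-subspace, there is an $\field{F}$-linear projection $\pi\colon M \to N$ (i.e.\ $\pi|_N = \mathrm{id}_N$). Because $|G|$ is invertible in $\field{F}$ by hypothesis, we may define
\[
 \tilde\pi(x) \;=\; \frac{1}{|G|} \sum_{g \in G} g\,\pi(g^{-1}x).
\]
The key verifications are: (i) $\tilde\pi$ is $\field{F}$-linear and takes values in $N$ (since each $g\pi(g^{-1}x) \in N$ as $N$ is an $\field{F}[G]$-submodule); (ii) $\tilde\pi$ restricts to the identity on $N$ (for $y \in N$, each term equals $y$ after substituting the change of summation index); (iii) $\tilde\pi$ is $\field{F}[G]$-linear, which follows by a change of variable $g \mapsto hg$ in the defining sum for any $h \in G$. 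Thus $\tilde\pi$ is an $\field{F}[G]$-module projection onto $N$, and $\ker(\tilde\pi)$ is the desired $\field{F}[G]$-submodule complement of $N$ in $M$.

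For the converse, I would prove the contrapositive: if $p := \characteristic(\field{F})$ divides $|G|$, then ${}_{\field{F}[G]} \field{F}[G]$ is not semisimple, by exhibiting a submodule with no complement. Let $s = \sum_{g \in G} g \in \field{F}[G]$. Since $hs = s$ for every $h \in G$, the line $\field{F} s$ is a one-dimensional $\field{F}[G]$-submodule, and moreover $s^2 = |G|\, s = 0$ in $\field{F}[G]$ because $p \mid |G|$. Suppose for contradiction that $\field{F}[G] = \field{F} s \oplus P$ for some $\field{F}[G]$-submodule $P$. Writing $1 = cs + q$ with $c \in \field{F}$ and $q \in P$, and multiplying on the left by $s$, we obtain
\[
 s \;=\; s\cdot 1 \;=\; c\,s^2 + sq \;=\; sq \;\in\; P,
\]
since $P$ is a submodule. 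But also $s \in \field{F} s$, so $s \in \field{F} s \cap P = 0$, contradicting the fact that the basis $\{g : g \in G\}$ is $\field{F}$-linearly independent in $\field{F}[G]$.

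The main obstacle is really just the forward direction, and within it the verification that the averaged map $\tilde\pi$ is $\field{F}[G]$-linear; this is the point at which the hypothesis $\characteristic(\field{F}) \nmid |G|$ is genuinely used (both to make $1/|G|$ meaningful and, via the reindexing $g \mapsto hg$, to transfer the $G$-action across $\tilde\pi$). Everything else is essentially bookkeeping, and the characterisation of semisimple modules by the existence of complements converts the projection $\tilde\pi$ directly into the semisimplicity of $\field{F}[G]$ as a module over itself, which by definition is what it means for $\field{F}[G]$ to be a semisimple algebra.
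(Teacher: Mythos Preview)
Your proof is correct and is the standard textbook argument for Maschke's theorem. However, the paper does not actually prove this statement: Theorem~\ref{thm:maschke} is presented in Section~\ref{sec:maschke} purely as background from associative algebra, with a reference to Pierce's monograph~\cite{Pierce82}, and no proof is given in the paper itself. So there is nothing to compare your argument against; you have simply supplied a (valid) proof where the paper chose to cite one.
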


\section{The Simultaneous Matrix Similarity Problem}
\label{sec:simmatsim}
We argued in Section~~\ref{sec:IMEquiv} that the equivalence relation
$\IMequiv{k}{Q}$ is decidable in time $|Q|n^{O(k)}$.  This is based on
the fact that the relation can be obtained by an iterated refinement
process that takes $n^k$ steps where, at each step, we have to perform
at most $n^{2k}$ tests for simultaneous similarity over $\field{F}_p$
for each $p \in Q$.  Crucially, checking for simultaneous matrix
similarity is itself in polynomial time.  Indeed, Chistov et
al.~\cite{ChistovIK97} describe a polynomial-time algorithm that
achieves this for all $p$.

The algorithm of~\cite{ChistovIK97} works by reducing simultaneous
matrix similarity to module isomorphism and this is the reason for our
interest in semisimple algebras and modules.  As we show, the
structure of the modules of interest is particularly simple when we
are considering the CFI-structures $\CFIgraph{G}{p}{\lambda}$ and
simultaneous similarity of matrices with respect to $\field{F}_q$
where $q$ is co-prime with $p$.  In this case, we are able to show
how a module isomorphism test can be implemented in counting logic. 
Towards this end, in this section, we develop the algebraic machinery
behind the algorithm of Chistov et al.~\cite{ChistovIK97}.

\subsection{Matrix Similarity and Modules}\label{sec:modules}
Let $\field{F}$ be a field and let $K$ be a (non-empty and finite) set.
Consider two families of $K$-indexed matrices $\mcM = \{ M_k : k \in K \}$ 
and $\mcN = \{ N_k : k \in K \}$ where the matrices $M_k$ are $I \times 
I$-matrices over~$\field{F}$ and the matrices $N_k$ are $J \times J$-matrices 
over~$\field{F}$ and such that $I$ and $J$ are index sets of the same size.
For the \emph{Simultaneous Matrix Similarity Problem (over the field 
$\field{F}$)}, or \emph{\sms} for short, we ask whether there exists an
\emph{invertible} $I \times J$-matrix $S$ over $\field{F}$ such that 
\emph{simultaneously} for all $k \in K$ it holds that
$M_k S = S N_k$. 
In other words we are asking for a similarity 
transformation which simultaneously maps the matrices $A_k$ to the 
matrices $B_k$ with corresponding indices $k \in K$.
If such a matrix $S$ exists, then we say that the matrix families $\mcM$ 
and $\mcN$ are \emph{simultaneously similar} over $\field{F}$.

A small remark is in place about our choice of working with two different 
index sets $I$ and $J$. In fact, note that $I$ and $J$ need to have 
the same size, as otherwise the problem would be ill-posed. Hence, 
without changing the problem as such, we could identify the sets $I$ 
and $J$ by fixing any bijection between $I$ and $J$ beforehand. This would 
not only simplify our notation, but it would also turn the similarity 
transformation $S$ into a square matrix. The advantage of the latter would 
be that we didn't have to deal with two-sided inverses for example.
A presentation with a single index set would be more compatible 
with the (algebraic) literature as well, as in~\cite{ChistovIK97}.
However, we stick to the setting of having families of matrices with two 
different index sets $I$ and $J$. The reason is that in our finite-model 
theoretic framework, considering definability in $\FPC$, we have no  means of 
selecting a bijection between the sets $I$ and $J$. Indeed, in general 
there is no canonical, that means isomorphism invariant, 
bijection between the sets $I$ and $J$.  If we had access to 
any (non-canonical) bijection between $I$ and $J$ in our logics, this 
would trivialise most of the problems that we study in this paper.

Let us see how the \sms-problem is connected to the structure of algebras 
and modules. The following exposition is based on~\cite{ChistovIK97}.
We define the set $\HomMat \mcM \mcN$ of $I \times J$-matrices 
$X$ over $\field{F}$ which satisfy $M_k X = X N_k$ for all $k \in K$.
Note that $\HomMat \mcM \mcN$ is an $\field{F}$-vector space.
Next, we turn this vector space into a module over an $\field{F}$-algebra.
To this end, consider the set $\StabMat \mcM$ of  $I \times I$-square 
matrices $Z$ over $\field{F}$ such that $M_k Z = Z M_k$ for all $k \in K$. The set 
$\StabMat \mcM$ is called the \emph{centraliser} of the matrix family 
$\mcM$.
It is easy to verify that $\StabMat \mcM$ forms an $\field{F}$-algebra.
Moreover, by considering matrix multiplication (from the left) by 
elements from $\StabMat \mcM$, the $\field{F}$-vector space $\HomMat 
\mcM \mcN$ turns into a $\StabMat \mcM$-module indeed.
The next observation from~\cite{ChistovIK97} establishes a necessary 
condition for matrix families to be simultaneously similar.
To state the criterion we restrict ourselves to the context of matrix 
algebras, but the result remains valid in general algebras and 
modules, see~\cite{ChistovIK97}.
To state the result we first need to introduce the following notion.

\begin{defi}
 Let $A$ be an $\field{F}$-algebra and $M$ be an $A$-module. The module $M$ 
is called \emph{cyclic} if it is generated by a single element, that is 
if $Am = \{ am : a \in A\} = M$ for some $m \in M$.
\end{defi}

\begin{lem}[\cite{ChistovIK97}]
\label{lemma:suff:sms-solution}
 If $\HomMat{\mcM}{\mcN}$ contains an invertible matrix, that is if $\mcM$ 
and $\mcN$ are simultaneously similar, then $\HomMat{\mcM}{\mcN}$ is 
cyclic (as a $\StabMat{\mcM}$-module) and every generator is an invertible 
matrix.
\end{lem}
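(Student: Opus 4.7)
The plan is to fix an invertible $S\in\HomMat{\mcM}{\mcN}$ and show directly that $S$ generates $\HomMat{\mcM}{\mcN}$ as a left $\StabMat{\mcM}$-module, and then derive invertibility of an arbitrary generator from this.

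First, I would record the key algebraic identity that makes the proof go through. Since $S$ is invertible and $M_k S = S N_k$ for every $k\in K$, multiplying on the right by $S^{-1}$ yields $M_k = S N_k S^{-1}$, equivalently $N_k S^{-1} = S^{-1} M_k$ and $S^{-1} M_k S = N_k$. I would then consider the $\field{F}$-linear map
\[ \phi \colon \StabMat{\mcM} \to \HomMat{\mcM}{\mcN}, \qquad Z \mapsto Z S. \]
A short check shows this is well-defined: if $Z$ commutes with all $M_k$, then $M_k(ZS) = Z M_k S = Z S N_k = (ZS) N_k$. To show surjectivity, I would pick any $X \in \HomMat{\mcM}{\mcN}$ and set $Z := X S^{-1}$. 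Then $Z$ commutes with $M_k$ because
\[ M_k Z = M_k X S^{-1} = X N_k S^{-1} = X S^{-1} M_k = Z M_k, \]
using the identity $N_k S^{-1} = S^{-1} M_k$ from above. Thus $Z\in\StabMat{\mcM}$ and $X = Z S = \phi(Z)$, so $\phi$ is surjective. This proves that $S$ generates $\HomMat{\mcM}{\mcN}$ as a $\StabMat{\mcM}$-module, i.e., the module is cyclic.

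Next, I would show that \emph{every} generator $T\in\HomMat{\mcM}{\mcN}$ is invertible. Since $T$ generates and $S$ lies in the module, there exists $Z\in\StabMat{\mcM}$ with $S = Z T$. Conversely, since $S$ also generates (by what we just proved) and $T$ lies in the module, there exists $W\in\StabMat{\mcM}$ with $T = W S$. Substituting gives $S = Z W S$, and because $S$ is invertible we may cancel to obtain $Z W = I_I$ in the $\field{F}$-algebra $\MatAlg{I}{\field{F}}$. In the full matrix algebra over a field, a one-sided inverse of a square matrix is automatically a two-sided inverse, so $Z$ and $W$ are mutually inverse in $\MatAlg{I}{\field{F}}$. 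Hence $T = W S$ is a product of two invertible $I\times J$-matrices (with the convention that $|I|=|J|$, so that $S^{-1}W^{-1}$ is an honest two-sided inverse of $T$), and therefore $T$ itself is invertible.

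The argument is essentially formal once the identity $S^{-1} M_k S = N_k$ is in hand; there is no real obstacle, only the small subtlety of making sure that ``invertible'' for rectangular matrices is used consistently with $|I|=|J|$ and that one-sided inverses in a finite-dimensional matrix algebra are two-sided, which is what allows the passage from $ZW = I_I$ to the invertibility of $W$ (and hence of $T$).
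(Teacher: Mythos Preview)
Your proof is correct and follows essentially the same approach as the paper: fix an invertible $S$, and for arbitrary $X\in\HomMat{\mcM}{\mcN}$ check that $XS^{-1}\in\StabMat{\mcM}$ using the identity $N_kS^{-1}=S^{-1}M_k$, which gives cyclicity with generator $S$. For the second claim the paper is slightly more direct: from $ZT=S$ with $S$ invertible it immediately concludes that $T$ has full rank by the rank inequality, whereas you additionally invoke $T=WS$ and the fact that one-sided inverses in $\MatAlg{I}{\field F}$ are two-sided---both arguments are fine.
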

\begin{proof}
 Fix an invertible matrix $X \in \HomMat{\mcM}{\mcN}$. We show that 
$\StabMat{\mcM} \cdot X = \HomMat{\mcM}{\mcN}$.
 Let us denote by $\idmat I$ the $I \times I$-identity matrix and by 
$\idmat J$ the $J \times J$-identity matrix.
Fix an $J \times I$-matrix $X^{-1}$ such that $X X^{-1} = \idmat I$ 
and $X^{-1} X = \idmat J$.
Let $Y \in \HomMat{\mcM}{\mcN}$.
We claim that $YX^{-1} \in \StabMat{\mcM}$.
First, note that since $M_k X = X N_k$ we have $X^{-1} M_k X X^{-1} = 
X^{-1} X N_k X^{-1}$, and hence $X^{-1} M_k = N_k X^{-1}$.
Thus we have $M_k YX^{-1} = Y N_k X^{-1} = Y X^{-1} M_k$ which proves our 
claim. Hence $Y X^{-1} X = Y \in \StabMat{\mcM} \cdot X$.
Of course, if $\StabMat{\mcM} X' = \HomMat{\mcM}{\mcN}$, then $X'$ has to 
be invertible since $Z X' = X$ with $X$ being invertible requires that $Z$ 
and $X'$ are invertible (for instance, this follows from the rank 
inequality).
\end{proof}

The above result only gives a sufficient criterion for the 
existence of an invertible matrix in $\HomMat{\mcM}{\mcN}$. 
Indeed, if the module $\HomMat{\mcM}{\mcN}$ is not cyclic, then 
we know that there does not exist an invertible matrix in 
$\HomMat{\mcM}{\mcN}$.
However, if the module is cyclic, then we still have to check whether some
(or, as we know by Lemma~\ref{lemma:suff:sms-solution}, in the positive 
case, 
each) generator is an invertible matrix or not.
In the end we would like to be able to reduce the \sms-problem to 
the module isomorphism problem. The idea is that the cyclicity of a module 
is determined by its isomorphism type. Hence, if, in turn, cyclicity 
would characterise the existence of an invertible matrix, then we
would be done.
But, unfortunately, this last assertion does not hold in general.
However, luckily, for our applications to CFI-structures, it indeed turns out that the 
module $\HomMat{\mcM}{\mcN}$ can only be cyclic if it is generated by an 
invertible matrix.

To sum up, our next aim is to establish sufficient criteria that 
allow us to answer the \sms-problem purely by looking at the isomorphism 
type of $\HomMat{\mcM}{\mcN}$, specifically by considering the cyclicity 
of this module.
Before we proceed, let us explain how we can determine whether a 
module is cyclic or not for the case of a semisimple module.

\begin{lem}
\label{lemma:cyclic-iso}
 Let $A$ be a semisimple $\field{F}$-algebra and let $M$ be an $A$-module.
 Let $A_1, \dots, A_s$ be the simple $A$-submodules of ${}_A A$ and assume 
that 
 $_A{}A \approx A_1^{n_1} \oplus \cdots \oplus A_s^{n_s}$ for some $n_1, 
\dots, 
n_s \geq 1$.
 Since $A$ is semisimple, the $A$-module $M$ is semisimple and we have
 $M \approx A_1^{m_1} \oplus \cdots \oplus A_s^{m_s}$ for some $m_1, 
\dots, 
m_s \geq 0$.
The $A$-module $M$ is cyclic if, and only if, $m_i \leq n_i$ for all $1 
\leq i 
\leq s$.
\end{lem}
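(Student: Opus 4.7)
The plan is to prove both directions using the fact that, over a semisimple algebra, cyclic modules correspond exactly to direct summands (up to isomorphism) of ${}_A A$. The key idempotent-based observation is that if $M'$ is a left-ideal direct summand of ${}_A A$, say ${}_A A = M' \oplus N$ with $1 = e + f$ for $e \in M'$, $f \in N$, then $M' = Ae$ is cyclic with generator $e$: for any $a \in A$ we have $a = ae + af$, and since $ae \in M'$ and $af \in N$ (because $M'$ and $N$ are left ideals), this coincides with the unique decomposition of $a$ along $M' \oplus N$, so $M' = \{ae : a \in A\} = Ae$.

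For the forward direction, suppose $M = Ax$ is cyclic. The map $\varphi \colon {}_A A \to M$, $a \mapsto ax$, is a surjective homomorphism of $A$-modules, so $M \cong A/\ker\varphi$. Because $A$ is semisimple (and hence every submodule has a complement by the theorem stated in Section~\ref{sec:maschke}), we have $A = \ker\varphi \oplus N$ for some left ideal $N$, which gives $M \cong N$. Decomposing $N \cong \bigoplus_i A_i^{m_i'}$ and $\ker\varphi \cong \bigoplus_i A_i^{k_i}$ into isotypic components, the uniqueness of the simple-module multiplicities in a semisimple decomposition forces $m_i' + k_i = n_i$, so $m_i = m_i' \leq n_i$.

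For the converse, suppose $m_i \leq n_i$ for all $i$. Then $M \cong \bigoplus_i A_i^{m_i}$ embeds into ${}_A A \cong \bigoplus_i A_i^{n_i}$ via the inclusion that sends the $m_i$ copies of $A_i$ in $M$ into the first $m_i$ of the $n_i$ copies of $A_i$ in ${}_A A$. The image $M'$ is a submodule of ${}_A A$ with complement $\bigoplus_i A_i^{n_i - m_i}$, so ${}_A A = M' \oplus N$ as $A$-modules. By the idempotent observation from the first paragraph, $M' = Ae$ for $e$ the $M'$-component of $1$, so $M' \cong M$ is cyclic.

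The main obstacle I anticipate is less about the argument itself, which is standard once one has Maschke's theorem and the existence of complements for submodules of semisimple modules, than about packaging it cleanly. In particular, one must invoke the uniqueness of the isotypic multiplicities $(n_i)_i$ and $(m_i)_i$, i.e.\ that $\bigoplus_i A_i^{a_i} \cong \bigoplus_i A_i^{b_i}$ implies $a_i = b_i$ for each $i$. This is a consequence of Krull–Schmidt (or can be proved directly using $\mathrm{Hom}_A(A_j, -)$ and Schur's lemma to read off the multiplicity of each pairwise non-isomorphic simple $A_j$), and the argument should cite the structure theorem for semisimple modules recalled just before Section~\ref{sec:maschke} rather than re-prove it.
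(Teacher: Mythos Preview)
Your proof is correct and follows essentially the same approach as the paper: both identify cyclicity of $M$ with the existence of a surjective $A$-module homomorphism ${}_A A \to M$, then use the semisimple structure to compare multiplicities. The paper compresses both directions into a single appeal to Schur's Lemma (a homomorphism sends each simple $A_i$ to an isomorphic copy or to $0$, so a surjection exists iff $m_i \leq n_i$), whereas you spell out the same content via complements and the idempotent decomposition $1 = e + f$; this is a packaging difference rather than a different argument.
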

\begin{proof}
 That the $A$-module $M$ is cyclic means that for some $m \in M$ we have 
$Am = 
M$. This element $m \in M$ defines an $A$-module homomorphism $\phi : {}_A 
A 
\to M$ via $\phi(a) = am$. 
Such a homomorphism can only map the simple submodules $A_i$ of ${}_A A$ to 
an 
isomorphic copy in $M$ or to $0$ (this fact is known as \emph{Schur's 
Lemma}, 
see~\cite[Section 2.3]{Pierce82} for details). 
Knowing this, the result easily follows.
\end{proof}

Notably, Lemma~\ref{lemma:cyclic-iso} is a key ingredient for the 
polynomial-time algorithm for module isomorphism established 
in~\cite{ChistovIK97} as well (cf.\ the proof of Lemma~7 
of~\cite{ChistovIK97}). 
We will apply Lemma~\ref{lemma:cyclic-iso} in order to determine whether
the $\StabMat{\mcM}$-module $\HomMat{\mcM}{\mcN}$ is cyclic or not. Note 
that 
we need a crucial prerequisite in order to apply this lemma.  Indeed, 
Lemma~\ref{lemma:cyclic-iso} requires that $\StabMat{\mcM}$ is a 
semisimple algebra.

\subsection{Block Matrices}
\label{sec:blockmatrices}
Our next step is to learn more about the special kinds of modules
and matrices that arise in our intended application.
Very roughly, the matrices that we consider are linear combinations of 
``small'' matrices that have their entries only in certain 
canonical blocks (these blocks will bound the orbits under the 
action of the automorphism group).
We make precise what we mean by this later, but, for now, 
we focus on the following important consequence: similarity 
transformations between such matrix families can be chosen to have 
block-diagonal form.
This enables us to decide the \sms-problem only by looking at the 
isomorphism type of the $\StabMat{\mcM}$-module $\HomMat{\mcM}{\mcN}$ 
(see Theorem~\ref{thm:sms:char:fbg}).

A \emph{coloured index pair} $(I, J, \preceq_I, \preceq_J)$, or a \cip 
for short, consists of a pair $(I,J)$ 
of two (finite, non-empty) sets $I$ and $J$ of the same size and of two 
linear 
preorders $\preceq_I$ and $\preceq_J$ which are defined on $I$ and $J$, 
respectively.
The linear preorder $\preceq_I$ linearly orders $I$ up to equivalence 
classes of indices $i$ and $i'$ which are incomparable, that is for which 
it holds that $i \preceq i'$ and $i' \preceq i$.
We denote the ordered partition of $I$ into these equivalence classes by
$I = I_0 \preceq \cdots \preceq I_\nmo$ which are ordered by $\preceq_I$ 
as indicated. We sometimes refer to these 
equivalence classes as \emph{colour classes}.
This comes from the intuition of thinking of the set $I$ as being coloured with 
$n$ 
different colours, which we can order, and such that the elements of 
the same colour cannot be distinguished (i.e.\ elements of the same colour 
are exactly the $\preceq_I$-incomparable elements).
Of course, the same holds for $J$ and $\preceq_J$ and we denote the 
partition of 
$J$ into $\preceq_J$-equivalence classes by
$J = J_0 \preceq \cdots \preceq J_\nmo$.
The reuse of $n$ for the length of the partition of $J$ is intentional:
for $(I, J, \preceq_I, \preceq_J)$ to 
constitute a \cip we require that the number of $\preceq_I$-colour classes 
and $\preceq_J$-colour classes  is the same and that all corresponding 
colour classes $I_k$ and $J_k$, for $k <n$, have the same size.

For the rest of this section, let $(I,J,\preceq_I, \preceq_J)$ be a \cip, 
with
$I = I_0 \preceq \cdots \preceq I_\nmo$ and
$J = J_0 \preceq \cdots \preceq J_\nmo$, and let $M$ be an 
$I\times I$-matrix (with entries in some field $\field{F}$, say). 
Then $M$ is called a \emph{block matrix} if there are two colour 
classes $I_k$ and $I_\ell$ such that $M(i,i') \neq 0$ implies that $i \in 
I_k$ and $i' \in I_\ell$.
In other words, the only non-zero entries of $M$ are in the block $I_k 
\times I_\ell$.
Of course, the same notion is defined for $J \times J$-matrices as 
well.
We say that an $I\times I$-block matrix $M$ and a $J \times J$-block 
matrix $N$ are \emph{compatible} if they are defined over corresponding 
blocks, that is $M$ is non-zero only on block $I_k \times I_\ell$ and 
$N$ is non-zero only on the corresponding block $J_k \times J_\ell$.

Now, let $S$ be an $I \times J$-matrix over $\field{F}$.
We say that $S$ is a \emph{block-diagonal matrix} if
$S(i,j) \neq 0$ implies $i \in I_k$ and $j \in J_k$ for some $k < n$.
Note that by the correspondence between the colour classes $I_k$ and 
$J_k$, and by the requirement that the number of colour classes and their 
sizes coincide, it actually makes sense to call such matrices 
``block-diagonal''  (non-zero entries occur only inside the
diagonal $I_k \times J_k$-blocks, $k < n$).

\begin{defi}
\label{def:projdiag}
 Let $S$ be an $I \times J$-matrix (with entries in some field $\field{F}$).
 For $k < n$ we define $\Diag_k(S)$ to be the projection of $S$ onto the 
 $k$-th diagonal block, that is $\Diag_k(S)$ is the $I \times J$-matrix 
defined as
 \[ \Diag_k(S)(i,j) = \begin{cases}
        S(i,j)& \text{ if } i \in I_k, j \in J_k,\\
        0,& \text{otherwise.}
       \end{cases}
       \]
Moreover, we define $\Diag(S) := \Diag_0(S) + \cdots + \Diag_\nmo(S)$ to 
be the projection of $S$ onto the diagonal blocks.
\end{defi}

\begin{lem}[see also~\cite{DawarHol17}]
\label{lem:block-to-diag}
Let $M$ be an $(I \times I)$-block matrix and let $N$ be a compatible 
$(J\times J)$-block matrix (both matrices having entries in some 
field $\field{F}$).
Moreover, let $S$ be an $(I \times J)$-matrix such that
$M S = S N$. Then $M \cdot \Diag(S) = \Diag(S) \cdot N$.
\end{lem}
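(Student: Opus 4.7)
The plan is to verify the identity $M\cdot \Diag(S) = \Diag(S)\cdot N$ entrywise, exploiting the fact that $M$ and $N$ are each supported on a single block. By hypothesis, there exist indices $k,\ell < n$ such that $M(i,i') \neq 0$ only when $i \in I_k$ and $i' \in I_\ell$, and by compatibility $N(j',j) \neq 0$ only when $j' \in J_k$ and $j \in J_\ell$. The strategy is to show that both sides of the claimed identity vanish outside the block $I_k \times J_\ell$ and that on this block they coincide with $(MS)(i,j)$ and $(SN)(i,j)$ respectively, so the hypothesis $MS = SN$ gives the result.

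For the left-hand side, I would fix $(i,j)$ and expand
\[
(M\cdot \Diag(S))(i,j) \;=\; \sum_{i'} M(i,i')\,\Diag(S)(i',j).
\]
The factor $M(i,i')$ restricts $i$ to $I_k$ and the sum to $i' \in I_\ell$. For such $i'$, the entry $\Diag(S)(i',j)$ is nonzero only when $j \in J_\ell$, where it coincides with $S(i',j)$. Hence the left-hand side vanishes off $I_k \times J_\ell$, and on that block equals $\sum_{i' \in I_\ell} M(i,i')S(i',j) = (MS)(i,j)$, since all remaining terms of the full matrix product are zero because of the support of $M$. A symmetric expansion of $(\Diag(S)\cdot N)(i,j)$ shows it too vanishes off $I_k \times J_\ell$ and equals $(SN)(i,j)$ on this block. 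Invoking $MS = SN$ then yields the claimed identity.

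The only point of care is the block-index bookkeeping, in particular verifying that the compatibility of the \cip (matching colour classes $I_m$ with $J_m$) is precisely what makes the diagonal pattern of $\Diag(S)$ line up with the supports of $M$ and $N$. Once this is made explicit the argument requires no deeper algebra, and in fact also works seamlessly in the degenerate case $k = \ell$, where $I_k \times J_\ell$ is itself a diagonal block.
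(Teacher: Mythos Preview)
Your proof is correct and follows essentially the same approach as the paper: an entrywise verification exploiting the block supports of $M$, $N$, and $\Diag(S)$. The paper organises the computation slightly differently, first proving $MS = M\,\Diag(S)$ and $SN = \Diag(S)\,N$ as separate identities before invoking $MS = SN$, whereas you directly show that both $M\,\Diag(S)$ and $\Diag(S)\,N$ vanish off $I_k \times J_\ell$ and agree with $MS$, $SN$ respectively on that block; the substance is the same.
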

\begin{proof} For an illustration see Figure~\ref{fig:lem:block-to-diag}. 
Let $T = \Diag(S)$.
 Let $M$ be a matrix with non-zero entries only in block $I_k \times 
I_\ell$ and, correspondingly, let $N$ have non-zero entries only in block 
$J_k \times J_\ell$.
 We show that $M S = M T$ (and, analogously, it can be shown that $S N = T 
N$).
 Let $i \in I$ and $j \in J$.
 First, if $i \not\in I_k$, then $M S(i,j) = 0 = M T (i,j)$.
 Hence, assume that $i \in I_k$.
 If $j \not\in J_\ell$, then $MS (i,j) = 0 = SN (i,j)$.
 We have $M T (i,j) = 0$, since $M T (i,j) = \sum_{r \in I_\ell}  M(i,r) 
\cdot T(r,j)$ and $T(r,j) = 0$ for $r \in I_\ell$, $j \not\in J_\ell$ by 
 definition.
 The only case that remains is that $i \in I_k$ and $j \in J_\ell$.
 But then $MS(i,j) = \sum_{r \in I_\ell} M(i,r) S(r,j) = \sum_{r \in 
I_\ell} M(i,r) T(r,j) = MT(i,j)$ since $S(r,j) = T(r,j)$ for $r \in 
I_\ell$ and $j \in J_\ell$ by definition.
\end{proof}

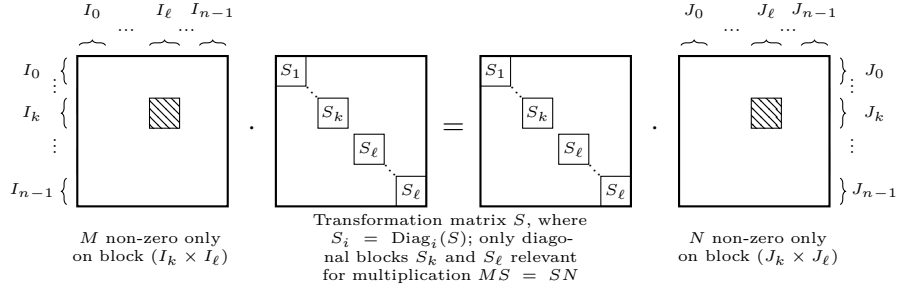
\begin{figure}[h]
\centering
\begin{tikzpicture}[scale=0.8]


\draw[step=1.5,thick] (0,-2.5) rectangle (2.5,0);
\draw[thin, pattern=north west lines] (1.2,-1.2) rectangle (1.7,-0.7);
\foreach \x in {0,...,5}
{   \foreach \y in {0,...,5}
    {   \pgfmathtruncatemacro{\nodelabel}{\x+\y*6}
        \node[inner sep=0cm] (\nodelabel) at (0.5*\x,-0.5*\y) {};
    }
}

\draw [decorate,decoration={brace,amplitude=2pt,raise=4pt}]
(6) -- (0) node [black,midway,xshift=-0.6cm,fontscale=-3] 
{$I_0$};
\draw 
[decorate,decoration={brace,amplitude=2pt,raise=4pt}]
(0,-1.2) -- (0,-0.7) node [black,midway,xshift=-0.6cm,fontscale=-3] 
{$I_k$};
\draw [decorate,decoration={brace,amplitude=2pt,raise=4pt}]
(30) -- (24) node [black,midway,xshift=-0.6cm,,fontscale=-3] 
{$I_\nmo$};
\draw (6) -- (12) node 
[midway,xshift=-0.3cm,yshift=0.2cm,rotate=90,fontscale=-8] 
{$...$};
\draw (18) -- (24) node 
[midway,xshift=-0.3cm,yshift=0.2cm,rotate=90,fontscale=-8]  
{$...$};

\draw [decorate,decoration={brace,amplitude=2pt,raise=4pt}]
(0) -- (1) node [black,midway,yshift=0.6cm,fontscale=-3] 
{$I_0$};
\draw 
[decorate,decoration={brace,amplitude=2pt,raise=4pt}]
(1.2,0) -- (1.7,0) node [black,midway,yshift=0.6cm,fontscale=-3]  
{$I_\ell$};
\draw [decorate,decoration={brace,amplitude=2pt,raise=4pt}]
(4) -- (5) node [black,midway,yshift=0.6cm,fontscale=-3]  
{$I_\nmo$};
\draw (2) -- (3) node [midway,xshift=-0.35cm,yshift=0.35cm,fontscale=-8] 
{$...$};
\draw (4) -- (5) node [midway,xshift=-0.35cm,yshift=0.35cm,fontscale=-8] 
{$...$};


\draw[step=1.5,thick] (3.3,-2.5) rectangle (5.8,0);
\foreach \x in {0,...,5}
{   \foreach \y in {0,...,5}
    {   \pgfmathtruncatemacro{\nodelabel}{\x+\y*6}
        \node[inner sep=0cm] (\nodelabel) at (0.5*\x,-0.5*\y) {};
    }
}
 \draw[thin] (3.3,-0.5) rectangle (3.8,0) node [midway,fontscale=-4] 
{$S_1$};
\draw[thin] (4,-1.2) rectangle (4.5,-0.7) node [midway,fontscale=-4] 
{$S_k$};
\draw[thin] (4.6,-1.8) rectangle (5.1,-1.3) node [midway,fontscale=-4] 
{$S_\ell$};
\draw[thin] (5.3,-2.5) rectangle (5.8,-2) node [midway,fontscale=-4] 
{$S_\ell$};
\node[fontscale=-10,rotate=135] at (3.9,-0.6) {$...$};
\node[fontscale=-10,rotate=135] at (5.2,-1.9) {$...$};

\draw[step=1.5,thick] (6.7,-2.5) rectangle (9.2,0);
\foreach \x in {0,...,5}
{   \foreach \y in {0,...,5}
    {   \pgfmathtruncatemacro{\nodelabel}{\x+\y*6}
        \node[inner sep=0cm] (\nodelabel) at (0.5*\x,-0.5*\y) {};
    }
}
 \draw[thin] (6.7,-0.5) rectangle (7.2,0) node [midway,fontscale=-4] 
{$S_1$};
\draw[thin] (7.4,-1.2) rectangle (7.9,-0.7) node [midway,fontscale=-4] 
{$S_k$};
\draw[thin] (8,-1.8) rectangle (8.5,-1.3) node [midway,fontscale=-4] 
{$S_\ell$};
\draw[thin] (8.7,-2.5) rectangle (9.2,-2) node [midway,fontscale=-4] 
{$S_\ell$};
\node[fontscale=-10,rotate=135] at (7.3,-0.6) {$...$};
\node[fontscale=-10,rotate=135] at (8.6,-1.9) {$...$};


\draw[step=1.5,thick] (10,-2.5) rectangle (12.5,0);
\draw[thin, pattern=north west lines] (11.2,-1.2) rectangle (11.7,-0.7);
\foreach \x in {0,...,5}
{   \foreach \y in {0,...,5}
    {   \pgfmathtruncatemacro{\nodelabel}{\x+\y*6}
        \node[inner sep=0cm] (\nodelabel) at (10+0.5*\x,-0.5*\y) {};
    }
}

\draw [decorate,decoration={brace,amplitude=2pt,raise=4pt,mirror}]
(11) -- (5) node [black,midway,xshift=0.6cm,fontscale=-3]  
{$J_0$};
\draw 
[decorate,decoration={brace,amplitude=2pt,raise=4pt, mirror}]
(12.5,-1.2) -- (12.5,-0.7) node [black,midway,xshift=0.6cm,fontscale=-3]  
{$J_k$};
\draw [decorate,decoration={brace,amplitude=2pt,raise=4pt,mirror}]
(35) -- (29) node [black,midway,xshift=0.6cm,fontscale=-3] 
{$J_\nmo$};
\draw (11) -- (17) node 
[midway,xshift=0.3cm,yshift=0.2cm,fontscale=-8,rotate=90]  
{$...$};
\draw (23) -- (29) node 
[midway,xshift=0.3cm,yshift=0.2cm,fontscale=-8,rotate=90] 
{$...$};

\draw [decorate,decoration={brace,amplitude=2pt,raise=4pt}]
(0) -- (1) node [black,midway,yshift=0.6cm,fontscale=-3]  
{$J_0$};
\draw 
[decorate,decoration={brace,amplitude=2pt,raise=4pt}]
(11.2,0) -- (11.7,0) node [black,midway,yshift=0.6cm,fontscale=-3] 
{$J_\ell$};
\draw [decorate,decoration={brace,amplitude=2pt,raise=4pt}]
(4) -- (5) node [black,midway,yshift=0.6cm,fontscale=-3]  
{$J_\nmo$};
\draw (2) -- (3) node 
[midway,xshift=-0.3cm,yshift=0.35cm,fontscale=-8] 
{$...$};
\draw (4) -- (5) node [midway,xshift=-0.35cm,yshift=0.35cm,fontscale=-8] 
{$...$};



\node[fontscale = -3, text width = 2.2cm, text centered] at (1.2,-3.2) 
{$M$ non-zero only on block  $(I_k \times I_\ell)$};

\node[fontscale = -3, text width = 2.2cm, text centered] at (11.3,-3.2) 
{$N$ 
non-zero only on block  $(J_k \times J_\ell)$};

\node[fontscale = -3, text width = 4.5cm, text centered] at (6.2,-3.2) 
{Transformation matrix $S$, where $S_i = \Diag_i(S)$; only diagonal blocks 
$S_k$ and $S_\ell$ relevant for multiplication $MS = SN$};

\node at (2.9,-1.2) {$\cdot$};
\node at (9.6,-1.2) {$\cdot$};
\node at (6.25,-1.2) {$=$};

\end{tikzpicture}
\caption{Illustration of Lemma~\ref{lem:block-to-diag}}
\label{fig:lem:block-to-diag}
\end{figure}

In relation to the $\sms$-problem, Lemma~\ref{lem:block-to-diag} suggests 
that for matrix families $\mcM$ and $\mcN$ that consist only of pairs of 
compatible block matrices we can restrict ourselves to similarity 
transformations that are block-diagonal. 
There is one obstacle with this approach, as, in general, we don't 
have the guarantee that the projection $D(S)$ of $S$ onto the diagonal 
blocks
preserves the rank of $S$. 
To overcome this, we add a further assumption on $\mcM$ and $\mcN$ that 
ensures that for any suitable transformation $S$ the diagonal blocks 
$D_i(S)$ have to be matrices of full rank.
Ultimately, these assumptions have useful consequences for the 
structure of the module $\HomMat{\mcM}{\mcN}$.
Before we proceed, let us formally summarise our discussion by introducing 
the notion of \emph{(faithfully) block-generated} pairs of matrix families 
$\mcM$ and $\mcN$. This concept captures the important structural 
properties of matrix families that we encounter later in our applications.
As a piece of notation, for two $K$-indexed matrix families $\mcM=\{ M_k 
: k \in K \}$ and $\mcN= \{ N_k : k \in K \}$ as above, we write 
$(\mcM \circ_K \mcN)$ to denote the $K$-synchronised direct product 
between $\mcM$ and $\mcN$, that is the $K$-indexed set consisting of pairs 
of $K$-corresponding matrices $\mcM \circ_K \mcN = \{ (M_k, N_k) : k \in K 
\}$

\begin{defi}[Faithfully block-generated]
\label{def:fblockgen}
We say that a $K$-indexed pair of matrix families $(\mcM, \mcN)$ is 
\emph{block-generated} if there is a set $B \subseteq (\mcM \circ_K \mcN)$ 
consisting of pairs of compatible block matrices that
generates $(\mcM \circ_K \mcN)$ via $\field{F}$-linear combinations.
In this case, $B$ is called a \emph{basis} of $(\mcM, \mcN)$.

Moreover,  $(\mcM, \mcN)$ is \emph{faithfully block-generated}, 
or \fbg for short, if  the set $(\mcM \circ_K \mcN)$ also contains 
all identity matrices on the diagonal blocks, that is, for every $\ell < 
n$, there exists a pair $(M, N) \in (\mcM \circ_K \mcN)$ such 
that $M$ is the identity matrix on block $I_\ell \times I_\ell$ and such 
that $N$ is the identity matrix on block 
$J_\ell \times J_\ell$ (and both matrices are zero on all remaining 
blocks).
\end{defi}

\begin{cor}
 \label{cor:fbgfam:diag}
 Let $(\mcM, \mcN)$ be an \fbg pair of matrix families.
 If $S \in \HomMat{\mcM}{\mcN}$, then $\Diag(S) \in \HomMat{\mcM}{\mcN}$.
 Moreover, if $S$ is invertible, then $\Diag(S)$ is invertible.
\end{cor}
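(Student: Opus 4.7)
The plan is to combine Lemma~\ref{lem:block-to-diag}, applied pair-by-pair to a block-generating basis, with an entry-wise computation using the diagonal-identity pairs guaranteed by faithfulness. In fact, under the fbg hypothesis one can prove the stronger statement $S = \Diag(S)$, from which both claims follow immediately.

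For the first part, I would use the block-generated part of the hypothesis. By definition, there is a basis $B \subseteq (\mcM \circ_K \mcN)$ of pairs of compatible block matrices whose $\field F$-linear span equals $(\mcM \circ_K \mcN)$. The condition $MX = XN$ is $\field F$-bilinear in $((M,N), X)$, so $S \in \HomMat{\mcM}{\mcN}$ if, and only if, $MS = SN$ for every $(M,N) \in B$. To show $\Diag(S) \in \HomMat{\mcM}{\mcN}$ it then suffices to verify $M\,\Diag(S) = \Diag(S)\,N$ for each $(M,N) \in B$, which is precisely the content of Lemma~\ref{lem:block-to-diag} since pairs in $B$ are compatible block matrices. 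Linearity then propagates the identity to all pairs in $(\mcM \circ_K \mcN)$.

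For the second part, I would exploit faithfulness. Fix $\ell < n$ and pick a pair $(M^\ell, N^\ell) \in (\mcM \circ_K \mcN)$ such that $M^\ell$ is the identity on the diagonal block $I_\ell \times I_\ell$ and zero elsewhere, and similarly for $N^\ell$ on $J_\ell \times J_\ell$. Since $S \in \HomMat{\mcM}{\mcN}$, we have $M^\ell S = S N^\ell$. A short entry-wise computation gives
\[
(M^\ell S)(i,j) = \begin{cases} S(i,j) & \text{if } i \in I_\ell \\ 0 & \text{otherwise}\end{cases},
\qquad
(S N^\ell)(i,j) = \begin{cases} S(i,j) & \text{if } j \in J_\ell \\ 0 & \text{otherwise}\end{cases}.
\]
Equating these forces $S(i,j) = 0$ whenever exactly one of the conditions $i \in I_\ell$, $j \in J_\ell$ holds. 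Running $\ell$ through all colour classes yields $S(i,j) \neq 0 \Rightarrow i \in I_k \land j \in J_k$ for the common index $k$; that is, $S$ is already block-diagonal, hence $S = \Diag(S)$. Both conclusions of the corollary — membership in $\HomMat{\mcM}{\mcN}$ and preservation of invertibility — are then immediate.

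There is no real technical obstacle: each half is a short calculation. The only point requiring care is bookkeeping, namely keeping the generating basis $B$ distinct from the full family $(\mcM \circ_K \mcN)$ — the first part uses only that $B$ consists of compatible block pairs, whereas faithfulness is a property of $(\mcM \circ_K \mcN)$ itself and need not hold for the chosen $B$.
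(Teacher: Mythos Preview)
Your proof is correct, and the first part is essentially identical to the paper's (both simply invoke Lemma~\ref{lem:block-to-diag} on the block basis and extend by linearity).

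For the second part you take a genuinely different and cleaner route. The paper argues block by block using the inverse: from $S^{-1}MS = N$ with $M$, $N$ the $\ell$-th diagonal identity matrices it deduces $S^{-1}\Diag_\ell(S) = N$ and hence that $\Diag_\ell(S^{-1})\Diag_\ell(S)$ is the identity on $J_\ell \times J_\ell$, so each diagonal block of $S$ has full rank. Your argument never touches $S^{-1}$: the entrywise computation with the faithful identity pairs already forces every $S \in \HomMat{\mcM}{\mcN}$ to be block-diagonal, i.e.\ $S = \Diag(S)$. This is strictly stronger than the stated corollary --- indeed it renders the first claim trivial and shows directly that $\HomMat{\mcM}{\mcN} = \HomMatDiag{\mcM}{\mcN}$, a fact the paper only extracts afterwards. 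The paper's approach, by contrast, does not rely on the colour classes partitioning the index set quite so explicitly and isolates the invertibility of each block on its own, which is closer in spirit to how the blocks are used later; but your version is shorter and gives more.
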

\begin{proof}
 The first claim follows immediately from Lemma~\ref{lem:block-to-diag}.
 For the second claim assume that $S$ is invertible.
 Then we show that each of the diagonal-block matrices $\Diag_\ell(S)$ is 
invertible when considered as an $(I_\ell \times J_\ell)$-matrix, for all 
$\ell 
< 
n$.
To see this, we make use of the fact that $(\mcM, \mcN)$ is faithful.
We choose a pair $(M, N) \in \mcM \circ_K \mcN$ such that $M$ is the 
identity 
matrix on block $I_\ell \times I_\ell$ and $N$ is the identity matrix on 
the
corresponding diagonal block $J_\ell \times J_\ell$.
We have $S^{-1} M S = N$. Since $M S = \Diag_\ell(S)$, it follows that 
$S^{-1} 
\Diag_\ell(S) = N$. Hence, $\Diag_\ell(S^{-1}) \Diag_\ell(S)$ is the 
identity 
matrix on block $J_\ell \times J_\ell$, as claimed.
\end{proof}

Given the preceding result we are now in a position to restrict 
ourselves, for the case of \fbg matrix families, to block-diagonal 
transformation matrices.
Formally, let us denote by $\StabMatDiag{\mcM}$ the subalgebra of 
$\StabMat{\mcM}$ which consists of all $(I \times I)$-matrices $X \in 
\StabMat{\mcM}$ which only have non-zero entries on the diagonal $(I_\ell 
\times I_\ell)$-blocks, $\ell < n$.
Correspondingly, let us denote by $\HomMatDiag{\mcM}{\mcN}$ all matrices 
$S \in \HomMat{\mcM}{\mcN}$ which are non-zero only the diagonal blocks 
$I_\ell \times J_\ell$, $\ell < n$.
Then it is easy to see that $\HomMatDiag{\mcM}{\mcN}$ forms a 
$\StabMatDiag{\mcM}$-module.
Also note that $\StabMatDiag{\mcM} = \Diag(\StabMat{\mcM})$ and
$\HomMatDiag{\mcM}{\mcN}=\Diag(\HomMat{\mcM}{\mcN})$ for the case of \fbg 
pairs of matrix families $\mcM$ and $\mcN$, see 
Corollary~\ref{cor:fbgfam:diag}.

\begin{cor}
\label{cor:fbg:diagmod}
 Let $(\mcM, \mcN)$ be an \fbg pair of matrix families as above.
 Then $\mcM$ and $\mcN$ are simultaneously similar if, and only if, 
 $\HomMatDiag{\mcM}{\mcN}$ contains an invertible matrix.
\end{cor}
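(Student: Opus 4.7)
The plan is to observe that this corollary is an almost immediate consequence of Corollary~\ref{cor:fbgfam:diag}, which is the technical heart that packages Lemma~\ref{lem:block-to-diag} together with the faithfulness assumption. Both directions are short, so I would present them in sequence.

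For the ``if'' direction, suppose $\HomMatDiag{\mcM}{\mcN}$ contains an invertible matrix $S$. Since $\HomMatDiag{\mcM}{\mcN}$ is by definition a subset of $\HomMat{\mcM}{\mcN}$, the matrix $S$ is an invertible element of $\HomMat{\mcM}{\mcN}$, which means exactly that $\mcM$ and $\mcN$ are simultaneously similar (via $S$). This direction does not even use the \fbg hypothesis.

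For the ``only if'' direction, suppose $\mcM$ and $\mcN$ are simultaneously similar and let $S \in \HomMat{\mcM}{\mcN}$ be an invertible witness. Here is where we use that $(\mcM, \mcN)$ is \fbg. By Corollary~\ref{cor:fbgfam:diag}, the block-diagonal projection $\Diag(S)$ still lies in $\HomMat{\mcM}{\mcN}$ and, crucially, is again invertible. By the very definition of $\Diag$, this projected matrix has non-zero entries only in the diagonal blocks $I_\ell \times J_\ell$, so $\Diag(S) \in \HomMatDiag{\mcM}{\mcN}$. Thus $\HomMatDiag{\mcM}{\mcN}$ contains an invertible matrix, as required.

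I do not expect any real obstacle here: the work has already been done in Lemma~\ref{lem:block-to-diag} (which says the diagonal projection of any intertwiner of compatible block matrices is itself an intertwiner) and in the second part of Corollary~\ref{cor:fbgfam:diag} (which uses the presence of the diagonal identity pairs in an \fbg family to argue that each diagonal block $\Diag_\ell(S)$ is itself invertible, and hence so is $\Diag(S)$). The present corollary just repackages these facts into the statement that, within the \fbg world, we may without loss of generality restrict the search for a similarity transformation to block-diagonal ones.
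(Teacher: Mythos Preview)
Your proof is correct and is exactly the argument the paper has in mind: the corollary is stated in the paper without proof precisely because both directions follow immediately from Corollary~\ref{cor:fbgfam:diag}, just as you spell out. There is nothing to add.
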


\subsection{Locally Invertible Similarity Transformations}
\label{sec:locsim}
We continue to denote by $(I, J, \preceq_I, \preceq_J)$ a \cip 
where the partitions $I = I_0 \preceq \cdots \preceq I_\nmo$ and 
$J = J_0 \preceq \cdots \preceq J_\nmo$ are given as before.
Moreover, we fix an \fbg pair $(\mcM, \mcN)$ of $K$-indexed matrix families
(as before, matrices in $\mcM$ are $I \times I$-matrices and matrices in 
$\mcN$ are $J \times J$-matrices both having entries in some common 
ground field $\field{F}$).
Our aim is to decide the \sms-problem for the pair $(\mcM, \mcN)$ only
by studying the algebraic structure of the $\StabMatDiag \mcM$-module 
$\HomMatDiag{\mcM}{\mcN}$.
As we said earlier, this is not possible in the general case, which is why 
we set out to consider a further property of $\HomMat{\mcM}{\mcN}$ that 
will enable us to follow this approach.

\begin{defi}
\label{def:locsim}
We say that $\mcM$ and $\mcN$ are \emph{locally simultaneously similar}, 
or \locsimsim for short, if for every $\ell < n$, we can find a 
matrix $S \in 
\HomMat{\mcM}{\mcN}$ such that 
$\Diag_\ell(S)$ is invertible (again, we 
consider $\Diag_\ell(S)$ as an $I_\ell \times J_\ell$-matrix).
\end{defi}

To put this definition into words, the families $\mcM$ and $\mcN$ are 
\locsimsim if we can map $\mcM$ to $\mcN$ using (possibly different) 
linear mappings which (individually) are locally, that is on each of the 
diagonal blocks $I_\ell \times J_\ell$, for $\ell < n$, invertible.
For such pairs of matrix families the algebraic structure of 
the $\StabMat{\mcM}$-module $\HomMat{\mcM}{\mcN}$ carries sufficient 
information in order to decide the \sms-problem for input $(\mcM, \mcN)$.

\begin{thm}
\label{thm:criterion:smsproblem}
 Let $(\mcM, \mcN)$ be a block-generated pair of matrix 
families $\mcM$ and $\mcN$ as above, and assume further that $\mcM$ and 
$\mcN$ are locally simultaneously similar.
 Then $\mcM$ and $\mcN$ are simultaneously similar if, and only if, 
 the $\StabMat{\mcM}$-module $\HomMat{\mcM}{\mcN}$ is cyclic.
\end{thm}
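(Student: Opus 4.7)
The forward direction is essentially free: it is the specialisation of Lemma~\ref{lemma:suff:sms-solution} to the matrix-algebra setting. If $\mcM$ and $\mcN$ are simultaneously similar, then any invertible element $X\in \HomMat{\mcM}{\mcN}$ witnesses cyclicity because $\StabMat{\mcM}\cdot X=\HomMat{\mcM}{\mcN}$. So the real work is the converse: starting from a cyclic generator of $\HomMat{\mcM}{\mcN}$, together with the hypotheses that $(\mcM,\mcN)$ is \fbg and \locsimsim, I plan to show that the generator itself can be chosen invertible.

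First I would exploit the faithfulness in \fbg to show that every $S\in\HomMat{\mcM}{\mcN}$ is automatically block-diagonal, i.e.\ $\HomMat{\mcM}{\mcN}=\HomMatDiag{\mcM}{\mcN}$. For each $\ell<n$, the hypothesis provides a pair $(E_\ell^M,E_\ell^N)\in\mcM\circ_K\mcN$ with $E_\ell^M$ the identity on $I_\ell\times I_\ell$ and zero elsewhere, and symmetrically for $E_\ell^N$. Writing out $E_\ell^M S=SE_\ell^N$ entry-wise gives $S(i,j)\cdot \mathbbm{1}[i\in I_\ell]=S(i,j)\cdot\mathbbm{1}[j\in J_\ell]$ for all $(i,j)$. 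Running $\ell$ through all colour classes forces $S(i,j)=0$ whenever $i$ and $j$ sit in non-corresponding blocks, which is exactly block-diagonality.

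Next I would fix a cyclic generator $T$ of $\HomMat{\mcM}{\mcN}$, which by the previous step already satisfies $T=\Diag(T)$, and use \locsimsim block by block. For each $\ell$, pick $S_\ell\in\HomMat{\mcM}{\mcN}$ with $\Diag_\ell(S_\ell)$ invertible as an $I_\ell\times J_\ell$ matrix, and use cyclicity to write $S_\ell=Z_\ell T$ for some $Z_\ell\in\StabMat{\mcM}$. The key calculation is that, because $T$ is block-diagonal, the $\ell$-th diagonal block of the product $Z_\ell T$ depends on $Z_\ell$ only through its own $\ell$-th diagonal block; concretely,
\[
\Diag_\ell(S_\ell)\;=\;\Diag_\ell(Z_\ell)\cdot \Diag_\ell(T),
\]
viewed as matrices over the index sets $I_\ell\times I_\ell$, $I_\ell\times I_\ell$, $I_\ell\times J_\ell$ respectively. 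Since the left-hand side is invertible and the two factors on the right are square (using $|I_\ell|=|J_\ell|$), both factors have full rank; in particular $\Diag_\ell(T)$ is invertible.

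Since this holds for every $\ell<n$, the matrix $T=\sum_{\ell<n}\Diag_\ell(T)$ has invertible diagonal blocks and zeros elsewhere, and is therefore an invertible $I\times J$ matrix. Hence $T\in\HomMat{\mcM}{\mcN}$ witnesses simultaneous similarity of $\mcM$ and $\mcN$, completing the converse. The main obstacle is the block-diagonal reduction in the second paragraph: it is the step where the ``faithful'' part of \fbg (through the identity-block pairs) is genuinely needed, and without it the cyclic generator need not lie in $\HomMatDiag{\mcM}{\mcN}$ and the clean factorisation $\Diag_\ell(Z_\ell T)=\Diag_\ell(Z_\ell)\cdot\Diag_\ell(T)$ breaks down.
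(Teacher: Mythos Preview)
Your argument has a hypothesis mismatch. The theorem assumes only that $(\mcM,\mcN)$ is \emph{block-generated}, not \emph{faithfully} block-generated; the paper even remarks explicitly after the proof that faithfulness is not used here (it is reserved for the next result, Theorem~\ref{thm:sms:char:fbg}). Your second paragraph, which forces every $S\in\HomMat{\mcM}{\mcN}$ to be block-diagonal, rests entirely on the identity-block pairs $(E_\ell^M,E_\ell^N)$; without faithfulness those pairs need not lie in $\mcM\circ_K\mcN$, and a cyclic generator $T$ may well have nonzero off-diagonal blocks. Once $T$ is not block-diagonal, the factorisation $\Diag_\ell(Z_\ell T)=\Diag_\ell(Z_\ell)\,\Diag_\ell(T)$ is no longer justified: the $(I_\ell\times J_\ell)$-block of $Z_\ell T$ picks up contributions from other blocks of $T$ via the off-diagonal entries of $Z_\ell$, and your deduction that $\Diag_\ell(T)$ is invertible breaks down. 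Replacing $T$ by $\Diag(T)$ does not help either, since under mere block-generation $\Diag(T)$ lies in $\HomMat{\mcM}{\mcN}$ (Lemma~\ref{lem:block-to-diag}) but there is no reason it should still generate the module.

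The paper's proof avoids this by never asking the generator to be block-diagonal. For each $\ell$ it uses Lemma~\ref{lem:block-to-diag} (which needs only block-generation) to push the locally-invertible witness $T_\ell$ to $\Diag(T_\ell)\in\HomMat{\mcM}{\mcN}$, writes $X_\ell S=\Diag(T_\ell)$ by cyclicity, and then forms $\bigl(\sum_\ell P_\ell X_\ell\bigr) S=\sum_\ell \Diag_\ell(T_\ell)$ with the block projectors $P_\ell$. The right-hand side is block-diagonal with invertible blocks, hence of full rank, and the rank inequality forces $S$ to be invertible. The projectors $P_\ell$ are used purely as auxiliary multipliers in a rank argument; they are not required to belong to $\mcM$ or to $\StabMat{\mcM}$, so faithfulness never enters.

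If you add the faithfulness hypothesis, your approach is correct and gives a pleasant alternative route to the \fbg version; indeed, your observation that $E_\ell^M S=S E_\ell^N$ forces block-diagonality is precisely why $\HomMat{\mcM}{\mcN}=\HomMatDiag{\mcM}{\mcN}$ in that setting. But for Theorem~\ref{thm:criterion:smsproblem} as stated you need an argument, like the paper's, that does not rely on the identity-block pairs.
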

\begin{proof}
The direction from left to right was established in 
Lemma~\ref{lemma:suff:sms-solution} for the general case.
Hence, let us focus on the case that $\HomMat{\mcM}{\mcN}$ is cyclic.
We fix a generator $S \in \HomMat{\mcM}{\mcN}$, that is $\StabMat{\mcM} 
\cdot S = \HomMat{\mcM}{\mcN}$.
For $\ell < n$, by our assumption that $\mcM$ and $\mcN$ are locally 
simultaneously similar, we can find a matrix $T_\ell \in 
\HomMat{\mcM}{\mcN}$ such that $\Diag_\ell(T_\ell)$ is invertible 
(considered as an $(I_\ell \times J_\ell)$-matrix).
By Lemma~\ref{lem:block-to-diag} we know that $\Diag(T_\ell) \in 
\HomMat{\mcM}{\mcN}$ (we are using that the pair $(\mcM, \mcN)$ is 
block-generated).
Since $S$ is a generator, we can select $X_\ell \in \StabMat{\mcM}$ such 
that $X_\ell S = \Diag(T_\ell)$.

Now, let $P_\ell$ be the $I \times I$-matrix which is the identity on the 
block $I_\ell \times I_\ell$ and which is zero on all other blocks.
Then $P_\ell \cdot \Diag(T_\ell) = \Diag_\ell(T_\ell)$.
Hence $P_\ell X_\ell S = \Diag_\ell(T_\ell)$. 
We conclude that $(\sum_\ell P_\ell X_\ell) S = \sum_\ell 
\Diag_\ell(T_\ell)$. The right-hand side is a matrix of full rank, hence 
$S$ has full rank as well.
\end{proof}

This result is very useful. It says that for block-generated pairs of 
matrix families $(\mcM, \mcN)$ which are \locsimsim, the isomorphism type 
of the $\StabMat{\mcM}$-module $\HomMat{\mcM}{\mcN}$ determines whether 
$\mcM$ and $\mcN$ are simultaneously similar.
Note that in the proof of Theorem~\ref{thm:criterion:smsproblem} we did 
not require that the pair of matrix families $(\mcM, \mcN)$ is
\emph{faithfully} block-generated. If we add this assumption to our 
criterion, then we obtain a corresponding characterisation with respect to 
the algebra $\StabMatDiag{\mcM}$ and the $\StabMatDiag{\mcM}$-module 
$\HomMatDiag{\mcM}{\mcN}$ consisting of block-diagonal matrices only:
\begin{thm}[\sms-problem over \fbg pairs]
\label{thm:sms:char:fbg}
 Let $(\mcM, \mcN)$ be a \emph{faithfully} block-generated pair of matrix 
families $\mcM$ and $\mcN$ as above, and assume further that $\mcM$ and 
$\mcN$ are locally simultaneously similar.
 Then $\mcM$ and $\mcN$ are simultaneously similar if, and only if, 
 the $\StabMatDiag{\mcM}$-module $\HomMatDiag{\mcM}{\mcN}$ is cyclic.
\end{thm}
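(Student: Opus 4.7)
The plan is to mirror the proof of Theorem~\ref{thm:criterion:smsproblem} but with every matrix, algebra and module replaced by its block-diagonal counterpart; faithfulness is what lets us stay entirely inside the block-diagonal world, via Corollary~\ref{cor:fbgfam:diag} and Corollary~\ref{cor:fbg:diagmod}.

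For the forward direction, suppose $\mcM$ and $\mcN$ are simultaneously similar. Applying Corollary~\ref{cor:fbg:diagmod} yields an invertible matrix $S \in \HomMatDiag{\mcM}{\mcN}$. Being both block-diagonal and invertible, each diagonal block $\Diag_\ell(S)$ is invertible as an $I_\ell \times J_\ell$-matrix, so $S$ admits a block-diagonal two-sided inverse $S^{-1}$. Imitating Lemma~\ref{lemma:suff:sms-solution}, from $M_k S = S N_k$ one derives $S^{-1} M_k = N_k S^{-1}$, whence for any $Y \in \HomMatDiag{\mcM}{\mcN}$
\[ M_k (Y S^{-1}) \;=\; Y N_k S^{-1} \;=\; (Y S^{-1}) M_k,\]
so $Y S^{-1} \in \StabMat{\mcM}$; because both factors are block-diagonal, their product is too, and in fact $Y S^{-1} \in \StabMatDiag{\mcM}$. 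Therefore $Y = (Y S^{-1}) \cdot S \in \StabMatDiag{\mcM} \cdot S$, so $S$ generates the module and cyclicity is proved.

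For the converse, let $S \in \HomMatDiag{\mcM}{\mcN}$ be a generator. For each $\ell < n$, local simultaneous similarity provides $T_\ell \in \HomMat{\mcM}{\mcN}$ with $\Diag_\ell(T_\ell)$ invertible, and Corollary~\ref{cor:fbgfam:diag} (using the faithful block-generation of $(\mcM,\mcN)$) places $\Diag(T_\ell)$ in $\HomMatDiag{\mcM}{\mcN}$, so cyclicity yields $X_\ell \in \StabMatDiag{\mcM}$ with $X_\ell S = \Diag(T_\ell)$. Writing $P_\ell$ for the $I \times I$-matrix that is the identity on block $I_\ell \times I_\ell$ and zero elsewhere, we get $P_\ell X_\ell S = \Diag_\ell(T_\ell)$, and summing over $\ell$ produces
\[ \Bigl(\sum_{\ell < n} P_\ell X_\ell\Bigr) S \;=\; \sum_{\ell < n} \Diag_\ell(T_\ell),\]
a block-diagonal matrix whose diagonal blocks are all invertible, hence of full rank. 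The rank inequality forces $S$ itself to have full rank, and then Corollary~\ref{cor:fbg:diagmod} delivers simultaneous similarity of $\mcM$ and $\mcN$. The only step that requires real care beyond the non-faithful version is the forward direction's verification that $Y S^{-1}$ lies in the \emph{diagonal} subalgebra $\StabMatDiag{\mcM}$ rather than merely in $\StabMat{\mcM}$; this is the single genuinely new ingredient and it is handled by the observation that the product of block-diagonal matrices is block-diagonal. Everything else is a faithful block-diagonal echo of the proof of Theorem~\ref{thm:criterion:smsproblem}.
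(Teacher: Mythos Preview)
Your proof is correct and follows essentially the same approach as the paper. The only minor difference is in the converse: the paper observes directly that since $X_\ell$ and $S$ are both block-diagonal, the equation $X_\ell S = T_\ell$ restricts blockwise to $\Diag_\ell(X_\ell)\,\Diag_\ell(S) = \Diag_\ell(T_\ell)$, forcing $\Diag_\ell(S)$ to be invertible for each $\ell$ without invoking the $P_\ell$-summation trick; your route via summing $P_\ell X_\ell$ is equally valid but slightly less direct in the block-diagonal setting.
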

\begin{proof}
 In the light of Corollary~\ref{cor:fbg:diagmod}, it suffices to show that 
$\HomMatDiag{\mcM}{\mcN}$ contains an invertible matrix if, and only if, 
$\HomMatDiag{\mcM}{\mcN}$ is cyclic.
Again, the direction from left to right follows as in 
Lemma~\ref{lemma:suff:sms-solution} and we don't need the 
assumption of local simultaneous similarity for this direction.
For the remaining part, assume that $\StabMatDiag{\mcM} \cdot S = 
\HomMatDiag{\mcM}{\mcN}$ 
for some $S \in \HomMatDiag{\mcM}{\mcN}$.
Since $\mcM$ and $\mcN$ are \locsimsim, we can find for every $\ell < n$ a 
matrix $T_\ell \in \HomMatDiag{\mcM}{\mcN}$ such that $\Diag_\ell(T_\ell)$ 
is invertible as an $(I_\ell \times J_\ell)$-matrix.
Moroever, $X_\ell \cdot S = T_\ell$ for some matrix $X_\ell \in 
\StabMatDiag{\mcM}$ by our assumption that $S$ generates 
$\HomMatDiag{\mcM}{\mcN}$. Since $\StabMatDiag{\mcM}$ and 
$\HomMatDiag{\mcM}{\mcN}$ contain block-diagonal matrices only, it follows 
that $\Diag_\ell(X_\ell) \cdot \Diag_\ell(S) = \Diag_\ell(T_\ell)$.
This, in turn, implies that $\Diag_\ell(S)$ is invertible. Since  
$\ell < n$ was chosen arbitrarily, we can conclude that $S$ is invertible.
\end{proof}

\section{Definability of linear-algebraic operators}
\label{sec:homogeneous}

In this section, we delve deeper into the analysis of definable 
linear-algebraic operators in CFI-structures.
Specifically, we establish two key ingredients for proving our main
result in the following Section~\ref{sec:main-result}.
Our first step is to introduce an equivalence relation 
(Definition~\ref{def:linalg:isom}) between structures
that allows us to establish lower bounds for finite-variables logics with general linear-algebraic operators, cf.\ Section~\ref{sec:LALogic}.
This definition is motivated by, and strongly connected to, the definition
of the invertible map equivalence that we introduced 
in Section~\ref{sec:IMEquivdef}. We further discuss relations with
the concept of \emph{coherent configurations}.
Secondly, in Section~\ref{sec:cyclic},  we show that the solvability
problem for certain linear equation systems can be defined in counting logic in the strong functional sense, that is we can not only define the (Boolean) solvability problem, but we can even express full solution spaces of the given system in counting logic, see Theorem~\ref{thm:solvcocylic}. The specific setting for which we can
establish this definability result is that of linear equation systems over a field $\field F$ which are interpreted in
CFI-structures from a class $\CFIclass{\mcF}{p}$ where $\characteristic(\field F) \neq p$. We will make heavy use of this result in our proof of Theorem~\ref{thm:main:cfi}.

\subsection{Algebraic Structure of Equivalence Relations}
\label{sec:alg-structure}

We now want to show how the algebraic machinery that we have developed can
be used to study definability in the logic $\LALogic$.  As a first step we observe that the equivalence relations $\IMequiv{k}{Q}$ induce, in a natural way,
an $\field F$-algebra over any field $\field F$.  Indeed, this is true
of equivalence relations satisfying a natural stability
condition we elaborate below.  In particular, this is satisfied not
only by  $\IMequiv{k}{Q}$, for any $Q$ and sufficiently large $k$ but
also by the partition in a structure into automorphism orbits and
also the $\cequivk$ relations.
We begin by recalling the definition of a \emph{coherent
  configuration} (see~\cite[Chap.~3]{Cameron}).

  \begin{defi}\label{def:coherent}
  A finite set $I$ and an equivalence relation $\sim$ on $I^2$ form
  a \emph{coherent configuration} if the following three conditions
  hold for any $a,b,c,d \in I$.
  \begin{enumerate}
  \item If $(a,a) \sim (b,c)$ then $b=c$.
  \item If $(a,b) \sim (c,d)$ then $(b,a) \sim (d,c)$.
  \item If $(a,b) \sim (c,d)$ and $E$ and $E'$ are $\sim$-equivalence classes
$$|\{ e \mid (a,e) \in E \text{ and } (e,b) \in E'\}| =  |\{ e \mid (c,e) \in E \text{ and } (e,d) \in E'\}|.$$
  \end{enumerate}
\end{defi}

A coherent configuration gives rise for each field $\field F$ to an
$\field F$-algebra.  Such algebras are closely related to \emph{coherent
  algebras} in the literature (see e.g.~\cite{Friedland,Higman}).
Specifically, given a finite set $I$ and $E \subseteq I^2$, we denote
by $M_E$ the $0$-$1$ $(I \times I)$-matrix such that $(M_E)(a,b) =
1$ if, and only if, $(a,b) \in E$.

\begin{defi}
  For any finite set $I$ and an equivalence relation $\sim$ on $I^2$,
  we write $\CA{I}{\sim}{\field F}$ for the collection of matrices
  that are $\field F$-linear combinations of matrices from the set $\{
  M_E \mid E \text{ is an $\sim$-equivalence class} \}$.
\end{defi}

While we have defined this notion for any equivalence relation, the
only interesting case is when $(I,\sim)$ forms a coherent
configuration.  In this case, it can be seen that $\CA{I}{\sim}{\field
  F}$ is an $\field F$-algebra.  Indeed, it is immediate from the
definition that it is an $\field F$-vector space with the collection
of matrices $M_E$ forming a basis.  Thus, to see that it forms an
$\field F$-algebra, it suffices to show that it is closed under matrix
multiplication.  More particularly, it suffices to show that the
product of two basis matrices is itself in $\CA{I}{\sim}{\field F}$.

\begin{lem}\label{lem:coherent}
  If $(I,\sim)$ is a coherent configuration, then
  $\CA{I}{\sim}{\field F}$ is an $\field F$-algebra.
\end{lem}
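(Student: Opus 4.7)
As noted in the paragraph preceding the lemma, $\CA{I}{\sim}{\field F}$ is by construction an $\field F$-vector space with basis $\{M_E : E \text{ is a $\sim$-equivalence class of } I^2\}$. To upgrade this to an $\field F$-algebra in the sense of Definition~\ref{def:groupalgebra}, associativity, bilinearity, and compatibility of scalar multiplication are inherited from the ambient matrix algebra $\MatAlg{I}{\field F}$; the only two things requiring proof are (a) closure under matrix multiplication and (b) the presence of the multiplicative identity $\idmat I$.

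For (a), since multiplication distributes over $\field F$-linear combinations, it suffices to verify that $M_E \cdot M_{E'} \in \CA{I}{\sim}{\field F}$ whenever $E$ and $E'$ are $\sim$-equivalence classes. Compute directly
\[ (M_E \cdot M_{E'})(a,b) \;=\; \sum_{e \in I} M_E(a,e)\cdot M_{E'}(e,b) \;=\; \bigl|\{e \in I : (a,e) \in E \text{ and } (e,b) \in E'\}\bigr|.\]
By clause~(3) of Definition~\ref{def:coherent}, this quantity depends only on the $\sim$-equivalence class of $(a,b)$. Hence there exist scalars $c_F \in \N \subseteq \field F$, indexed by the $\sim$-equivalence classes $F$, such that $M_E \cdot M_{E'} = \sum_F c_F\, M_F$, which lies in $\CA{I}{\sim}{\field F}$.

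For (b), consider the diagonal $\Delta = \{(a,a) : a \in I\} \subseteq I^2$. By clause~(1) of Definition~\ref{def:coherent}, if $(a,a)\sim (b,c)$ then $b=c$, so every $\sim$-equivalence class that meets $\Delta$ is entirely contained in $\Delta$. Thus $\Delta$ decomposes as a disjoint union $\Delta = \bigsqcup_{i} E_i$ of $\sim$-classes, and the identity matrix $\idmat I = M_\Delta = \sum_i M_{E_i}$ lies in $\CA{I}{\sim}{\field F}$. Combined with~(a), this shows $\CA{I}{\sim}{\field F}$ is a subring with unity of $\MatAlg{I}{\field F}$, and hence an $\field F$-algebra.

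The one step deserving a remark is the role of the three conditions: clause~(3) is what does the real work in~(a), clause~(1) is needed in~(b) so that $\idmat I$ is expressible as a sum of basis elements, and clause~(2) is not needed for the algebra structure itself (it ensures, additionally, that the algebra is closed under transposition). I do not anticipate a genuine obstacle; the proof amounts to unfolding the definition of matrix multiplication and reading off exactly what the axioms of a coherent configuration assert.
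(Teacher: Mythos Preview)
Your proof is correct and follows essentially the same approach as the paper: both reduce closure under multiplication to showing that $(M_E \cdot M_{E'})(a,b)$ depends only on the $\sim$-class of $(a,b)$, which is read off directly from clause~(3) of the coherent-configuration axioms. Your treatment is in fact slightly more complete than the paper's, since you explicitly verify via clause~(1) that the identity matrix lies in $\CA{I}{\sim}{\field F}$ (the paper's own definition of $\field F$-algebra requires a unit, but its proof does not address this); one minor slip is that you cite Definition~\ref{def:groupalgebra} for the notion of $\field F$-algebra, whereas that label points to the definition of group algebra.
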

\begin{proof}
  As noted above, it suffices to prove that if $C = M_E$ and $D =
  M_{E'}$ are two matrices defined from equivalence classes of $\sim$,
  then their product $CD$ is in  $\CA{I}{\sim}{\field F}$.  For this,
  it suffices to show that whenever $(a,b) \sim (c,d)$, we $CD(a,b) =
  CD(c,d)$ since this implies that $CD$ can be expressed as a linear
  combination of the matrices $M_E$.  In other words, we only need to
  show that $CD: I \times I \ra {\field F}$ is constant on each
  equivalence class $E$ of $\sim$.  But, this follows immediately from
  the definition of coherent configurations.
  \begin{align*}
    CD(a,b) & =  \sum_{e \in I} C(a,e)D(e,b) \\
            & =  |\{ e \in I \mid C(a,e) = 1 \text{ and } D(e,b) =
                  1\}| \,\, (\mod \characteristic(\field F)) \\
            & = \sum_{E \in I^2/\sim} |\{e \in E \mid C(a,e) = 1 \text{ and } D(e,b) =
                  1\}| \,\, (\mod \characteristic(\field F)) \\
            & = \sum_{E \in I^2/\sim} |\{e \in E \mid C(c,e) = 1 \text{ and } D(e,d) =
                  1\}| \,\, (\mod \characteristic(\field F)) \\
            & = CD(c,d). 
  \end{align*}
Here the second equality is from the fact that $C$ and $D$ are $0$-$1$
matrices, the third from the fact that the equivalence classes form a
partition of $I \times I$ and the fourth from the definition of a
coherent configuration.
\end{proof}

When, $(I,\sim)$ is a coherent configuration, we call
$\CA{I}{\sim}{\field F}$ its associated $\field F$ algebra.   As an example,
fix a finite structure $\mfA$ and a positive integer $\ell$.  It is
clear that the partition of $A^{\ell}$ into orbits of the automorphism
group of $\mfA$ induces a coherent configuration.  Thus, by
Lemma~\ref{lem:coherent}, we get an $\field F$-algebra.  In the case
when $\field{F}$ is the complex field, this is the centraliser algebra
of the action of the automorphism group of $\mfA$ on $A^l$
(see~\cite{Cameron}).  

Now, fix $k \geq 3\ell$ and consider the
equivalence relation $\cequivk$ on $A^{2\ell}$.  Then,
$(A^{\ell},\cequivk)$ is a coherent configuration.  Indeed, the first
two conditions in Definition~\ref{def:coherent} are easily seen to be
satisfied.  For the third, let $a,b \in A^{\ell}$.  Recall that for each
equivalence class $E \subseteq A^{2\ell}$ of $\equiv^k$ there is
a formula $T_E(\tup{x},\tup{y}) \in \LC^k$ that defines exactly the 
tuples $(a,b)\in E$ in $\mfA$.  Thus, if there are exactly $t$ tuples
$c$ such that $(a,c) \in E$ and $(c,b) \in E'$, the formula 
$$\exists^{=t} \tup{z} T_E(\tup{x},\tup{y}) \land
T_{E'}(\tup{z},\tup{x})$$
of $\LC^k$ is true of $(a,b)$ and hence of any $(c,d)$ with $ab
\cequivk cd$.  The formula is in $\LC^k$ by a standard renaming of
variables (since $k \geq 3\ell$).  As we have written the formula, it
involves a counting quantifier over $\ell$-tuples, but this can be
converted to a formula with ordinary counting
quantifiers, see~\cite{Otto97} for details.  Since
$(A^{\ell},\cequivk)$ is a coherent configuration, for any field
$\field F$, it generates an $\field F$-algebra, which we denote $\clalgebra{\mfA}{\ell;
  \LC^k}{\field F}$.  We also write $\CBMat[\mfA; \ell; \LC^k]$ for
the standard basis of the algebra, i.e.\ the collection of $0$-$1$
matrices given by the $\cequivk$-equivalence classes.  Note that we
did not specify the field $\field F$ in the notation for the basis as
the matrices are same whatever the field.

As a third example, fix a set $Q$ of prime numbers and consider the
equivalence relation $\IMequiv{k}{Q}$ defined on tuples in
$A^{2\ell}$.  Again, $(A^{\ell},\IMequiv{k}{Q})$ is a coherent
configuration by exactly the argument given above, using the fact that
counting quantifiers are expressible in the logic $\LALogic(Q)$ (see
Section~\ref{sec:IMEquiv}).  Thus, for any field $\field F$, this
defines an $\field F$-algebra which we denote $\clalgebra{\mfA}{\ell;
  \LAkLogic(Q)}{\field F}$.  Similarly, we write
$\CBMat[\mfA; \ell; \LAkLogic(Q)]$ for the standard basis of this
algebra. 

We now turn to looking at indistinguishability of a pair of
structures.  The key notion is the following.
\begin{defi}
  \label{def:linalg:isom}
 Let $\ell \geq 1$, let $k \geq 3\ell$, let $\field F$ be a field, let 
$\mfA$ and $\mfB$ be two structures and let $\mcL$ be one of the
logics $\LC^k$ or $\LAkLogic(Q)$ for some $Q$. 
Then $\mfA$ and $\mfB$ are called \emph{$\linisom{\field 
F}{\ell}{\mcL}$} if the following holds:
\begin{enumerate}
 \item $\mfA \cequivk \mfB$, and
 \item if  $M_i\colon I \times I \to \field F \in 
\CBMat[\mfA; \ell; \mcL] \subseteq \clalgebra{\mfA}{\ell; \mcL}{\field F}$ 
 and $N_i\colon J \times J \to \field F \in \CBMat[\mfB; \ell; \mcL] 
\subseteq \clalgebra{\mfB}{\ell; \mcL}{\field F}$ denote the 
corresponding $i$-th basis matrices for $i < s$, where $I = 
A^\ell$ and $J=B^\ell$ and where $s$ denotes the number of $\mcL$-equivalence
classes on $2\ell$-tuples in $\mfA$ (and $\mfB$), 
then we can find an invertible matrix $S\colon J \times I \to \field F$ 
such that 
\[ S \cdot M_i \cdot S^{-1} = N_i \text{ for all } i < s.\]
\end{enumerate}
\end{defi}

In short, $\mfA$ and $\mfB$ are called $\linisom{\field 
F}{\ell}{\mcL}$ if the $\field F$-algebras generated by the partitions of their
$2\ell$-tuples into $\mcL$-equivalence classes are isomorphic (as
algebras) \emph{and} this isomorphism is witnessed by the simultaneous
similarity of their standard bases.

Note that the requirement $\mfA$ and $\mfB$ are $\linisom{\field 
F}{\ell}{\mcL}$ means that not only are the algebras
$\clalgebra{\mfA}{\ell; \mcL}{\field F}$ and $\clalgebra{\mfB}{\ell;
  \mcL}{\field F}$ isomorphic as $\field F$ algebras, but this
isomorphism is witnessed by a simultaneous similarity transform on the
standard bases $\CBMat[\mfA; \ell; \mcL]$ and  $\CBMat[\mfB; \ell;
\mcL]$.  This is analogous to the notion of an \emph{inner
  isomorphism} for coherent algebras~\cite{Friedland}.
The main observation with regard to indistinguishability of structures
is now the following lemma.  
\begin{lem}\label{lem:elem-equiv}
  If $\mfA$ and $\mfB$ are two structures that are $\linisom{\field 
F_q}{\ell}{\LAkLogic(Q)}$ for all $q \in Q$, then they are not
distinguished by any sentence of $\lary{\LAkLogic(Q)}$.
\end{lem}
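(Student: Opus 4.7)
The plan is to prove, by induction on the structure of formulas $\phi$ of $\lary{\LAkLogic(Q)}$, a strengthened statement from which the lemma immediately follows: for every formula $\phi(\bar x, \bar y)$ with $\bar x$ and $\bar y$ being $\ell$-tuples of free variables (and possibly additional enclosing parameters), and for every $q \in Q$, the $0$-$1$ matrix $M^{\mfA}_\phi \in \{0,1\}^{A^\ell \times A^\ell}$ defined by $\phi$ in $\mfA$ lies in $\clalgebra{\mfA}{\ell; \LAkLogic(Q)}{\field F_q}$, and the similarity $S_q$ from the $\linisom{\field F_q}{\ell}{\LAkLogic(Q)}$ hypothesis satisfies $S_q \cdot M^{\mfA}_\phi \cdot S_q^{-1} = M^{\mfB}_\phi$. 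A sentence will then be handled as the special case where $\phi$ is a Boolean combination or a Lindström quantifier with no free variables, whose truth value is read off directly in the quantifier step below.

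In the base case, atomic formulas $\phi(\bar x, \bar y)$ define indicator matrices of unions of atomic types. Since $\LAkLogic(Q)$-equivalence refines $\LC^k$-equivalence which in turn refines atomic types, these matrices are sums of $\LAkLogic(Q)$-basis matrices, hence members of the algebra. The similarity claim will use condition (1) of $\linisom$, namely $\mfA \cequivk \mfB$, to canonically match $\LC^k$-types in $\mfA$ and $\mfB$; the indexing of $\LAkLogic(Q)$-basis matrices in condition (2) must then be chosen so that it refines this $\LC^k$-type correspondence, ensuring atomic formulas yield matched matrices. For Boolean connectives and (possibly infinitary) conjunctions/disjunctions, I will use the fact that the basis matrices have pairwise disjoint supports, so conjunction, disjunction, and negation (via the all-ones matrix $\sum_i M_i$) correspond to intersection, union, and complementation of index sets of basis matrices. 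Infinitary Boolean combinations collapse to finite ones because the algebra is finite-dimensional, and closure under the similarity $S_q$ is preserved throughout these operations since $S_q$ induces an $\field F_q$-algebra isomorphism between $\clalgebra{\mfA}{\ell; \LAkLogic(Q)}{\field F_q}$ and $\clalgebra{\mfB}{\ell; \LAkLogic(Q)}{\field F_q}$ mapping $M_i$ to $N_i$.

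The critical step is the Lindström quantifier case. Suppose $\phi = \mcQ^{t,\ell}_f I$, where $I = (\varphi_1, \ldots, \varphi_m)$ is an $\ell$-dimensional interpretation and $f$ is an $m$-ary linear-algebraic operator over the prime field $\field F_q$. By the inductive hypothesis applied to each $\varphi_i$, we have $S_q \cdot M^{\mfA}_{\varphi_i} \cdot S_q^{-1} = M^{\mfB}_{\varphi_i}$, which means that the $m$-tuples $(M^{\mfA}_{\varphi_1}, \ldots, M^{\mfA}_{\varphi_m})$ and $(M^{\mfB}_{\varphi_1}, \ldots, M^{\mfB}_{\varphi_m})$ are related by the $\field F_q$-vector space isomorphism between the index sets $A^\ell$ and $B^\ell$ induced by $S_q$. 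By the built-in invariance of linear-algebraic operators under $\field F_q$-vector space isomorphism (as spelled out in Section~\ref{sec:LALogic}), $f$ must output the same value on both tuples, so $\phi$ holds in $\mfA$ iff it holds in $\mfB$. Lindström quantifiers nested under enclosing parameters are handled by applying the same argument pointwise, since $S_q$ transforms each parameter-indexed family of matrices uniformly.

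The main obstacle will be arranging the indexing in condition (2) of $\linisom$ so that the similarity $S_q$ respects the finer partition structure required at the base of the induction. Without this coordination, $S_q$ could in principle permute $\LAkLogic(Q)$-basis matrices across distinct $\LC^k$-types, breaking the atomic case. I expect the key observation is that the $\LC^k$-class indicator matrices are themselves elements of the algebra (as sums of basis matrices) and must correspond under $S_q$ by condition (1), so one can reindex the $\LAkLogic(Q)$-classes within each $\LC^k$-type to make the similarity block-respecting. A secondary subtlety is that condition (2) holds independently for each $q \in Q$ with its own $S_q$, and the inductive invariant must be maintained simultaneously for all such $q$; this is guaranteed by the hypothesis as stated and requires no further argument beyond running the induction in parallel over $q$.
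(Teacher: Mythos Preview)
Your structural induction does not close, and the issue is precisely the one you try to wave away with ``applying the same argument pointwise'' to enclosing parameters. The invariant you propose, $S_q \cdot M^{\mfA}_\phi \cdot S_q^{-1} = M^{\mfB}_\phi$, only makes sense when $\phi$ has exactly the $2\ell$ free variables $\bar x, \bar y$, so that $M^{\mfA}_\phi$ is a single $A^\ell \times A^\ell$ matrix. But in the quantifier step, if $\phi(\bar z) = \mcQ^{t,\ell}_f(\varphi_1(\bar u,\bar v,\bar z),\ldots,\varphi_m(\bar u,\bar v,\bar z))$ has free parameters $\bar z$, then each $\varphi_i$ has $2\ell + |\bar z|$ free variables, and the inductive hypothesis does not apply to it. To make ``pointwise'' sense of this you would need, for each parameter assignment $\bar c$ in $\mfA$, a corresponding assignment $\bar d$ in $\mfB$ such that the resulting matrices are conjugate via $S_q$; but $S_q$ is a linear map $\field F_q^{B^\ell} \to \field F_q^{A^\ell}$, not a bijection between parameter tuples, so no such correspondence is available from the hypothesis.

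The paper sidesteps this in two moves. First, it invokes the already-proved theorem that every $\LAkLogic(Q)$-formula defines, within each structure separately, a relation closed under $\IMequiv{k}{Q}$; this is where the handling of parameters actually happens, and it uses the full $k$-tuple equivalence $\IMequiv{k}{Q}$ rather than the $\ell$-ary similarity $S_q$. Second, it takes a \emph{minimal} distinguishing sentence $\phi$, which forces the head of $\phi$ to be a single Lindstr\"om quantifier with no external parameters. Its immediate subformulas then have exactly $2\ell$ free variables, and by the first step they define matrices that are sums of basis matrices in $\CBMat[\mfA;\ell;\LAkLogic(Q)]$ and $\CBMat[\mfB;\ell;\LAkLogic(Q)]$; now the simultaneous similarity $S_q$ applies directly. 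Your worry about coordinating the indexing of basis matrices across the two structures is legitimate and is left implicit in the paper as well, but the structural gap above is the more serious obstacle to your proposed argument.
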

\begin{proof}
  Suppose towards a contradiction that there is a sentence of
  $\lary{\LAkLogic(Q)}$ that distinguishes $\mfA$ from $\mfB$ and let
  $\phi$ be a minimal such sentence.  We can then assume that $\phi$
  has a linear-algebraic quantifier at its head.  If it did not, it
  would be a Boolean combination of such formulas and one of them
  would distinguish $\mfA$ from $\mfB$, contradicting the minimality
  of $\phi$.  Thus, $\phi$
  is of the form $\mcQ^{t,\ell}_f\tup{x},\tup{y}(\theta_1,\ldots,\theta_m)$
  where each $\theta_i(\tup{x}\tup{y})$ defines a $2\ell$-ary relation
  and $f$ is an $\field F_p$-linear-algebraic operator for some $p \in
  Q$.  Since each $\theta_i$ defines a relation on $\mfA$ (resp.\
  $\mfB$) that is
  closed under $\IMequiv{k}{Q}$, the corresponding matrix $M_i$
  (resp.\ $N_i$) can be expressed as as a linear combination of
  matrices in $\CBMat[\mfA; \ell; \LAkLogic(Q)]$ (resp.\
  $\CBMat[\mfB; \ell; \LAkLogic(Q)]$).  Since we have an algebra
  isomorphism that takes $\CBMat[\mfA; \ell; \LAkLogic(Q)]$ to the
  corresponding matrices in $\CBMat[\mfB; \ell; \LAkLogic(Q)]$, it
  follows that $f(M_1,\ldots,M_m) = f(N_1,\ldots,N_m)$ and we derived a
  contradiction. 
\end{proof}

We conclude this section with an observation about the different
coherent configurations we have introduced along with their associated
algebras.  For any structure $\mfA$, the partition of $A^{2\ell}$ into
its automorphism orbit is the finest partition we are ever interested
in.  The other partitions, given by the equivalence relations
$\cequivk$ and $\IMequiv{k}{Q}$ for various  $k$ and $Q$ are
approximations of this.  In general, because we can define counting in
$\LAkLogic(Q)$, the partition given by $\cequivk$ is the coarsest of
them.  Thus, if for a structure $\mfA$, the partition given by
$\cequivk$ is the same as the partition into automorphism orbits, we
know that all the coherent configurations, and so all the algebras
they generate are, in fact, the same.  The structures we consider in
the remainder of this paper, i.e.\ the CFI structures of the form
$\CFIgraph{G}{p}{\lambda}$ have this property, as we discussed in
Section~\ref{sec:cfi:homogeneous}.  
Thus, we need not consider the algebras $\clalgebra{\mfA}{\ell;
  \LAkLogic(Q)}{\field F}$ explicitly.  We will confine ourselves to
describing $\clalgebra{\mfA}{\ell;
  \LC^k}{\field F}$, which turns out to be the same algebra.

\subsection{Solving Co-cyclic Linear Equation Systems}\label{sec:cyclic}

In the following, we assume some fixed encoding of 
linear equation systems as finite structures. It is an easy exercise to come up with
an appropriate representation for linear equation systems over 
finite fields and over the field of rationals (see e.g.~\cite{Holm10}). In particular, for this setting all 
natural encodings are inter-definable, which is why we refrain from defining 
an encoding explicitly.
On the other hand, linear equation systems over other (infinite) fields may not 
possess an obvious structural encoding or may not even have a finite 
representation at all. For instance, we cannot represent real numbers by finite 
means, so general linear equation system over the reals cannot be represented 
by finite structures for trivial reasons.
To avoid such problems, we will henceforth restrict to linear 
equation systems  over finite fields $\field F_{p^n}$ and over the field of rationals $\mbQ$
which in particular covers all \emph{prime fields}. As we shall see later, 
for our applications it is sufficient to 
solve linear equation systems over prime fields although larger 
fields may be present in the background.

To establish our main technical result (Theorem~\ref{thm:main:cfi})
we need that solution spaces of linear equation 
systems over a field $\field F$ are definable in counting logic if the systems 
are interpreted in (ordered pairs of) CFI-structures from a class
$\CFIclass{\mcF}{p}$ where $p \neq \characteristic(\field F)$.
This has been established in \cite{GraedelGroPagPak19} but our approach here
is somewhat different than the one in that paper. We present the precise result that we
need and a high-level sketch of the proof. For more details, we refer to 
\cite{GraedelGroPagPak19}. 
Technically the definability result depends on the following cyclicity property of CFI-structures.

\begin{defi}[Cyclic Structures]
 An \emph{$\ell$-cyclic} structure $\mfA$ is an $\ell$-homogeneous 
structure with an Abelian automorphism group.
\end{defi}

The following result concerning cyclic structures has been established in \cite{GraedelPak19}. 

\begin{thm}[Counting-Logic-Types in Cyclic Structures]
\label{thm:ctype:cyclic}
Let $k \geq 1$, let $\mfA$ denote an $\ell$-cyclic structure with (Abelian) 
automorphism group $\Gamma$, and let $\ba \in A^k$. 
Then for every $\bb, \bc \in \Gamma(\ba)$ we have 
$(\mfA, \ba, \bb) \equiv^{2\cdot k \cdot \ell} (\mfA, \ba, \bc)$ if, 
and only if, $\bb = \bc$.
Hence, the linear preorder defined by the counting-type formula $\ctype_{2\cdot 
k\cdot \ell}[\ba](\bx,\by) \in \FPC$ (see Section~\ref{sec:counting_logic}) 
defines in the structure $\mfA$ a linear order on the $\Gamma$-orbit 
$\Gamma(\ba)$ of $\ba$.
\end{thm}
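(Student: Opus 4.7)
\medskip

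The plan is to prove the nontrivial direction ($\Rightarrow$) by combining the $\ell$-homogeneity of $\mfA$ with the fact that in an Abelian group action the stabiliser of any element fixes the entire orbit of that element pointwise. The direction from right to left is obvious: if $\bb = \bc$ then $(\mfA, \ba, \bb)$ and $(\mfA, \ba, \bc)$ are literally the same expansion, hence trivially $\equiv^{2k\ell}$-equivalent. The final sentence of the theorem, about the counting-type formula $\ctype_{2k\ell}[\ba](\bx,\by)$, will then be a direct consequence: by definition of $\ctype$ this formula defines a preorder whose equivalence classes are the $\equiv^{2k\ell}$-classes relative to the parameter~$\ba$, and the theorem says that restricted to $\Gamma(\ba)$ these classes are singletons, so the preorder becomes a linear order on $\Gamma(\ba)$.

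For the main direction, suppose $(\mfA, \ba, \bb) \equiv^{2k\ell} (\mfA, \ba, \bc)$, and note that $(\ba,\bb)$ and $(\ba,\bc)$ are $2k$-tuples in $\mfA$. By $\ell$-homogeneity applied with tuple length $2k$ (so that the threshold $\ell \cdot 2k = 2k\ell$ is exactly the number of variables we have available), the assumption implies
\[
\Gamma\bigl((\ba,\bb)\bigr) = \Gamma\bigl((\ba,\bc)\bigr).
\]
Hence there is some $\tau \in \Gamma$ with $\tau(\ba) = \ba$ and $\tau(\bb) = \bc$; in particular $\tau \in \Stab(\ba)$. Because $\bb \in \Gamma(\ba)$, we can pick $\pi \in \Gamma$ with $\pi(\ba) = \bb$. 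Now exploit commutativity of $\Gamma$:
\[
\tau(\bb) \;=\; \tau\bigl(\pi(\ba)\bigr) \;=\; (\tau\pi)(\ba) \;=\; (\pi\tau)(\ba) \;=\; \pi\bigl(\tau(\ba)\bigr) \;=\; \pi(\ba) \;=\; \bb.
\]
Combining $\tau(\bb) = \bc$ and $\tau(\bb) = \bb$ yields $\bb = \bc$, as required.

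The argument is short and the main conceptual step is really a single observation: in an Abelian action, stabilisers are constant along orbits, so $\Stab(\ba) \subseteq \Stab(\bx)$ for every $\bx \in \Gamma(\ba)$, and thus no nontrivial automorphism that fixes $\ba$ can move any other element of $\Gamma(\ba)$. The only nonobvious point is to line up the quantitative parameters so that $\ell$-homogeneity can be applied to the doubled-length tuples $(\ba,\bb)$ and $(\ba,\bc)$, and this is exactly why the bound in the theorem is $2k\ell$ rather than $k\ell$. No further combinatorial or logical machinery beyond $\ell$-homogeneity and elementary group theory is needed.
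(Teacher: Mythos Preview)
Your proof is correct and is exactly the intended argument: apply $\ell$-homogeneity to the concatenated $2k$-tuples to obtain a $\tau\in\Stab(\ba)$ with $\tau(\bb)=\bc$, then use commutativity of $\Gamma$ to show $\tau$ fixes every point of $\Gamma(\ba)$, forcing $\bb=\bc$. The paper does not give its own proof of this statement but defers to \cite{GraedelPak19}; the same Abelian-stabiliser trick you use also appears verbatim later in the paper (in the proof of Theorem~\ref{thm:main:definability}, step~(\ref{proofplan2})), so your argument matches the paper's reasoning.
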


\begin{defi}[Co-cyclic linear equation systems]
 A linear equation system  $M \cdot \bx = \bb$ over a prime field~$\field 
F$  is called \emph{co-cyclic} if it is
represented by some $\ell$-cyclic structure with automorphism 
group $\Gamma$ whose order is co-prime with the characteristic of $\field F$.
\end{defi}

\begin{thm}[Solvability of co-cyclic linear equation systems]
\label{thm:solvcocylic}
For every $\ell \geq 1$ there exist formulae of counting logic $\LC^\omega$ 
(actually of $\FPC$) with at most $\mcO(\ell)$ many variables which, 
given a co-cyclic linear equation system 
$M \cdot \bx = \bb$ over a prime field~$\field 
F$, for a coefficient matrix $M\colon I \times J \to \field F$ and a 
vector $\bc\colon I \to \field F$,  
define whether the system is solvable. 
Moreover, in the case that the system is solvable,  the formulae also 
define a solution $\bc \colon 
J \to \field F$ and a $J \times (J \times 
|J|)$-matrix $K$  such that $\im(K) = \ker(M)$.
\end{thm}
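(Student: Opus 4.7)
The plan is to leverage the interaction between the Abelian $\Gamma$-action on the system and Maschke's theorem, using homogeneity to supply the linear orderings needed for $\FPC$-definability. Write $\Gamma$ for the automorphism group of the $\ell$-cyclic structure representing the system; the co-cyclic hypothesis guarantees that $|\Gamma|$ is invertible in $\field F$. Since both $M$ and $\bb$ are encoded by this structure, they are $\Gamma$-invariant: $M$ commutes with the $\Gamma$-actions on $\field F^I$ and $\field F^J$, and $\bb$ is constant on the $\Gamma$-orbits of $I$. In particular $\ker(M) \subseteq \field F^J$ is a $\Gamma$-invariant subspace.

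First I would introduce the averaging operator $\pi_\Gamma = \frac{1}{|\Gamma|}\sum_{\gamma \in \Gamma} \gamma$, which is well-defined precisely because $|\Gamma|$ is invertible in $\field F$, and is a $\Gamma$-equivariant idempotent projection of $\field F^J$ onto $(\field F^J)^\Gamma \cong \field F^{J/\Gamma}$. Averaging any candidate solution shows that $M\bx = \bb$ is solvable if, and only if, the induced \emph{quotient system} on orbit-indexed vectors is solvable. By Theorem~\ref{thm:ctype:cyclic}, the counting-type formula $\ctype_{2k\ell}[](\bx, \by)$ is an $\FPC$-formula with $\mcO(\ell)$ variables defining a linear preorder on $J$ whose equivalence classes are exactly the $\Gamma$-orbits, which induces a linear order on $J/\Gamma$. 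On a linearly ordered index set, Gaussian elimination over the prime field $\field F$ is $\FPC$-definable, so both the solvability predicate and a specific $\Gamma$-invariant solution $\bc$ are definable with $\mcO(\ell)$ variables.

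For the kernel matrix $K$, I would exploit the isotypic decomposition of $\ker(M)$ as an $\field F[\Gamma]$-module, which Maschke's theorem (Theorem~\ref{thm:maschke}) together with Theorem~\ref{thm:semisimple} provides for the semisimple group algebra $\field F[\Gamma]$. The trivial isotypic component coincides with $(\ker M)^\Gamma$ and is computable as the kernel of the quotient system, via the same $\FPC$ Gaussian elimination. For the non-trivial components I would use each $j_0 \in J$ as a formula parameter: by Theorem~\ref{thm:ctype:cyclic} applied with parameter $j_0$, the counting-type formula linearly orders $\Gamma(j_0)$, refining the orbit preorder on $J$ in a way that distinguishes all elements of that one orbit. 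Relative to this parametrised refinement, a further $\FPC$ Gaussian elimination on $M\bx = 0$ produces vectors $v_{j_0, t} \in \ker(M)$ for $t \in [|J|]$, and we set $K[j][j_0, t] := v_{j_0, t}[j]$. Ranging $j_0$ over $J$, the resulting family spans $\ker(M)$ because every non-trivial isotypic piece is visible from at least one choice of $j_0$.

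The main obstacle is the kernel step. Individual non-trivial characters of $\Gamma$ are not $\FPC$-definable, so one cannot project directly onto each isotypic component. The workaround is indirect: parametrising by a single $j_0 \in J$ breaks enough of the $\Gamma$-symmetry locally to define spanning vectors for the isotypic pieces supported on $\Gamma(j_0)$, and Maschke's semisimplicity then guarantees that pooling over all $j_0$ yields a spanning set for $\ker(M)$ globally. Verifying this global coverage rigorously, while keeping the variable count at $\mcO(\ell)$ throughout the parametrised Gaussian elimination, is the crux of the argument.
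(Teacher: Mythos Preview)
Your first step---averaging by $\pi_\Gamma = \frac{1}{|\Gamma|}\sum_{\gamma} \gamma$ to obtain a $\Gamma$-invariant solution, then running Gaussian elimination on the linearly ordered quotient $J/\Gamma$---is exactly what the paper does for the solvability predicate and the single solution~$\bc$.

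For the kernel, the paper takes a different and more elementary route that avoids Maschke's theorem and isotypic decompositions entirely. It introduces, for each pair $(j,s)$ with $j \in J_r$ and $0 \le s < |J_r|$, the notion of a \emph{$(j,s)$-generator}: a vector $v \in \ker(M)$ that vanishes on every orbit preceding $J_r$, vanishes at the first $s$ positions of $J_r$ under the parametrised order $<_j$, and has $v(s) = 1$. The linear system $\textsc{Ker}(j,s)$ cutting out such vectors is itself co-cyclic (now for the structure with parameter $j$), so the \emph{first} step applies verbatim and produces one such generator whenever it exists. A complete family of $(j,s)$-generators then spans $\ker(M)$ for the same reason a row-echelon basis does. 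The matrix $K$ is simply the $J \times (J \times |J|)$ array of these generators.

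Your kernel plan has an ambiguity that would need to be resolved. After fixing $j_0$, only the single orbit $\Gamma(j_0)$ becomes totally ordered; the rest of $J$ is still merely preordered, so ``a further $\FPC$ Gaussian elimination on $M\bx = 0$'' is not well-posed as written. What you presumably intend is to reapply the first step to the parametrised structure $(\mfA,j_0)$, whose automorphism group is $\Stab(j_0)$: averaging over $\Stab(j_0)$ and passing to the $\Stab(j_0)$-orbit quotient then gives an ordered system on which Gaussian elimination yields a basis of $(\ker M)^{\Stab(j_0)}$. With that reading your spanning claim does go through (if $v$ is a $\chi$-eigenvector in $\ker M$ with $v[j_0] \neq 0$, then $\chi|_{\Stab(j_0)} = 1$, so $v$ lies in the span of $(\ker M)^{\Stab(j_0)}$), but the Maschke/isotypic scaffolding is then doing very little work---the real engine is the recursive application of the averaging trick, which is precisely what the paper's $(j,s)$-generator construction packages more directly. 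Note also that the paper reserves Maschke's theorem for the later analysis of the centraliser algebra $\StabMatDiag{\mcM}$; it plays no role in the present proof.
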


\commentout{
Technically, the definability result relies on the following two 
properties of CFI-structures $\mfA \in \CFIclass{\mcF}{p}$:
\begin{enumerate}[(i)]
 \item $\mfA$ is $\ell$-homogeneous (for a constant $\ell \geq 1$), and
 \item the automorphism group $\Gamma = \Aut(\mfA)$ of $\mfA$ is an 
\emph{Abelian} $p$-group (recall that $\characteristic(\field F) \neq p$).
\end{enumerate}
We need to rephrase these properties for the setting of linear equation 
systems in order to obtain our definability result.
But before we can do this we need to agree on an encoding of 
linear 
equation systems as finite structures. It is an easy exercise to come up with
an appropriate representation for linear equation systems over 
finite fields and over the field of rationals (and, indeed, these are the only 
cases that we considered in~\cite{GraedelGroPagPak19}). In particular, for this setting all 
natural encodings will be interdefinable, which is why we refrain from defining 
an encoding explicitly.
On the other hand, linear equation systems over other (infinite) fields may not 
possess an obvious structural encoding or may not even have a finite 
representation at all. For instance, we cannot represent real numbers by finite 
means, so general linear equation system over the reals cannot be represented 
by finite structures for trivial reasons.
To avoid such problems, we will henceforth restrict to linear 
equation systems 
over finite fields $\field F_{p^n}$ and over the field of rationals $\mbQ$.
Note that these fields cover the class of all \emph{prime fields}: recall that 
a finite prime 
field is of the form $\field F_p$, for $p \in \Primes$, and that the only 
infinite prime field is $\mbQ$.
Moreover, as we will see later, for our applications it will be sufficient to 
solve linear equation systems over prime fields although larger 
fields may be present in the background.

\begin{thm}[Solvability of co-cyclic linear equation systems]
\label{thm:solvcocylic}
For every $\ell \geq 1$ there exist formulas of counting logic $\LC^\omega$ 
(actually of $\FPC$) with at most $\mcO(\ell)$ many variables that express the 
following.
Given a linear equation system $M \cdot \bx = \bb$ over a prime field~$\field 
F$, for a coefficient matrix $M\colon I \times J \to \field F$ and a constants 
vector $\bc\colon I \to \field F$, which is represented by a finite structure 
$\mfA$ with the following properties:
\begin{enumerate}[(i)]
 \item $\mfA$ is $\ell$-homogeneous, \label{itm:cc1} and
 \item the automorphism group $\Gamma = \Aut(\mfA)$ of $\mfA$ is an 
\emph{Abelian} $p$-group where $\characteristic(\field F) \neq p$, 
\label{itm:cc2}
\end{enumerate}
the formulas define whether the system $M\cdot \bx = \bb$ is solvable. 
Moreover, in case that the system is solvable, then the formulas also 
define a solution $\bc \colon 
J \to \field F$ to the system $M \cdot \bx = \bb$ and an $J \times (J \times 
|J|)$-matrix $K$ with entries in $\field F$ such that $\im(K) = \ker(M)$.
\end{thm}

We refer to linear equation systems $M \cdot \bx = \bb$ over prime fields 
$\field F$ that are encoded as finite structures $\mfA$ which satisfy the
properties  (\ref{itm:cc1}) and (\ref{itm:cc2}) from above as \emph{cocyclic 
linear equation systems}.

}

\begin{proof}
The proof consists of two 
parts. 
First, we show that a single solution of a (solvable) co-cyclic linear 
equation system is definable in $\FPC$. 
Secondly, we use this result in order to determine a generating set for the 
kernel of the given coefficient matrix $M$. These two results together yield a 
(succinct) representation of the solution space of the given linear equation 
system. 

For the first step we make use of a central idea from~\cite{GraedelPak19} where we 
showed that each solvable co-cyclic linear equation system has a 
\emph{symmetric} 
solution, that is a solution which is fixed by any automorphism of the 
underlying structure $\mfA$.
More precisely, let $M \cdot \bx = \bb$ denote a linear equation system 
over a prime field~$\field F$ encoded by a finite structure $\mfA$ and let 
$\Gamma$ denote the automorphism group of the structure $\mfA$ 
which is an \emph{Abelian} $p$-group with $p \in \Primes$ and $p \neq 
\characteristic(\field F)$.
Then, because of the fact that the coefficient matrix $M$ and the constants 
vector $\bb$ are encoded in $\mfA$, they clearly must be invariant under the 
action of the automorphism group $\Gamma$.
If we write the elements $\pi \in \Gamma$ as permutation matrices~$\Pi$, then 
this translates into saying that for all $\pi \in \Gamma$ we have 
$\Pi \cdot M \cdot  \Pi^{-1} = M$ and $\Pi \cdot  \bb = \bb$.

Let us assume that there exists a solution $\bc$ of the system $M \cdot 
\bx = \bb$, that is $M \cdot \bc = \bb$.
Then, for every $\pi \in \Gamma$ we have $\Pi \cdot M \cdot  \bc = \bb$, which, 
in turn, implies that $M \cdot  \Pi \cdot  \bc = \bb$.
Hence, the solution space of $M \cdot \bx = \bb$ is closed under the action of 
$\Gamma$.
We now make use of the fact that $p \neq \characteristic(\field F)$.
From the above it follows that $(\sum_{\pi \in \Gamma} \Pi) \cdot M 
\cdot \bc = |\Gamma| \cdot \bb$, and thus 
\[ M \cdot (\sum_{\pi \in \Gamma} \frac{1}{|\Gamma|} \cdot \Pi \cdot \bc) =  
\bb.\]
Note that we used $p \neq \characteristic(\field F)$ in the above equation 
when we divided by $|\Gamma|$ (which is a power of $p$).
The new solution $d = (\sum_{\pi \in \Gamma} \frac{1}{|\Gamma|} \cdot \Pi \cdot 
\bc)$ has the remarkable property that it is \emph{symmetric}, that is for 
every $\pi \in \Gamma$ we have $\Pi \cdot d = d$. It follows that $d$ is 
\emph{constant on $\Gamma$-orbits}.
This is the central observation: whenever $M \cdot \bx = \bb$ has a 
solution, then it also has a symmetric solution, that is a solution which is 
completely described by its entries on the individual $\Gamma$-orbits.

Finally, we make use of the $\ell$-homogeneity of $\mfA$. This property tells 
us that we can linearly order the $\Gamma$-orbits of the solution vectors in 
$\FPC$ by uniform formulas that only contain $\mcO(\ell)$ many variables.
Having this \FPC-definable linear order on the $\Gamma$-orbits and knowing that 
a solvable 
system $M \cdot \bx = \bb$ always has a symmetric solution (which is constant 
on the $\Gamma$-orbits) allows us to complete our argument as follows. 
Over ordered inputs, $\FPC$ can simulate all 
polynomial-time algorithms. In particular, $\FPC$ can simulate Gaussian 
elimination over ordered linear equation systems which allows us to find an 
\emph{ordered} (symmetric) solution or to conclude that the system is not 
solvable. 

\medskip
The second step is to define a generating set for the 
kernel $\ker(M) \leq \field F^J$ of the coefficient matrix $M\colon I \times J 
\to \field F$ in $\FPC$. 
\commentout{
We know that the structure $\mfA$ is 
$\ell$-homogeneous, which implies that we can order the $\Gamma$-orbits 
in~$\mfA$ using an $\FPC$-formula with  $\mcO(\ell)$ many variables.
But, indeed, we know more about $\Gamma$, namely we know that $\Gamma$ is an 
\emph{Abelian} ($p$-)group.
This fact has the important consequence that 
each $\Gamma$-orbit splits into \emph{singleton sets} when we fix any of 
the elements in the orbit (and this refined partition into singletons is 
\FPC-definable).
To state this in algebraic terms, let us say that we fix some element $j \in 
J$, and let us consider the corresponding $\Gamma$-orbit $\Gamma(j) \subseteq 
J$. We consider the stabiliser 
subgroup of $j$ in $\Gamma$ which we denote by $\Delta = \Stab(j) \leq \Gamma$.
Then we claim that the $\Delta$-orbits on $\Gamma(j)$ are singletons.
To see this, let $\delta \in \Delta$ and let $j' \in \Gamma(j)$.
Then, by the assumption that $j' \in \Gamma(j)$, we can find $\gamma \in 
\Gamma$ such that $\gamma(j) = j'$.
We then have 
\[ \delta(j') = \delta \gamma (j) = \gamma \delta (j) = \gamma(j) = j'.\]
Here, we used that $\delta \gamma = \gamma \delta$ which holds since $\Gamma$ 
is an Abelian group.
Moreover, by the fact that $\mfA$ is $\ell$-homogeneous, we also know that if 
we fix the element $j \in J$ as a parameter of an $\FPC$-formula, then we 
can obtain an $\FPC$-definable linear 
order on $\Gamma(j)$ (using $\mcO(\ell)$ many variables), because for all $j_1, 
j_2 \in J$ we have
\[ (\mfA, j, j_1) \equiv^{2 \cdot \ell} (\mfA, j, j_1) 
\quad\Longleftrightarrow\quad
 \Delta(j_1) = \Delta(j_2).
\]

In~\cite{GraedelGroPagPak19}, we said that $\ell$-homogeneous structures $\mfA$ with 
Abelian automorphism groups are \emph{cyclic}, for $\FPC$ can linearly 
order objects in $\mfA$ up to symmetries of $\mfA$ and, by fixing single 
elements 
as parameters, $\FPC$ can define a complete 
linear order on each individual orbit; thus, intuitively, each individual orbit 
possesses a \emph{cyclic} structure: if we fix one ``starting point'', then the 
remaining structure becomes a directed path (but, to be completely accurate, 
the different paths do not necessarily correspond to the unfolding of a single 
directed cycle, 
since the Abelian group can be a direct sum of cyclic groups).
In other words, a cyclic structure $\mfA$ can be linearly ordered up to classes 
of structurally indistinguishable elements (the $\Gamma$-orbits), and, 
each individual local class (each $\Gamma$-orbit) can be ordered completely 
using $\FPC$-formulas with $\mcO(\ell)$ many variables. 
Besides our applications in this section, the cyclicity property will play a 
crucial role later in this paper as well, so let us state our results in 
precise terms for future reference.
}
We have already seen how we can define a single 
solution of a cocyclic linear equation system in \FPC. We want to combine this 
result with Theorem~\ref{thm:ctype:cyclic} in order to define a 
generating set for $\ker(M)$ with a particular syntactic form that resembles 
the well-known row-echelon form. 
In order to describe this form, we need some notation.
First of all, we consider the linear order on $\Gamma$-orbits that is defined 
by $\ctype_\ell(x,y)$ in $\mfA$ on $J$. Let us denote the corresponding 
preorder by $\preceq$.
We write $J = J_0 \preceq J_1 \preceq \cdots \preceq J_{\mmo}$ to denote the 
ordered decomposition of $J$ into $\Gamma$-orbits $J_i$, $i < m$.

For $r < m$ we say that a vector $v \colon J \to \field F$ is 
\emph{$r$-homogeneous} if for all $r' < r$ and all $j \in J_{r'}$ we have $v(j) 
= 0$.
That is an $r$-homogeneous vector is zero on all orbits that precede the $r$-th 
one.
We now go one step further and use Theorem~\ref{thm:ctype:cyclic}.
For $r < m$ and $j \in J_r$ let us denote by $<_j$ the linear order on $J_r$ 
that is defined by $\ctype_{2\ell}[j](x,y)$ in $\mfA$.
We write $J_r = 0 <_j 1 <_j \cdots <_j |J_r|-1$ to identify the orbit 
$J_r$ with an initial segment $\inseg{|J_r|}$ of natural numbers according to 
the linear order 
$<_j$.
Let $s < |J_r|$, or equivalently, $s \in J_r = \inseg{|J_r|}$.
Then we say that a vector $v\colon J \to \field F$ is a 
\emph{$(j,s)$-generator} for $\ker(M)$ if:
\begin{itemize}
 \item $v \in \ker(M)$, and
 \item $v$ is $r$-homogeneous, and 
 \item for all $t < s$ we have $v(t) = 0$, and we have $v(s) = 1$.
\end{itemize}
The notion of a \emph{$(j,s)$-generator} very much resembles that of 
generating vectors in row-echelon form: the generating vector is zero on all 
positions that 
precede the $s$-th position in the $r$-th orbit \emph{and} the vector is 
non-zero at this particular position. 
However, what makes our notion different is that the order on the 
$r$-th orbit $J_r$ is \emph{not fixed}, but that it depends on the choice of 
the parameter $j \in J_r$. 
In fact, it can happen that a $(j,s)$-generator is a
$(j',s')$-generator, for $(j,s) \neq (j',s')$, because 
the position $s$ in $<_j$ and the position $s'$ in $<_{j'}$ may point to the 
same element in $J_r$.
This reflects the fact that, due to symmetries, we cannot select a unique $j 
\in J_r$ in a definable way. In particular, there is no canonical generating 
set for $\ker(M)$ in row-echelon form, not least because the row-echelon form 
requires an ordered index set for its definition.
This is why we have to work with $(j,s)$-generators instead.
Let us stress that this notion is well-defined only because of our assumption 
that $\mfA$ is $\ell$-cyclic.
In algebraic terms, note that a \emph{$(j,s)$-generator} is a vector which 
is almost symmetric: it can be defined by means of a single element $j \in J$ 
and, thus, is has a \emph{support} of size one. It is easy to come up with 
examples of families of linear equation systems in which \emph{no} solution has 
a support of sublinear size. Hence, the assumption of $\ell$-cyclicity is 
essential.

Clearly, our notion of $(j,s)$-generators allows us to define generating 
sets for $\ker(M)$ of polynomial size, since all tuples 
$(j,s)$ are contained in the set $J \times |J|$.
We are now prepared to complete our proof (sketch). The only thing that 
remains is to define a \emph{complete} set of $(j,s)$-generators for $\ker(M)$ 
in $\FPC$. 
To this end, we make use of our earlier argument of how we 
can define a single solution of a cocyclic linear equation systems in $\FPC$.
Let $\bx$ be a $J$-vector of variables ranging over $\field F$ and 
let us fix a tuple $(j,s)$ where $j \in J_r$ and $s < |J_r|$ 
(according to $<_j$) as above. In what follows, this tuple $(j,s)$ acts 
as a parameter in our $\FPC$-formula.
Then $(j,s)$-generators precisely correspond to solutions of the  
linear equation system $\textsc{Ker}(j,s)$ with variable set $\bx=(x_j)_{j \in 
J}$ and the following set of equations:
\begin{itemize}
 \item $M \cdot \bx = 0$, and
 \item $\bx(j') = 0$ for all $j' \in \biguplus_{r'<r} J_{r'}$, and
 \item for all $t < s$ we include the equation $\bx(t) = 0$ and for $s \in J_r$ 
the  $\bx(s) = 1$ (again, recall that we use $j \in J_r$ as a parameter to 
define the linear order $<_j$ on $J_r$ which allows us to equate $J_r$ with 
$\inseg{|J_r|}$).
\end{itemize}
Given the original coefficient matrix $M$, it is straightforward to define the 
linear equation systems $\textsc{Ker}(j,s)$ in $\FPC$. 
In particular, it follows that the systems $\textsc{Ker}(j,s)$ are cocyclic. 
Since we can define a single solution of any (solvable) cocyclic linear 
equation systems in $\FPC$ it follows that $\FPC$ can also define a 
$(j,s)$-generator for $\ker(M)$ (if such a generator exists), as claimed.
Note that the number of different $(j,s)$-generators (and, correspondingly, the 
number of different parameter tuples for the above linear system) is bounded by 
the set $J \times |J|$.
In other words, we obtain a generating set for $\ker(M)$ which is indexed by 
$J \times |J|$ as we claimed in.
Finally, it is straightforward to verify that any set of $(j,s)$-generators 
which is \emph{complete} (which means that it contains a $(j,s)$-generator for 
all tuples $(j,s)$ for which such a generator exist) generates $\ker(M)$.
\end{proof}

\section{CFI-Graphs and Linear-Algebraic 
Operators over Fields of Coprime Characterstic}
\label{sec:main-result}

We have derived the necessary background and are now well-prepared in order to 
formulate and prove our main (technical) result of this article.
We are going to show that CFI-structures over a prime field $\field F_p$ 
cannot be distinguished by means of \emph{any}
linear-algebraic operator over a field
$\field F$ with $\characteristic(\field F) \neq p$ if we apply such 
linear-algebraic operators to $\LC^{\Omega(n)}$-definable matrices.
Let us start with a precise statement of our result.
For what follows, recall that we consider CFI-structures over a fixed class of 
expander graphs $\mcF =  \{ G_n : n \in \N \}$ where each graph $G_n$ has $\mcO(n)$ 
vertices and is ordered, connected, and three-regular.

\begin{thm}
\label{thm:main:cfi}
 There is $\epsilon > 0$ such that for all large enough $n > 0$ the 
following holds. 
Let $\mfA = \CFIgraph{G_n}{p}{\lambda}$ and $\mfB = \CFIgraph{G_n}{p}{\sigma}$ 
denote two CFI-structures over $G_n$ and let $\field F$ be any field such that 
$\characteristic(\field F) \neq p$.
Then $\mfA$ and $\mfB$ are $\linisom{\field F}{\ell}{\LC^{k}}$ where $\ell = 
\lfloor \epsilon n\rfloor$ and $k = 3 \ell$.
\end{thm}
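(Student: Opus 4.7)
The plan is to verify the two clauses of Definition~\ref{def:linalg:isom} in turn. For the first clause, $\mfA\cequivk\mfB$, I will pick $\epsilon>0$ small enough that $\max(3,2c_h)\cdot \epsilon$ lies strictly below the constant hidden in the $\Omega(n)$-bound of Theorem~\ref{thm:cfi92}, where $c_h\ge 1$ is the homogeneity constant of $\CFIclass{\mcF}{p}$ from Theorem~\ref{thm:homogeneity}. Then for all sufficiently large $n$ both $\mfA\cequivk\mfB$ (with $k=3\ell$) and the stronger $\mfA\cequivx{c_h\cdot 2\ell}\mfB$ hold. The extra slack is needed to identify, inside each of $\mfA$ and $\mfB$, the $\cequivx{c_h\cdot 2\ell}$-partition on $j$-tuples for $j\le 2\ell$ with the partition into $\Gamma$-orbits, where $\Gamma = \Aut(\mfA) = \Aut(\mfB)$ is their common automorphism group (which, by Section~\ref{sec:symmetriesCFI}, depends only on $G_n$ and $p$, not on the load vector).

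For the second clause I will apply Theorem~\ref{thm:sms:char:fbg} to the standard-basis pair $\mcM=\CBMat[\mfA;\ell;\LC^k]$ and $\mcN=\CBMat[\mfB;\ell;\LC^k]$. The linear pre-orders on $A^\ell$ and $B^\ell$ defined by $\ctype_k$ form a \cip whose colour classes have matching sizes, by $\mfA\cequivk\mfB$. Each basis matrix is a compatible block matrix in this \cip, since $\cequivk$ on $2\ell$-tuples refines the product partition of $\cequivk$ on $\ell$-tuples. Faithfulness comes for free: the restriction of equality to each diagonal colour block is itself a single $\cequivk$-class on $2\ell$-tuples (equality is $\LC^{2\ell}$-definable and $2\ell\le k$), so the corresponding diagonal identity matrices sit inside both $\mcM$ and $\mcN$ under the $\ctype_k$-matching.

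The heart of the proof is to verify the two remaining hypotheses of Theorem~\ref{thm:sms:char:fbg}: local simultaneous similarity of $(\mcM,\mcN)$ and cyclicity of the $\StabMatDiag{\mcM}$-module $\HomMatDiag{\mcM}{\mcN}$. Because $\Gamma$ is an Abelian $p$-group and $\characteristic(\field F)\neq p$, Maschke's theorem (Theorem~\ref{thm:maschke}) yields that $\field F[\Gamma]$ is semisimple, and by the double-centraliser theorem applied to the $\field F[\Gamma]$-action on $\field F^{A^\ell}$, so is its centraliser. Using the homogeneity-based refinement of $\cequivk$-classes to $\Gamma$-orbits, this centraliser agrees on each diagonal colour block with the restriction of $\StabMat{\mcM}$, making $\StabMatDiag{\mcM}$ semisimple and putting Lemma~\ref{lemma:cyclic-iso} at our disposal. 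To establish local similarity on a colour block $C$, I will produce a $\Gamma$-equivariant intertwiner whose restriction to $C$ is invertible by Reynolds averaging $\tfrac{1}{|\Gamma|}\sum_{g\in\Gamma}\Pi_g(\,\cdot\,)\Pi_g^{-1}$ of a suitable seed matrix; this averaging is well-defined since $|\Gamma|$ is a power of $p$ and hence a unit in $\field F$, mirroring the symmetric-solution construction used in the proof of Theorem~\ref{thm:solvcocylic}.

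I expect the main obstacle to be the cyclicity check. By Lemma~\ref{lemma:cyclic-iso} it reduces to showing that, in the semisimple decomposition of $\StabMatDiag{\mcM}$, the multiplicity of each simple summand in the module $\HomMatDiag{\mcM}{\mcN}$ is bounded by its multiplicity in $\StabMatDiag{\mcM}$ regarded as a left module over itself. These multiplicities are governed by the sizes of the isotypic components of the $\field F[\Gamma]$-module $\field F^{A^\ell}$ decomposed per diagonal block, and the delicate point is to express these sizes as $\LC^k$-definable counts of $\Gamma$-equivariant intersection patterns on $\ell$-tuples, so that $\mfA\cequivk\mfB$ forces the corresponding multiplicities on the $\mfA$- and $\mfB$-sides to agree. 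Once cyclicity is in hand, Theorem~\ref{thm:sms:char:fbg} delivers the desired invertible intertwiner $S$, completing the proof.
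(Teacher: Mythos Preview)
Your high-level plan is in the right neighbourhood: set up the \cip, verify that $(\mcM,\mcN)$ is \fbg, invoke Maschke to get semisimplicity of $\StabMatDiag{\mcM}$, and then appeal to Theorem~\ref{thm:sms:char:fbg}. That matches the paper's architecture. But there is a genuine gap in how you propose to establish the two remaining hypotheses, and it stems from a conceptual slip about what $\HomMat{\mcM}{\mcN}$ is.

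You treat ``$\Gamma$-equivariant'' and ``element of $\HomMat{\mcM}{\mcN}$'' as interchangeable, and your Reynolds-averaging step produces only the former. The problem is that although $\mfA$ and $\mfB$ share the same universe $E\times\field F_p$ and the same automorphism group $\Gamma$, the pairing $M_i\leftrightarrow N_i$ is determined by $\LC^k$-types, not by the identity on that common universe. Concretely, a fixed tuple $\bar a$ can satisfy the CFI-relation $R^\lambda$ in $\mfA$ but fail $R^\sigma$ in $\mfB$, so $\bar a$ can lie in different basis matrices on the two sides. Hence a $\Gamma$-equivariant $I\times J$ matrix $S$ (i.e.\ one with $\Pi_g S=S\Pi_g$) satisfies $M_i S=S M_i$, not $M_i S=S N_i$. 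Reynolds averaging a ``suitable seed'' does not bridge this twist, and you have not said what seed would. The paper explicitly remarks that all attempts at a direct construction of such an $S$ (for $\ell>1$) failed; its substitute for your averaging step is to encode ``there exists $S\in\HomMat{\mcM}{\mcN}$ with $\Diag_d(S)$ an explicit permutation matrix $P_{i,j}$'' as a co-cyclic linear system over $(\mfA,\mfB)$, observe via Theorem~\ref{thm:solvcocylic} that its solvability is $\LC^{ck}$-definable, and then transfer solvability from $(\mfA,\mfA)$ (where the identity works) to $(\mfA,\mfB)$ using $\mfA\equiv^{ck}\mfB$. The averaging you cite from the proof of Theorem~\ref{thm:solvcocylic} is what makes \emph{that} definability go through; it is not itself a construction of an intertwiner.

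Your cyclicity step has the right endpoint (compare multiplicities via Lemma~\ref{lemma:cyclic-iso}) but skips the hard part. The module $\HomMatDiag{\mcM}{\mcN}$ has no definable global ordered basis, so one cannot simply ``count isotypic components in $\LC^k$''. The paper works around this by exploiting semisimplicity to split $\HomMatDiag{\mcM}{\mcN}$ along the chain $\HDecomp_0\supseteq\cdots\supseteq\HDecomp_{m-1}$ into complements $T_d$, each of which is forced (by stabiliser considerations on the $d$-th block) to consist of $\Delta$-invariant matrices and therefore admits an ordered basis definable from a single parameter $(i,j)\in I_d\times J_d$. Only after this decomposition can structure constants, and hence multiplicities, be read off in $\LC^{ck}$; your sketch does not supply this mechanism. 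In short, the paper's proof routes everything through Theorem~\ref{thm:main:definability} (definability of ``an invertible $S$ exists'') and the comparison with $(\mfA,\mfA)$; your direct approach would need substantially more to close both gaps.
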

 
Of course, the statement of Theorem~\ref{thm:main:cfi} is only interesting 
in the case 
that the CFI-structures $\mfA$ and $\mfB$ are not isomorphic, that is for the 
case where $\sum \lambda \neq \sum \sigma$.
As a first step towards a proof of Theorem~\ref{thm:main:cfi}, let us briefly 
review what it means that $\mfA$ and $\mfB$ 
are  $\linisom{\field F}{\ell}{\LC^{k}}$.
First of all, we assigned to every structure $\mfA$ 
its \emph{Counting-Logic Algebra} 
$\clalgebra{\mfA}{\ell; \LC^{k}}{\field F}$ of 
dimension $\ell$ 
and width $k$ 
that consists of all $\field F$-linear combinations of matrices in 
$\CBMat[\mfA; \ell; \LC^{k}]$. 
The ordered set $\CBMat[\mfA; \ell; \LC^{k}]$, in turn, 
consists of all $\LC^{k}$-basis matrices that correspond to the 
individual $\LC^{k}$-types that are realised in $\mfA$ on 
$2 \ell$-tuples and that we 
view as square adjacency matrices over $\field F$ with entries in $\{0,1\}$ and 
with index set $A^{\ell} \times A^{\ell}$, cf.\ 
Section~\ref{sec:alg-structure}.

Reusing our notation from Definition~\ref{def:linalg:isom}, 
$I = A^\ell$ and $J=B^\ell$, and 
we denote by $M_i\colon I \times I \to \field F \in 
\CBMat[\mfA; \ell; \LC^k] \subseteq \clalgebra{\mfA}{\ell; \LC^k}{\field F}$ 
 and $N_i\colon J \times J \to \field F \in \CBMat[\mfB; \ell; \LC^k] 
\subseteq \clalgebra{\mfB}{\ell; \LC^k}{\field F}$ the $i$-th pair $(M_i, 
N_i)$ of corresponding $\LC^k$-basis matrices for $i < s$ where $s$ denotes 
the total number of realised $\LC^k$-types on $2\ell$-tuples in $\mfA$ (and 
$\mfB$). We let 
\begin{itemize}
 \item $\mcM = \{ M_i : i < s\} = \CBMat[\mfA; \ell; \LC^k]$ and 
 \item $\mcN = \{ N_i : i < s\} = \CBMat[\mfB; \ell; \LC^k]$,
\end{itemize}
and we obtain a $\inseg s$-indexed pair of matrix families $\mcM$ and $\mcN$ 
using the wording from Section~\ref{sec:simmatsim}.
In order to prove Theorem~\ref{thm:main:cfi} we have to show 
that the matrix families $\mcM$ and $\mcN$ are simultaneously similar over 
$\field F$. 

Recall from Section~\ref{sec:simmatsim} that we associated with $\mcM$ the 
$\field F$-algebra $\StabMat{\mcM}$ consisting of all $I \times I$-matrices 
which commute with all matrices in $\mcM$ and, in the analogous way, we defined 
the $\field F$-algebra $\StabMat{\mcN}$ associated with $\mcN$.
Moreover, we saw that the space $\HomMat{\mcM}{\mcN}$ consisting of all $I 
\times J$-matrices $Z$ over $\field F$ that satisfy $M_i\cdot Z = Z \cdot N_i$ 
for all $i < s$ forms a $\StabMat{\mcM}$-module with respect to matrix 
multiplication 
from the left (and it forms a $\StabMat{\mcN}$-module with respect to matrix 
multiplication from the right, but we won't make use of this fact).
Hence, in order to prove Theorem~\ref{thm:main:cfi} we  have to show 
that the $\StabMat{\mcM}$-module $\HomMat{\mcM}{\mcN}$ contains an invertible 
matrix $S \colon I \times J \to \field F$.
Of course, the obvious approach would be to construct such a matrix~$S$.
In fact, in his thesis~\cite{Holm10} Holm describes an explicit construction 
for the special case where $\ell = 1$ and $k > 2$. However, generalising this 
\emph{explicit} construction to higher arities $\ell > 1$ appears to be rather
hard, and, in fact, all of our approaches along these lines failed. Instead, 
we are going to take a completely different approach here.
We show that the \emph{existence} of such a matrix $S$ (but not necessarily the 
matrix $S$ itself) is definable in counting logic using $\mcO(k)$ many variables only. 
The attractive feature of our implicit approach is that we can derive 
the existence of such a matrix $S$ just from the definability of its existence.

\begin{thm}
\label{thm:main:definability}
Let $t \geq 3$ be a constant such that all CFI-structures in 
$\CFIclass{\mcF}{p}$ are $t$-homogeneous for all $p \in \Primes$.
Then there exists a constant $c \geq 1$ such that the following holds.
Let $\ell \geq 1$ and let $k \geq t \ell$. Then for each $p \in \Primes$ there 
exists a $\LC^{ck}$-sentence $\phi$ such that for all pairs of CFI-structures 
$\mfA = \CFIgraph{G_n}{p}{\lambda}$ and $\mfB = \CFIgraph{G_n}{p}{\sigma}$ over 
the same underlying graph $G_n \in \mcF$ we have that
$(\mfA, \mfB) \models \phi$ if, and only if, over every 
field $\field F$ with $\characteristic(\field F) \neq p$, the 
$\StabMat{\mcM}$-module $\HomMat{\mcM}{\mcN}$ contains an invertible matrix 
$S\in \HomMat{\mcM}{\mcN}$
where $\mcM = \CBMat[\mfA, \ell, k]$ and $\mcN = \CBMat[\mfB, \ell, k]$ (and 
where we understand $\StabMat{\mcM}$ as an $\field F$-algebra and 
$\HomMat{\mcM}{\mcN}$ as a $\StabMat{\mcM}$-module as before).
\end{thm}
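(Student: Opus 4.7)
The plan is to prove the theorem by reducing the existence of an invertible matrix in $\HomMat{\mcM}{\mcN}$ to an $\FPC$-definable dimension comparison, via the block-matrix machinery of Section~\ref{sec:simmatsim} together with Maschke's theorem and the definability of solutions to co-cyclic linear equation systems.

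\textbf{Setup via homogeneity.} First I would exploit Theorem~\ref{thm:homogeneity}: since $k \geq t\ell$, the $\LC^{ck}$-equivalence classes on $2\ell$-tuples in $\mfA$ (and in $\mfB$) coincide with the orbits of the automorphism groups $\Gamma_\mfA$ and $\Gamma_\mfB$, provided $c$ is chosen at least $2t$. Thus the basis matrices $\mcM = \CBMat[\mfA;\ell;\LC^k]$ and $\mcN = \CBMat[\mfB;\ell;\LC^k]$ are the $0$-$1$ incidence matrices of orbits. Projecting each $2\ell$-orbit to its first and last $\ell$ coordinates yields a partition of $A^\ell$ (resp.\ $B^\ell$) into $\ell$-orbits. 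Using the linear preorder $\ctype_{ck}[\,](\bx,\by)$ on $\ell$-tuples, I obtain a coloured index pair in the sense of Section~\ref{sec:blockmatrices}, with respect to which every basis matrix is a block matrix. The identity-on-diagonal-block requirement holds because, for each $\ell$-orbit $A_k$, the ``equality'' orbit $\{(\ba,\ba):\ba\in A_k\}$ is itself an $\LC^k$-type; hence $(\mcM,\mcN)$ is \emph{faithfully block-generated}. Local simultaneous similarity follows by observing that corresponding $\ell$-orbits $A_k$ and $B_k$ are of the same size and the block-local bijection they induce is an invertible element of a diagonal block.

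\textbf{Reduction to cyclicity and use of Maschke.} Theorem~\ref{thm:sms:char:fbg} now reduces the existence of an invertible $S\in\HomMat{\mcM}{\mcN}$ (equivalently $S\in\HomMatDiag{\mcM}{\mcN}$) to cyclicity of $\HomMatDiag{\mcM}{\mcN}$ as a $\StabMatDiag{\mcM}$-module. By Section~\ref{sec:symmetriesCFI} the group $\Gamma_\mfA$ is a subgroup of $\field F_p^E$, hence an Abelian $p$-group; since $\characteristic(\field F)\neq p$, Maschke's Theorem~\ref{thm:maschke} gives that $\field F[\Gamma_\mfA]$ is semisimple. Consequently the natural permutation representation of $\Gamma_\mfA$ on $\field F^{A^\ell}$ is semisimple, and its endomorphism algebra—which is precisely $\StabMatDiag{\mcM}$ (a consequence of the orbit identification in Step~1)—is semisimple. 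By Lemma~\ref{lemma:cyclic-iso} cyclicity of $\HomMatDiag{\mcM}{\mcN}$ is then equivalent to a finite list of inequalities $m_i\leq n_i$ between multiplicities of simple submodules in $\HomMatDiag{\mcM}{\mcN}$ and in ${}_{\StabMatDiag{\mcM}}\!\StabMatDiag{\mcM}$, and these multiplicities are in turn determined by $\field F$-dimensions of the isotypic components.

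\textbf{Defining the dimensions in counting logic.} The sought sentence $\phi$ will simply assert that the appropriate dimension inequalities hold. The $\field F$-dimensions of $\StabMatDiag{\mcM}$, $\HomMatDiag{\mcM}{\mcN}$, and of the various isotypic pieces are themselves dimensions of solution spaces to the linear equation systems $M_i Z = Z M_i$ and $M_i X = X N_i$ (augmented by the block-diagonal constraints and, for isotypic pieces, projections onto central idempotents, which are themselves definable from the orbit-algebra multiplication table). These systems are encoded inside the disjoint union $(\mfA,\mfB)$, which remains $t$-homogeneous with an Abelian automorphism group of $p$-power order; since $\characteristic(\field F)\neq p$, the systems are \emph{co-cyclic} in the sense of Section~\ref{sec:cyclic}. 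Theorem~\ref{thm:solvcocylic} then provides $\FPC$-formulas with $\mcO(k)$ variables that compute a generating set for each kernel, from which the required dimensions, and hence the multiplicity inequalities, are $\FPC$-definable. Gathering all these into a Boolean combination yields the desired $\LC^{ck}$-sentence $\phi$.

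\textbf{Principal difficulty.} The main technical obstacle is the last step: extracting, in counting logic and in a field-independent manner, the multiplicities of the simple summands. Two issues must be handled carefully. First, the dimensions obtained from Theorem~\ref{thm:solvcocylic} are computed over prime fields, so I must argue that the cyclicity condition is invariant under field extensions of coprime characteristic—this follows from semisimplicity together with the fact that the basis matrices $\mcM,\mcN$ are already defined over the prime field. Second, isolating simple summands requires working with central idempotents of $\StabMatDiag{\mcM}$; I would avoid constructing them explicitly by instead comparing dimensions of the $\StabMatDiag{\mcM}$-orbits of generic elements with $\dim\HomMatDiag{\mcM}{\mcN}$, so that the entire cyclicity check is phrased purely as an $\FPC$-definable equation between kernel dimensions of co-cyclic systems.
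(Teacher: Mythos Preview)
Your overall architecture matches the paper's, but two steps are not justified and the paper's actual arguments for them are substantially more delicate than what you sketch.

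\medskip
\textbf{Local simultaneous similarity.} Your claim that ``the block-local bijection they induce is an invertible element of a diagonal block'' is false as stated. A matrix $S$ that is a bijection on one block $I_d\times J_d$ and zero elsewhere is almost never in $\HomMat{\mcM}{\mcN}$: for any basis matrix $M_i$ supported on a block $I_d\times I_b$ with $b\neq d$, the product $S N_i$ is nonzero on $I_d\times J_b$ while $M_i S=0$. What is required is a \emph{global} $S\in\HomMat{\mcM}{\mcN}$ whose $d$-th diagonal block happens to be invertible, and there is no direct construction of such an $S$. The paper establishes this indirectly: it writes down, using parameters $(i,j)\in I_d\times J_d$, a co-cyclic linear system $\mcS^d_{i,j}$ forcing $\Diag_d(S)$ to equal a fixed permutation matrix; by Theorem~\ref{thm:solvcocylic} its solvability is $\LC^{ck}$-definable; and then, using $(\mfA,\mfA)\equiv^{ck}(\mfA,\mfB)$ together with the trivial solvability in $(\mfA,\mfA)$, concludes solvability in $(\mfA,\mfB)$. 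This indistinguishability-based bootstrapping is essential and absent from your sketch.

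\medskip
\textbf{Extracting the module isomorphism type.} Your final step proposes to read off multiplicities from ``dimensions of kernels of co-cyclic systems'' and, when pressed on isolating simple summands, falls back on ``$\StabMatDiag{\mcM}$-orbits of generic elements''. Neither is a complete argument: Theorem~\ref{thm:solvcocylic} produces a generating set indexed by $J\times|J|$, not a dimension, and there is no ``generic element'' available in $\FPC$ (over finite fields you cannot quantify over an element of $\HomMatDiag{\mcM}{\mcN}$). The paper's solution is different and is the technical heart of the proof: it builds a definable filtration $\HomMatDiag{\mcM}{\mcN}=R_0\supseteq R_1\supseteq\cdots\supseteq R_m=0$, uses semisimplicity to split each step as $R_d=T_d\oplus R_{d+1}$, and then shows that every matrix in any such complement $T_d$ is automatically invariant under the stabiliser $\Delta$ of a single parameter $(i,j)\in I_d\times J_d$ (because matrices in $T_d$ are determined by their $d$-th diagonal block). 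This $\Delta$-invariance is what makes each $T_d$ orderable in $\LC^{ck}$ with one parameter, after which its isomorphism type is computable from structure constants; summing over $d$ gives the isomorphism type of $\HomMatDiag{\mcM}{\mcN}$. (As a minor aside: $\StabMatDiag{\mcM}$ is not the endomorphism algebra of the permutation module~--- that is the span of $\mcM$~--- but the image of $\field F[\Gamma]$; the semisimplicity conclusion still holds, as a quotient of a semisimple algebra.)
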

 
The remainder of this section is devoted to the proof of
Theorem~\ref{thm:main:definability}. 
But before we start, let us see how we can derive 
Theorem~\ref{thm:main:cfi} from Theorem~\ref{thm:main:definability}.
First of all, let $c \geq 1$ and $t \geq 3$ be the constants according to 
Theorem~\ref{thm:main:definability}.  Let $p \in \Primes$.
Then, by Theorem~\ref{thm:cfi92}, we can find $\delta > 0$ 
such that for all large enough $n > 1$ we have 
$\mfA \equiv^{\lfloor \delta n \rfloor} \mfB$ 
where $\mfA = \CFIgraph{G_n}{p}{\lambda}$ and 
$\mfB = \CFIgraph{G_n}{p}{\sigma}$ are two CFI-structures over $\field F_p$ and 
the same underlying expander graph $G_n \in \mcF$ with $\mcO(n)$ many vertices. 
Let $\epsilon = \frac{1}{tc} \delta$. 
Then $(\mfA, \mfA) \equiv^{\lfloor tc\epsilon n \rfloor} 
(\mfA, \mfB)$. Let $\field F$ be a field such that $\characteristic(\field 
F) \neq p$. 
Let $\ell = \lfloor \epsilon n \rfloor$ and $k = \lfloor t\epsilon n \rfloor$.
We consider $\mcM = \CBMat[\mfA, \ell; \LC^{k}]$  and 
$\mcN = \CBMat[\mfB, \ell; \LC^{k}]$.
Since  the formula $\phi$ 
according to 
Theorem~\ref{thm:main:definability} contains at most 
$c k = c \cdot \lfloor t \epsilon n \rfloor \leq  \lfloor \delta n \rfloor$ 
many variables, this formula cannot distinguish between the ordered pairs of 
CFI-structures $(\mfA, \mfA)$ and $(\mfA, \mfB)$.
On the other hand, by its properties stated in 
Theorem~\ref{thm:main:definability}, $\phi$ would need to 
distinguish between $(\mfA, \mfA)$ and $(\mfA,\mfB)$ if no invertible matrix $S 
\in \HomMat{\mcM}{\mcN}$ would exist. Indeed, note that the 
$\StabMat{\mcM}$-module $\HomMat{\mcM}{\mcM}$ contains an 
invertible matrix $S \in \HomMat{\mcM}{\mcM}$ over every field $\field F$ for 
trivial reasons; for 
instance it contains the permutation matrix that corresponds to the identity 
automorphism of $\mfA$.
Hence, we can conclude that $\HomMat{\mcM}{\mcN}$ contains an invertible matrix 
which shows that $\mfA$ and $\mfB$ are $\linisom{\field F}{\ell}{\LC^{k}}$, and 
thus Theorem~\ref{thm:main:cfi} follows, because
$\linisom{\field F}{\ell}{\LC^{k}}$ structures are also
$\linisom{\field F}{\ell}{\LC^{3 \ell}}$ since $k \geq 3\ell$.

\subsubsection*{Proof of Theorem~\ref{thm:main:definability}}
Our proof of Theorem~\ref{thm:main:definability} is structured as 
follows. 
First, we fix a \emph{prime field}~$\field F$ with $\characteristic(\field F) 
\neq p$.
We are going to construct a sentence $\phi_{\field F} \in \LC^{\omega}$,
with at most $c \cdot k$ many variables,
which holds in the ordered pair $(\mfA, \mfB)$ of CFI-structures
$\mfA$ and $\mfB$ if, and only if, $\HomMat{\mcM}{\mcN}$ (considered as a
$\StabMat{\mcM}$-module over the $\field F$-algebra $\StabMat{\mcM}$) contains 
an invertible matrix $S$.
We use these sentences $\phi_{\field F}$ to obtain the desired 
sentence $\phi$ according to Theorem~\ref{thm:main:definability} which talks 
about \emph{all} fields $\field F$ with $\characteristic(F)\neq p$. 
More precisely, $\phi$ is the conjunction over all sentences $\phi_{\field F}$ 
for \emph{prime fields} $\field F$ with $\characteristic(\field F) \neq p$.

\begin{enumerate}
 \item First we show that the final step of the construction is sound. 
 Specifically, we show that it suffices to restrict our considerations to  
\emph{prime fields}. This  observation is important 
because we will frequently apply Theorem~\ref{thm:solvcocylic} in order 
to 
define solution spaces of cocyclic linear equation systems and, indeed, 
we only formulated and proved Theorem~\ref{thm:solvcocylic} for the case of 
prime fields.
\label{proofplan1}

 \item Secondly, we make use of our results from Section~\ref{sec:simmatsim}. 
In 
particular, we recall the notion of a block-generated pair of matrix families 
from Section~\ref{sec:blockmatrices}, and we show that the two bases 
$\mcM=\CBMat[\mfA, \ell, k]$ and
$\mcN=\CBMat[\mfB, \ell, k]$ 
for the  counting logic 
algebras $\clalgebra{\mfA}{\ell; \LC^k}{\field F}$ 
and 
$\clalgebra{\mfB}{\ell; \LC^k}{\field F}$
form such a pair; indeed we show that $(\mcM, \mcN)$
even is a pair of \emph{faithfully block generated} matrix families, see 
Definition~\ref{def:fblockgen}. We further show that the matrix families $\mcM$ 
and $\mcN$ are \emph{locally simultaneously similar} 
(\locsimsim, 
for short), see Section~\ref{sec:locsim}, Definition~\ref{def:locsim}.
This allows us to apply our criterion from
Theorem~\ref{thm:criterion:smsproblem}: in order to check whether 
$\HomMat{\mcM}{\mcN}$ contains an invertible matrix $S$, it suffices to 
check whether the $\StabMatDiag{\mcM}$-module $\HomMatDiag{\mcM}{\mcN}$ is 
cyclic.  
Recall that $\StabMatDiag{\mcM}$ and $\HomMatDiag{\mcM}{\mcN}$ denote the 
diagonal subalgebra and submodule of $\StabMat{\mcM}$ and 
$\HomMatDiag{\mcM}{\mcN}$, respectively, see 
also Corollary~\ref{cor:fbg:diagmod}.  
\label{proofplan3}

 \item The third step is the core of our whole argument. We are going to 
 combine results on the 
$\FPC$-definability of the automorphism groups and orbits of CFI-structures 
with Maschke's 
Theorem, cf.\ Section~\ref{sec:maschke},  Theorem~\ref{thm:maschke}, in order 
to show that the $\field F$-algebra $\StabMatDiag{\mcM}$ is \emph{semisimple}.
It follows that the $\StabMatDiag{\mcM}$-module $\HomMatDiag{\mcM}{\mcN}$ is 
semisimple (Theorem~\ref{thm:semisimple}). \label{proofplan2}

\item Next, we make use of the semi-simplicity of $\HomMatDiag{\mcM}{\mcN}$ in order to decompose the module into ``small'' submodules. 
Moreover, by applying Theorem~\ref{thm:solvcocylic}, we show that we can define 
generating sets for the respective submodules in counting logic by using at 
most 
$c\cdot k$ many variables.
Let us stress that this decomposition only becomes possible due to the 
semi-simplicity of the module $\HomMatDiag{\mcM}{\mcN}$ which follows from our 
application of 
Maschke's Theorem in step~(\ref{proofplan2}).
\label{proofplan4}

\item Finally, we construct the formula $\phi_{\field F}$. By~(\ref{proofplan3}), 
the formula $\phi_{\field F}$ needs to verify that the \emph{semisimple} 
$\StabMatDiag{\mcM}$-module $\HomMatDiag{\mcM}{\mcN}$ is cyclic. 
We approach this problem by expressing a more general query, namely we 
determine the full isomorphism type of the module $\HomMatDiag{\mcM}{\mcN}$ by 
means of a formula of counting logic.
Thanks to our preparation, this becomes possible in the following way.
First of all, we start by determining the isomorphism types of all simple subalgebras of $\StabMatDiag{\mcM}$. 
This we can easily do in counting logic because $\StabMatDiag{\mcM}$ has an (\FPC-definable) ordered basis.
Then we know, from Section~\ref{sec:maschke}, that the isomorphism type 
of $\HomMatDiag{\mcM}{\mcN}$ is (uniquely) determined by the multiplicities of the simple subalgebras of $\StabMatDiag{\mcM}$ as they occur in a decomposition of $\HomMatDiag{\mcM}{\mcN}$ into a direct sum of simple submodules. 
By using our decomposition from step~(\ref{proofplan4}), we can easily determine those multiplicities componentwise, since we can linearly order (again in an $\FPC$-definable way) each of the ``small'' submodules that occur in the decomposition of 
$\HomMatDiag{\mcM}{\mcN}$. In this way we can determine the multiplicities for each individual component which add up to the total multiplicities for the whole module 
$\HomMatDiag{\mcM}{\mcN}$.
Moreover, the number of variables required to express these properties in 
counting logic is, again, bounded by $c  \cdot k$. 
Since the isomorphism type determines the cyclicity of the module, we can  
obtain our desired formula $\phi_{\field F}$ by selecting modules with the 
appropriate isomorphism types.
\label{proofplan5}
\end{enumerate}
 
\paragraph*{Notation}
But before we delve into the details, let us discuss some further notations 
and assumptions.
First of all, the existence of the constant $c \geq 1$ will follow implicitly 
from our proof in which we formulate various requirements on $c \geq 1$ along 
the way. For instance, one important such constraint is that $c \geq 1$ is 
large enough so that we can define the linear preorder on $\ell$-tuples up to 
$\LC^k$-equivalence (in CFI-structures this means up to orbits, since $k \geq 
t\cdot\ell$, and since CFI-structures are $t$-homogeneous by the choice of $t 
\geq 2$) using an $\FPC$-formula with at most $c \cdot k$ many 
variables, recall 
Definition~\ref{def:homogeneity} 
and Theorem~\ref{thm:homogeneity}. 
For the remainder of the proof we are going to assume that the 
given 
CFI-structures $\mfA$ and $\mfB$ are $\LC^{ck}$-equivalent.
This assumption involves no loss of generality.
In fact, it is not hard to see that if one could distinguish 
$\mfA$ and $\mfB$ in counting logic using at most $ck$ many variables, then
one could identify all CFI-structures from $\CFIclass{\mcF}{p}$ over the underlying
graph $G_n \in \mcF$ up to 
isomorphism in $\LC^{ck}$. Hence, we could define any kind of query of the pair
$(\mfA, \mfB)$ in $\LC^{ck}$ (in particular, we could define the query stated
in Theorem~\ref{thm:main:definability}).
Next, we recall from Section~\ref{sec:alg-structure} that, independent of the 
underlying field $\field F$, the counting logic 
($\field F$-)algebras of the CFI-structures $\mfA$ and $\mfB$ of dimension 
$\ell$ and width~$k$, that is the algebras
$\clalgebra{\mfA}{\ell; \LC^k}{\field F}$ and
$\clalgebra{\mfB}{\ell; \LC^k}{\field F}$, are isomorphic.
Recall that these algebras consist of 
all $\field F$-linear combinations of the basic $\LC^k$-type matrices with 
entries in $\{0,1\}$ that is $\field F$-linear combinations of matrices in 
$\mcM=\CBMat[\mfA, \ell, k]$ and
$\mcN=\CBMat[\mfB, \ell, k]$, respectively.
Specifically, the sets $\mcM$ and $\mcN$ are linearly ordered according
to the $\LC^k$-types on $2\ell$-tuples with respect to
the formula $\ctype_k(\tup x, \tup y)$ in both structures $\mfA$ and $\mfB$;
that is
$\CBMat[\mfA; \ell; k] = \{ M_0 < M_1 < \cdots < M_\smo \}$ 
and
$\CBMat[\mfB; \ell; k] = \{ N_0 < N_1 < \cdots < N_\smo \}$ 
and such that:
\begin{itemize}
 \item for $i < s$, the matrices $M_i$ and $N_i$ correspond to the $i$-th 
$\LC^k$-type on $2\ell$-tuples according to $\ctype_k(\tup x, \tup y)$ in 
$\mfA$ and $\mfB$, respectively, and they have entries in $\{ 0, 1 \}$, and
 \item the mapping defined by $M_i \mapsto N_i$, for $i < s$, extends to an 
$\field F$-algebra isomorphism between 
$\clalgebra{\mfA}{\ell; \LC^k}{\field F}$ and 
$\clalgebra{\mfB}{\ell; \LC^k}{\field F}$,  
cf.\ Section~\ref{sec:alg-structure}. 
\end{itemize}
For what follows, we set $I = A^{\ell}$ and $J = B^\ell$. 
Then the matrices in the counting logic algebra of $\mfA$ are $I \times 
I$-matrices and, correspondingly, the matrices in the counting logic algebra 
 of $\mfB$ are square matrices of the form $J \times J$.
 
 \paragraph*{(\ref{proofplan1}) Restriction to prime fields}
 Let us start with a simple, but useful, observation.
 As we said above, we want to argue that it is sufficient 
 to conduct our considerations for prime  fields only.
 In order to verify this, let us assume that for each underlying 
\emph{prime} field~$\field F$, with $\characteristic(\field F) \neq p$, the 
 $\StabMat{\mcM}$-module $\HomMat{\mcM}{\mcN}$ 
 contains an invertible matrix $S \in \HomMat{\mcM}{\mcN}$.
 We then have to show that the same holds for
 \emph{all} underlying fields $\field F$ with $\characteristic(\field F) \neq p$.
 This, however, turns out to be obvious, because 
 the matrix families $\mcM$ and $\mcN$ contain matrices with entries in $\{0, 
 1\} \subseteq \field F$ only. 
 In particular, the matrices in $\mcM$ and $\mcN$ are always matrices whose entries reside
 in the prime field of~$\field F$.
 Formally, let us fix any field $\field F$ with $\characteristic(\field F) \neq p$ and
 let us denote by $\PrimeF(\field F)$ its prime field.
 By our assumption we can find an invertible matrix $S\colon I \times J \to \PrimeF(\field F)$ which simultaneously transforms all matrices 
 $M_i$ to $N_i$, for $i < s$, that is $M_i \cdot S = S \cdot N_i$. In this equation, all operations take place in $\PrimeF(\field F)$. Hence, it readily follows that the same matrix $S$ witnesses that $\mcM$ and $\mcN$ are simultaneously similar over the 
 whole field $\field F$.
 
 From now on, $\field F$ denotes a prime field with $\characteristic(\field F) \neq p$, that is $\field F = \mbQ$ or $\field F = \field F_q$ for $q \in \Primes$, $p \neq q$.

 \paragraph*{(\ref{proofplan3}) Reduction to the cyclicity of the diagonal 
 $\StabMatDiag{\mcM}$-module $\HomMatDiag{\mcM}{\mcN}$}
 The pair of matrix families $(\mcM, \mcN)$ has some special properties 
 that allow us to reduce the question of whether the $\StabMat{\mcM}$-module
 $\HomMat{\mcM}{\mcN}$ contains an invertible matrix to the question of 
 whether the ``diagonal'' $\StabMatDiag{\mcM}$-module $\HomMatDiag{\mcM}{\mcN}$ 
is cyclic.
 Specifically, we are going to show that $(\mcM, \mcN)$ is a 
 \emph{faithfully block generated} pair of matrix families that are 
 \emph{locally simultaneously similar}, see Section~\ref{sec:blockmatrices} and 
Section~\ref{sec:locsim}.
  This allows us to apply our criterion formulated
  as Theorem~\ref{thm:sms:char:fbg}.
 
  First of all, it is not hard to see that $(\mcM, \mcN)$ is faithfully 
  block-generated.
  Recall that the matrices in $\mcM$ are indexed by $I \times I$ and that the matrices in 
  $\mcN$ are $J \times J$-matrices, where $I = A^\ell$ and $J = B^\ell$.
  Of course, in order to talk about block matrices and compatible block 
  matrices 
  at all, we
  require a coloured index pair 
  $(I, J, \preceq_I, \preceq_J)$ that provides us with 
  partitions of the index sets $I$ and $J$
  as
  $I = I_0 \preceq_I \cdots \preceq_I I_\mmo$ 
  and  $J = J_0 \preceq_J \cdots \preceq_J J_\mmo$ 
  into corresponding pairs of colour classes $I_i, J_i$ of the same sizes, 
  see Section~\ref{sec:blockmatrices}.
  These (compatible) ordered partitions are readily provided by the refinements 
  of $I$ and $J$ with respect to $\equiv^k$-equivalence (in both CFI-structures $\mfA$ 
  and $\mfB$, respectively).
  In particular, by our assumption that $\mfA \equiv^k \mfB$ we know that 
  the corresponding $\equiv^k$-classes have the same sizes in both structures.
  We can even say a bit more. Indeed, by our assumption on the 
  constant $t \geq 2$ and the homogeneity of CFI-structures, we know that 
  the partitions of $I$ and $J$ coincide with the respective partitions into 
orbits.
  
  The requirement for $(\mcM, \mcN)$ to be block-generated is the existence of 
  a basis for $\mcM \circ_{\inseg s} \mcN$ that consists of pairs of compatible 
block matrices.
  However, since the matrices in $\mcM$ and $\mcN$ are pairwise disjoint, and 
  because of the fact that all pairs of matrices $M_i \in \mcM$ and $N_i \in 
\mcN$, $i < s $, are compatible block matrices (as they correspond to the same
$\LC^k$-types) we can simply take $(\mcM, \mcN)$ itself as this  basis.
 Moreover, it is not hard to see that $(\mcM, \mcN)$ is also \emph{faithfully} 
block-generated. We only need to show that for each pair of diagonal blocks 
$I_d \times I_d$, and $J_d \times J_d$, $d < m$, the families $\mcM$ and 
$\mcN$ contain a pair of matrices $M_i \in \mcM$ and $N_i \in \mcN$, $i < s$, 
 such that $M_i$ is the identity matrix on the diagonal block $I_d \times I_d$ 
and such that $N_i$ is the identity matrix on the diagonal block $J_d \times 
J_d$.
However, this easily follows since the diagonal types on the $d$-th diagonal 
blocks are particular $\LC^k$-types on $2\ell$-tuples which are determined 
by the $\LC^k$-formula $\phi(\bx, \by)$ which expresses that $\bx$ has 
$\LC^k$-type $I_d$ (or, equivalently, $J_d$) and that $\bx = \by$.
  
Hence, it only remains to show that the faithfully block-generated pair of 
matrix
families $(\mcM, \mcN)$ is also \emph{locally simultaneously 
similar} over $\field F$ (which can be considered as a first step towards our 
overall goal to show that $\mcM$ and $\mcN$ are (globally) simultaneously 
similar). To this end, according to Definition~\ref{def:locsim}, 
we have to show for 
each diagonal block $I_d \times J_d$, $d < m$, that the $\StabMat{\mcM}$-module 
$\HomMat{\mcM}{\mcN}$ contains a matrix $S \in \HomMat{\mcM}{\mcN}$ such that 
$\Diag_d(S)$ is invertible (when we consider $\Diag_d(S)$ as an $I_d \times 
J_d$-matrix).
Recall from Definition~\ref{def:projdiag} that we denote by 
  $\Diag_d(M)\colon I \times J \to \field F$ the projection of a matrix 
  $M\colon I \times J \to \field F$ onto the $d$-th diagonal block, that is 
  $\Diag_d(M)$ coincides with the matrix $M$ for
  all entries in $I_d \times J_d$ and $\Diag_d(M)$ has entry $0$ for all
  other positions.
Similar to our general strategy we will not try to construct such a matrix
$S \in \HomMat{\mcM}{\mcN}$ explicitly. 
Instead we prove its existence by means of the undefinability of the 
CFI-problem in counting logic (Theorem~\ref{thm:cfi92}), the (linear-algebraic) 
structure of $\HomMat{\mcM}{\mcN}$, and
our result about the definability of solution (spaces) of cocyclic linear 
equation systems (Theorem~\ref{thm:solvcocylic}).

First of all, let us recall that $\HomMat{\mcM}{\mcN}$ is a 
(homogeneous)
$\field F$-linear space. In fact, it consists of all matrices 
$S\colon I \times J \to \field F$ that satisfy 
the condition $M_i \cdot S = S \cdot N_i$ for all $i < s$.
If we view the entries of the matrix $S$ as individual variables $S(i,j)$ which 
range over $\field F$, then this condition can easily be written down as a 
system $\mcS$ of linear equations over $\field F$
(the system contains one equation per matrix pair $(M_i,S_i)$ and 
corresponding entry $(i,j) \in I\times J$).
Now, let us fix a diagonal block $I_d \times J_d$ for some $d < m$.
Moreover, let us choose two parameter (tuples) $i \in I = A^\ell$ and $j \in J 
= 
B^\ell$.
As we proved in Theorem~\ref{thm:ctype:cyclic}, with these parameters we can 
define two linear orderings $<_i$ and $<_j$ on the orbits $I_d$ and $J_d$, 
respectively, by using an $\FPC$-formula with at most $c\cdot k$ many variables 
(note that the number of variables of this formula is determined by $\ell$ and 
the homogeneity constant for the class $\CFIclass{\mcF}{p}$; hence, we can 
choose $c \geq 1$ large enough such that $c\cdot k$ variables are sufficient, 
indeed).
Having ordered both orbits $I_d$ and $J_d$ we can then easily define a 
bijection between $I_d$ and $J_d$ by sending elements 
with the corresponding positions according to $<_i$ and $<_j$ to 
each other. Of course, this bijection can also be written as an $(I_d \times 
J_d)$-permutation matrix $P_{i,j}:I_d \times J_d \to \{ 0,1\}$  
(again, we can let $c \geq 1$ be large enough such that $c\cdot k$ variables are 
sufficient to define this matrix in counting logic).
We can now extend our linear equation system $\mcS$ to a linear equation system 
$\mcS^d_{i,j}$ by adding a set of equations that enforces that the $d$-th 
diagonal block of $S\colon I \times J \to \field F$ equals the (invertible) 
permutation matrix $P_{i,j}$. 
Note that if the resulting linear equation system $\mcS^d_{i,j}$ is solvable, 
then we can find a matrix $S \in \HomMat{\mcM}{\mcN}$ which is invertible on the 
$d$-th diagonal block $I_d \times J_d$.

Moreover, according to Theorem~\ref{thm:solvcocylic}, we can select a sentence 
$\phi$ of counting logic which checks, given an ordered pair of CFI-structures 
$(\mfA, \mfB)$, whether for some choice of parameters $(i,j)\in I_d \times J_d$ 
the resulting cocyclic linear equation system $\mcS^d_{i,j}$ is solvable or not
(here we also rely on the simple observation that ordered pairs of 
$r$-cyclic CFI-structures are $r$-cyclic structures as well).
The number of variables of this sentence $\phi$ is determined by $\ell$, the 
homogeneity constant of the CFI-class $\CFIclass{\mcF}{p}$, and other constants 
such as those that are required to formalise the construction 
of Theorem~\ref{thm:solvcocylic} in counting logic.
Hence, once again, we can let $c\geq 1$ be large enough such that $c \cdot k$ 
variables are sufficient to construct this formula $\phi \in \LC^\omega$.
 
We come to our final argument. As
we assumed that $\mfA \equiv^{c\cdot k} \mfB$ we know that the formula $\phi$ 
cannot distinguish between the ordered pairs $(\mfA, \mfA)$ and $(\mfA, \mfB)$.
However, in $(\mfA, \mfA)$ the system $\mcS^d_{i,i}$ is clearly solvable by 
means of the identity automorphism, hence it must be solvable in $(\mfA, \mfB)$ 
as well. Hence, we can conclude that there exists a matrix $S \in 
\HomMat{\mcM}{\mcN}$ such that $\Diag_d(S)$ is an invertible ($I_d 
\times J_d$)-matrix as claimed (in fact, by our argument, we can even choose 
$S$ such that that $\Diag_d(S)$ is a permutation matrix).

  \paragraph*{(\ref{proofplan2}) Application of Maschke's Theorem}
  The next step is to show that the $\field F$-algebra 
  $\StabMatDiag{\mcM}$ is \emph{semisimple}.
  In order to show that $\StabMatDiag{\mcM}$ is semisimple we are going to 
  embed $\StabMatDiag{\mcM}$ into a larger $\field F$-algebra for which
  we can then show, by an application of 
  Maschke's Theorem, that this larger surrounding $\field F$-algebra is 
  semisimple. Since subalgebras of semisimple algebras are semisimple as 
  well, the result follows.
  
  First of all, we make use of the homogeneity property of $\mfA$.
  As above, we write $I = I_0 \preceq \cdots \preceq I_\mmo$ to denote the
  ordered decomposition of the index set $I = A^\ell$ 
  into  $\LC^k$-types.
  By $\Gamma$ we denote the automorphism group of $\mfA$ 
  which is an elementary Abelian $p$-group. 
  Since we chose the constant $t$ large enough so that CFI-structures in 
$\CFIclass{\mcF}{p}$ are $t$-homogeneous, we know that the partition of $I$ 
  into $\LC^k$-types corresponds to the partition of $I$ into $\Gamma$-orbits, 
  that is $\Gamma$ acts transitively on each of the sets $I_i$, $i< m$.
  Moreover, again by our choice of $c\geq 1$, the linear preorder on $I$ is 
  definable in $\mfA$ by using a $\LC^\omega$-formula with at 
  most $c \cdot  k$ many variables.
  
  We now want to take a closer look at one of the diagonal blocks $I_d \times 
  I_d$, for  $d < m$.  More specifically, we want to analyse the structure of 
  matrices in $\StabMatDiag{\mcM}$  when we restrict to 
  this diagonal block $I_d \times I_d$. 
  In particular, we are interested in matrices of the form $\Diag_d(M)$ for 
  $M \in \StabMatDiag{\mcM}$.
  Our motivation comes from the following straightforward observation.
  \newcommand{\DAlg}{\ensuremath{\text{C}}}
  Let us denote by $\DAlg_d$ the $\field F$-algebra which consists of all 
  $I\times I$-matrices   $\Diag_d(M)$ for $M \in \StabMatDiag{\mcM}$, 
  formally \[\DAlg_d = \Diag_d(\StabMatDiag{\mcM}) = \{ \Diag_d(M) : M \in 
  \StabMatDiag{\mcM} \}.\]
  Then, each $\DAlg_d$ forms an $\field F$-algebra, and, for trivial 
reasons, we have
  \[ \StabMatDiag{\mcM} \subseteq \DAlg_0 \oplus \DAlg_1 \oplus \cdots \oplus 
  {\DAlg}_{\mmo}. \] 
  In particular, if we can show that each $\field F$-algebra $\DAlg_d$ 
  is  semisimple, then it follows that the $\field F$-algebra  
$\StabMatDiag{\mcM}$ is semisimple as well (as the class of semisimple algebras is closed under taking subalgebras).
  
  Let us fix one of the $\field F$-algebras $\DAlg_d$, $d < m$.
  Moreover, let us view the automorphisms $\pi \in \Gamma$ of the CFI-structure $\mfA$
  as $I \times I$-permutation matrices $\Pi\colon I\times I \to \{0,1\} 
  \subseteq \field F$ in the usual way, that is $\Pi(i,j) = 1$ if, and only if, 
  $\pi(i)=j$ for $i, j \in I$.
  Note that all matrices $\Pi \in \Gamma$ are diagonal block matrices, that is
  $\Diag(\Pi) = \Pi$. This is simply because 
  the \emph{definable} partition of $I$ into $C^k$-types $I_0, \dots, I_\mmo$ 
is preserved by   all  automorphism $\pi \in \Gamma$.
  Let us set $\Gamma_d = \Diag_d(\Gamma) = \{ \Diag_d(\Pi) : \Pi \in \Gamma \}$.
  Then it is easy to see that each of the sets $\Gamma_d$ forms a group with respect to
  matrix multiplication. 
  We now claim that $\DAlg_d = \field F[\Gamma_d]$, cf.\ Definition~\ref{def:groupalgebra}
  where we defined the notion of a \emph{group algebra} over a field $\field F$.
  
  To verify our claim that $\DAlg_d = \field F[\Gamma_d]$ we show two things.
  First of all, we show that $\Gamma_d \subseteq \DAlg_d$ which, in turn, implies that
  $\field F[\Gamma_d] \subseteq \DAlg_d$.
  Indeed, for all $\Pi \in \Gamma$ we have $M \cdot \Pi = \Pi \cdot M$  (or, 
stated equivalently, $\Pi \cdot M \cdot \Pi^{-1} = M$) for all $M \in \mcM$, 
since each matrix $M \in \mcM$ is \emph{definable} in counting logic and, thus, 
is invariant under automorphisms $\Pi \in \Gamma$.
  Hence, $\Gamma \subseteq \StabMatDiag{\mcM}$ and thus $\Diag_d(\Gamma) = \Gamma_d \subseteq \DAlg_d = \Diag_d(\StabMatDiag{\mcM})$. 

  The remaining direction $\DAlg_d \subseteq \field F[\Gamma_d]$ is more interesting. 
  We have to show that each matrix $X \in \DAlg_d$ can be written as  an $\field 
F$-linear combination of matrices in $\Gamma_d$.
  Our first observation is that each such $X$ itself is invariant under 
  the action of $\Gamma$. Since $X \in \DAlg_d$, this is equivalent to saying
  that for all $\Pi   \in \Gamma_d$ we have $X \cdot \Pi = \Pi \cdot X$.
  In order to verify this we show that each $\Pi \in \Gamma_d$ 
  is contained  in $\mcM$, that is $\Gamma_d \subseteq \mcM$.
  To this end, recall from Section~\ref{sec:generalised_cfi} that  each 
  automorphism $\Pi \in \Gamma$ can uniquely be described by means of
  a vector  $v_\Pi\colon E \to \field F_p$ which determines the action
  of $\Pi$ 
  on every edge class $e \in E$ in  terms of a cyclic $\field F_p$-shift.
  Here,  $E$ denotes the edge relation of the underlying graph $G_n$.
  By our assumption that $G_n$ is ordered, the set  $E$ of edges is ordered
  as well, which implies that we can describe each automorphism $\Pi \in 
  \Gamma$ as an ordered object. 
  In particular, for each fixed $\Pi \in \Gamma$ we can write down a formula
  of counting logic $\phi_\Pi(x_0, \dots, x_\lmo, y_0, \dots, y_\lmo)$
  which defines the matrix $\Pi$.
  This formula $\phi_\Pi$ only needs to express that for
  each position $i < \ell$  the corresponding pair of variables $(x_i, 
  y_i)$ is interpreted  by a pair $(a,b)$ of elements 
  from the (same) edge class $a,b \in e^p$ (for $e \in E$) 
  such that $b$ results from $a$ by a cyclic shift of length $\Pi(e) \in 
\field F_p$.
  This can easily be expressed in counting logic by using the 
  cycle relation of the CFI-structure $\mfA$ and two 
  additional auxiliary variables.
  In particular, $\phi_\Pi$ can be expressed as a $\LC^{ck}$-formula (under the mild 
  assumption that $c \geq 3$).
  This argument already shows that $\Pi$ is contained in the counting logic algebra 
  $\clalgebra{\mfA}{\ell; \LC^k}{\field F}$ of $\mfA$ of dimension $\ell$ and 
  width $k$ over~$\field F$.
  In fact, each $\Pi \in \Gamma_d$ is even contained in the 
  basis $\mcM = \CBMat[\mfA, \ell, k]$ of this algebra.
  For the sake of contradiction, assume that $\Pi \not\in \mcM$. 
  Then, because of the fact that the matrices in $\mcM$ are disjoint matrices 
  with entries in $\{ 0,1 \}$, it  follows that $\mcM$ contains a (non-zero)
  matrix $Y \in \mcM$
  which strictly refines $\Pi$, in the sense that $Y(i,j)=1$ implies 
  $\Pi(i,j)=1$ for all $i, j   \in I_d$, but $Y \neq \Pi$.
  But this would mean that, by using this matrix $Y$ we could refine the set
  $I_d$, because leaving out any entry of $\Pi$ would leave us with a
  partial function $Y$ defined on $I_d$ which means that 
  elements from $I_d$ in the domain of this function $Y$ could be isolated using the
  $\LC^k$-type corresponding to $Y$.
  This, however, is impossible because we have already refined the set $I$ up to $\Gamma$-orbits and, clearly, orbits cannot be broken up in any definable way.
  
  So far we have seen  that $\Gamma_d \subseteq \mcM$. From the definition of 
  $\DAlg_d$ it follows that for
  each matrix $X \in \DAlg_d$ we have $\Pi \cdot X = X \cdot \Pi$ for 
  all $\Pi \in \Gamma_d$. This means that $X$ is invariant on 
  $\Gamma_d$-orbits, that is for each 
  position $(i,j) \in I_d \times I_d$ and each $\pi \in \Gamma_d$ we have $X(i,j) = X(\pi(i),\pi(j))$.
  Let us denote by $\mcP$ the partition of $I_d \times I_d$ into $\Gamma_d$-orbits.
  Then we can identify $X$ with the mapping $X\colon \mcP \to \field F$ which 
  is defined as $X(P) = X(i,j)$ for some $(i,j) \in P$.
  Moreover, we claim that the parts $P \in \mcP$ precisely correspond to the 
  permutation matrices $\Pi \in \Gamma_d$.
  We first observe that each $P \in \mcP$ is the graph of a bijective function 
  $I_d \to I_d$.  To see this, first note that since $\Gamma_d$ acts 
  transitively on $I_d$ each 
  element $i \in I_d$ must have at least one image according to $P$. 
  On the other hand, assume that $(i, j), (i, j') \in P$. Then we can find $\sigma \in \Gamma_d$ such that $\sigma(i,j) = \sigma(i,j')$. This, however, implies that $j = j'$: let $\rho \in \Gamma_d$ be such that $\rho(j) = i$, hence 
  $\rho^{-1} \sigma \rho(j)= j$. Thus, $\rho^{-1} \rho \sigma(j) = j$, because $\Gamma_d$
  is Abelian. Hence, $\sigma(j) = j$.
  With the same arguments, we can see that also each $j \in I_d$ must have a unique preimage according to $P$. 
  All what remains is to show that the bijective functions $P \in \mcP$ correspond to 
  the permutation matrices $\Pi \in \Gamma_d$. 
  To see this, first note that for each pair $(i,j) \in I_d \times I_d$ 
  there exists some $\Pi \in \Gamma_d$ such that $\pi(i) = j$. This follows from the fact
  that $I_d$ is a $\Gamma_d$-orbit.
  But then, the part $P \in \mcP$ which contains $(i,j)$ must coincide with $\Pi$ for 
  $P$ is invariant under the action of the Abelian group $\Gamma_d$. 
  To sum up, if we denote for $P \in \mcP$ the corresponding permutation matrix by $\Pi_P \in \Gamma_d$, then we can express $X \in \DAlg_d$ as the following $\field F$-linear combination of permutation matrices $\Pi \in \Gamma_d$:
  \[  X = \sum_{P \in \mcP} X(P) \cdot \Pi_P.\] 
  Hence, $\DAlg_d \subseteq \field F[\Gamma_d]$. We put everything together and 
conclude that:
  \[ \field F[\Gamma_d] = \DAlg_d.\]
  
  Finally, we apply Maschke's Theorem, cf.\ Theorem~\ref{thm:maschke}. Since $\Gamma_d$ is an elementary Abelian $p$-group, and since $\characteristic(\field F) \neq p$, Maschke's 
  Theorem tells us that $\field F[\Gamma_d]$ is semisimple, and thus, the algebra 
  $\DAlg_d$ is semisimple as well.
  As a consequence, the $\field F$-algebra $\StabMatDiag{\mcM}$ and thus the 
  $\StabMatDiag{\mcM}$-module $\HomMatDiag{\mcM}{\mcN}$ are semisimple, which is what we wanted to show. 
  
  \smallskip
  Before we proceed, let us remark that our choice to focus on 
  the algebra $\StabMatDiag{\mcM}$, and thus on the CFI-structure $\mfA$, 
  is no restriction of generality in the sense that it can easily be 
  shown in the same way that  
  the $\field F$-algebra $\StabMatDiag{\mcN}$ is semisimple.
  However, we have defined $\HomMatDiag{\mcM}{\mcN}$ as a 
  \emph{left} $\StabMat{\mcM}$-module which is why we phrase and present 
  these results only in terms of $\mfA$ and $\StabMatDiag{\mcM}$.

  We make another observation that will become important later on. 
  We saw that 
  $\StabMatDiag{\mcM} \subseteq \bigoplus_{d < m} {\DAlg}_{d}$
  and that each of the $\field F$-algebras $\DAlg_d$, $d < m$, 
  satisfies $\field F[\Gamma_d] = \DAlg_d$.
  This immediately shows that we can define in $\LC^{ck}$, for each of the 
  $\field F$-algebras $\DAlg_d$, a linearly ordered $\field F$-basis.
  Indeed, $\Gamma_d$ forms such a basis and, as we explained before, we can
  easily describe automorphisms $\Pi \in \Gamma$ in counting logic as ordered 
sequences of cyclic shifts on the individual edge classes of the CFI-structure 
  $\mfA$.
  Note that, in this regard, we are again crucially relying on the fact that we 
  are working with  CFI-structures over \emph{ordered} underlying 
  graphs $G_n \in \mcF$.
  In particular, the basis that we obtain is small 
  since $|\Gamma_d| = |I_d| = |J_d|$ which, in turn, follows from the fact that 
  $\Gamma_d$ is an Abelian group that acts transitively on $I_d$.
  Moreover, by employing the embedding 
  $\StabMatDiag{\mcM} \subseteq \bigoplus_{d < m} {\DAlg}_{d}$,
  this $\LC^{ck}$-definable order induces a $\LC^{ck}$-definable order on
  $\StabMatDiag{\mcM}$. 
  This, in turn, means that we can define an ordered $\field F$-basis 
  for $\StabMatDiag{\mcM}$ in $\LC^{ck}$.
  This has the remarkable consequence that we can identify 
  the $\field F$-algebra $\StabMatDiag{\mcM}$ in $\LC^{ck}$
  up to isomorphism. Indeed, since we have access to an ordered $\field F$-basis for
  $\StabMatDiag{\mcM}$, we can also express all products of pairs of 
  basis elements again as $\field F$-linear combinations of these basis elements, 
  and this fully describes the algebra $\StabMatDiag{\mcM}$ in $\LC^{ck}$ 
  up to isomorphism (the coefficients that we obtain when we 
  express all products of pairs of basis elements again as $\field F$-linear 
  combinations of basis elements are known as \emph{structure constants} or
  \emph{structure coefficients} of the $\field F$-algebra, and are also 
  used, for instance, to encode algebras as inputs for algorithms).

 \paragraph*{(\ref{proofplan4}) Decomposition into small submodules} 
   \newcommand{\HDiag}{\ensuremath{\text{H}}}
   \newcommand{\HDecomp}{\ensuremath{\text{R}}}
  The next step is to decompose the
  $\StabMatDiag{\mcM}$-module $\HomMatDiag{\mcM}{\mcN}$
  into small submodules.
  At its core, the decomposition becomes possible due to the semi-simplicity of
  the module $\HomMatDiag{\mcM}{\mcN}$ and the definability of solution spaces
  of cocyclic linear equation systems in counting logic (Theorem~\ref{thm:solvcocylic}).
  
  Let us start by a simple observation. 
  Analogously to our definition of $\DAlg_d$, for each diagonal 
  block $I_d \times J_d$, for $d < m$, let us denote by
  $\HDiag_d$ the projection of the (diagonal) $\StabMatDiag{\mcM}$-module 
  $\HomMatDiag{\mcM}{\mcN}$ to the $d$-th diagonal block $I_d \times J_d$, that 
  is
  $\HDiag_d = \Diag_d(\HomMatDiag{\mcM}{\mcN}) = 
  \{ \Diag_d(S) : S \in \HomMatDiag{\mcM}{\mcN} \}$. 
  Then each $\HDiag_d$ forms a $\StabMatDiag{\mcM}$-module and we have
  \[ \HomMatDiag{\mcM}{\mcN} \subseteq \HDiag_0 \oplus \HDiag_1 \oplus \cdots \oplus \HDiag_\mmo.\]
  
  Before we proceed, let us remark that we can define, for each $d < m$, a 
  generating set (which consists of $I \times J$-matrices with entries in $\field F$) 
  for the $\StabMatDiag{\mcM}$-module $\HDiag_d$ in counting logic 
  (using a formula with at most $c \cdot k$ many variables, for large enough $c \geq 1$). 
  Indeed, we have seen before, cf.\ step~(\ref{proofplan3}), 
  how the $\StabMatDiag{\mcM}$-module
  $\HomMatDiag{\mcM}{\mcN}$ can be described as the solution space of a (cocyclic)
  linear equation system.
  Theorem~\ref{thm:solvcocylic} thus implies that we can find a $\LC^{ck}$-formula
  which defines a generating set for $\HomMatDiag{\mcM}{\mcN}$ in the ordered pair
  $(\mfA, \mfB)$ of CFI-structures $\mfA$ and $\mfB$. 
  Hence, by projecting this generating set to the block $I_d \times J_d$, we can
  obtain a generating set for $\HDiag_d$ in $\LC^{ck}$ as well.
  Moreover, these generating sets for the modules $\HDiag_d$ 
  have an important property that we will exploit frequently: we can linearly order
  these generating sets in $\LC^{ck}$ by fixing a pair $(i,j) \in I_d \times 
  J_d$ as parameter. This easily follows from the observation that the 
  matrices  in $\HDiag_d$ have non-trivial entries only on the diagonal block 
  $I_d \times J_d$. Moreover, by the cyclicity of CFI-structures we 
  know that we can order the (relevant part of the) index set $I_d \times 
J_d$ of such matrices in $\LC^{ck}$, hence we can also order $I_d 
  \times J_d$   matrices 
  (with entries in $\field F$) which leads to an \emph{ordered} generating 
  set for $\HDiag_d$. 
  The important consequence is that we can describe the isomorphism types of 
  the $\StabMatDiag{\mcM}$-modules $\HDiag_d$ in $\LC^{ck}$.
  To see this, recall that we can define an ordered
  basis for the $\field F$-algebra $\StabMatDiag{\mcM}$ in $\LC^{ck}$.
  Since we can, by fixing a parameter $(i,j) \in I_d \times J_d$, also obtain
  an ordered basis for $\HDiag_d$ it follows that wan can define the structure 
  coefficients of the $\StabMatDiag{\mcM}$-module $\HDiag_d$ in $\LC^{ck}$ as 
  well (in particular, the structure coefficients are independent of
 the choice of the parameter $(i,j) \in I_d \times J_d$).
  
 \medskip
 Our key goal in the remainder of our proof for 
Theorem~\ref{thm:main:definability} is to show that we can describe the 
  isomorphism type of the $\StabMatDiag{\mcM}$-module 
  $\HomMatDiag{\mcM}{\mcN}$ by means of a $\LC^{ck}$-formula.
  So far, we saw that we can describe the isomorphism type of the  
  surrounding $\StabMatDiag{\mcM}$-module $\bigoplus_{d < m } \HDiag_d$ in 
  $\LC^{ck}$.
  Unfortunately, we cannot simply transfer our arguments for the module 
$\bigoplus_{d < m } \HDiag_d$ to the case of the module 
$\HomMatDiag{\mcM}{\mcN}$.  
Although we have a $\LC^{ck}$-definable ordered basis for $\StabMatDiag{\mcM}$,
and although we can order the module $\bigoplus_{d < m } \HDiag_d$ locally,  
there is no hope to define a global linear order neither on $\bigoplus_{d < m } 
\HDiag_d$ nor on the submodule $\HomMatDiag{\mcM}{\mcN}$.
The simple reason is that, in general, matrices in $\bigoplus_{d < m } 
\HDiag_d$ or $\HomMatDiag{\mcM}{\mcN}$ are \emph{not} invariant under the 
automorphisms of the ordered pair of CFI-structures $(\mfA, \mfB)$, not even 
if we fix a (sublinear) number of parameters.

On the other hand, as long as we are only interested in determining the 
isomorphism type of a module, it is not necessary to define a
linear order on the module itself.
This is true, in particular, if we have access to a definable decomposition of 
the module into a direct sum of ``small'' submodules, such as in the case of 
$\bigoplus_{d < m} \HDiag_d$, because in this case we can describe the 
isomorphism type of the full module by  means of describing the isomorphism 
types of all small components, such as $\HDiag_d$, $d < m$, in the case 
of $\bigoplus_{d < m} \HDiag_d$. 
Indeed, for the case of $\bigoplus_{d < m} \HDiag_d$ we saw that it is 
straightforward to describe the isomorphism types of the submodules 
$\HDiag_d$, $d < m$, in  $\LC^{ck}$ simply because we can define a 
linearly ordered basis for $\HDiag_d$, $d < m$, in $\LC^{ck}$ (but this 
requires the choice of a parameter $(i,j) \in I_d \times J_d$).

The preceding discussion motivates our following strategy. 
We aim to apply the
idea of decomposing the module into ``small'' submodules, that we can control 
easily in $\LC^{ck}$, to the case of the module $\HomMatDiag{\mcM}{\mcN}$.
But, of course, the difficulty here is that we don't have access to a 
(definable) decomposition of the module $\HomMatDiag{\mcM}{\mcN}$ into ``small'' 
submodules as in the case of $\bigoplus_{d < m} \HDiag_d$.
Still, there now appears to be an obvious path: since $\HomMatDiag{\mcM}{\mcN}$ 
is a submodule of $\bigoplus_{d <m } \HDiag_d$, and since we have a nice 
decomposition of $\bigoplus_{d <m } \HDiag_d$ into the small modules 
$\HDiag_d$, we could try to transfer this decomposition to the submodule 
$\HomMatDiag{\mcM}{\mcN}$.
For $d < m$ let us define $\HDecomp_d$ as the 
$\StabMatDiag{\mcM}$-submodule of $\HomMatDiag{\mcM}{\mcN}$ that only contains 
those elements from  $\HomMatDiag{\mcM}{\mcN}$ which are non-trivial on 
summands $\HDiag_{d'}$ with $d' \geq d$, that is
\[ \HDecomp_d = \HomMatDiag{\mcM}{\mcN} \cap (\{0\} \oplus \cdots \oplus \{0\} 
\oplus \HDiag_d \oplus \cdots \oplus \HDiag_\mmo). \]
In particular, $\HDecomp_0 = \HomMatDiag{\mcM}{\mcN}$.
Moreover, $\HDecomp_{d} / 
\HDecomp_{d+1}$ is isomorphic to a $\StabMatDiag{\mcM}$-submodule of 
$\HDiag_{d}$, $d < m$, where we agree 
that $\HDecomp_{m} = \bigoplus_{d < m }\{0\}$, 
and we obtain a chain of submodules as
\[ \HomMatDiag{\mcM}{\mcN} = \HDecomp_0 \supseteq \HDecomp_1 \supseteq \cdots 
\supseteq \HDecomp_\mmo \supseteq R_m = 0.    \]
Furthermore, in an analogous way as for the full module 
$\HomMatDiag{\mcM}{\mcN}$, we can construct, for each $d < m$, a (cocyclic) 
linear equation system whose solution space is $\HDecomp_d$. Hence, by another 
application of Theorem~\ref{thm:solvcocylic} it follows that we can define in 
$\LC^{ck}$ a generating set for each of the submodules $\HDecomp_d$, $d < m$.

The final step is to use the chain of submodules 
$R_0 \supseteq R_1 \supseteq \cdots \supseteq R_\mmo$ in order to decompose 
$\HomMatDiag{\mcM}{\mcN}$ into a direct sum of ``small'' submodules.
We proceed recursively, so let us assume that we already know how to decompose 
the $\StabMatDiag{\mcM}$-module $\HomMatDiag{\mcM}{\mcN}$, in 
a $\LC^{ck}$-definable way,  as a direct sum 
\[ \bigoplus_{d < t} T_d \oplus \HDecomp_{t},\]
where $t \leq m$ and where each of the $\StabMatDiag{\mcM}$-submodules 
$T_d$, $d < t$, of $\HomMatDiag{\mcM}{\mcN}$ is ``small'' in the sense that we 
can define a linearly ordered basis for $T_d$ in $\LC^{ck}$ by only using a 
constant number of parameters from $(\mfA, \mfB)$ (in fact, a single parameter 
tuple $(i,j)\in I_d \times J_d$ will be sufficient).
Then we only need to explain how we can express, by means of a 
$\LC^{ck}$-formula, a 
decomposition of $\HDecomp_{t}$ as a direct sum $T \oplus \HDecomp_{t+1}$ 
together with a linearly ordered basis for $T$ (where we can use a 
constant number of parameters to define the basis, but \emph{not} the 
decomposition).

The crucial ingredient for our argument is the semi-simplicity of the  
$\StabMatDiag{\mcM}$-module $\HomMatDiag{\mcM}{\mcN}$ 
which we proved in step~(\ref{proofplan2}) by an application of 
Maschke's Theorem.
Indeed, this result already implies the existence of a complement for 
$\HDecomp_{t+1}$ in $\HDecomp_t$, that is it proves the existence of a
$\StabMatDiag{\mcM}$-submodule $T$ of $\HDecomp_t$ such that
\[ \HDecomp_{t} = T \oplus \HDecomp_{t+1}. \]
Still, the immediate question is: why should the pure existence 
of such a submodule $T$ say anything about the definability of a linearly 
ordered basis in $\LC^{ck}$?

In order to approach this question, we first need to recall one of our 
earlier observations, namely that the $\field F$-algebra $\StabMatDiag{\mcM}$ 
contains the automorphism group $\Gamma$ of the CFI-structure $\mfA$ (in the 
sense that we view automorphisms $\pi \in \Gamma$ as $I \times I$-permutation 
matrices $\Pi\colon I \times I \to \{0, 1\} \subseteq \field F$ as above).
Secondly, we observe that matrices in $T$ are unique when projected onto
the $t$-th diagonal block $I_t \times J_t$ (that is onto the module $\HDiag_t$).
Indeed, assume that $X, Y \in T$. Then we claim that either $X = Y$ or that 
$\Diag_t(X) \neq \Diag_t(Y)$.
In fact, if $\Diag_t(X) = \Diag_t(Y)$, then $X - Y \in 
\HDecomp_{t+1}$ (because, $\Diag_t(X-Y) = 0$) and $X - Y \in T$ (because 
modules are closed under differences). Hence, $X - Y \in 
\HDecomp_{t+1} \cap T = \{ 0\}$, which implies $X = Y$.
As we see next, these two facts together allow us to show that we can define a 
linearly ordered basis of $T$ in $\LC^{ck}$.

Let us denote by $\Delta_\mfA = \Stab(i)$, $i \in I_t$, the stabiliser 
group of the  orbit $I_t$ in the CFI-structure $\mfA$, that is the group 
consisting of all automorphisms 
$\pi$ of $\mfA$ which fix some (and therefore all) $i \in I_t$ 
(since the automorphism group of $\mfA$ is Abelian, the stabiliser groups 
for all elements $i \in I_t$ are identical).
Then, obviously, for each $\Pi \in \Delta_\mfA$ we have that $\Diag_t(\Pi)$ is 
the identity matrix.
Hence, if we let $X \in T$ be arbitrary, then $\Diag_t(\Pi \cdot X) = 
\Diag_t(X)$ for all $\Pi \in \Delta_\mfA$. But then, because matrices in $T$ are 
unique on the $t$-th diagonal block, we can actually conclude that $\Pi \cdot X 
= X$ for all $\Pi \in X$.
It can be shown, in precisely the same way, that also $X \cdot \Pi = X$ holds 
for every permutation matrix $\Pi \colon J \times J \to \{ 0,1 \}$ which 
corresponds to an automorphism of the CFI-structure $\mfB$ that stabilises the 
$t$-th diagonal block $J_t \times J_t$, that is $X \cdot \Pi = X$ holds for 
every $\Pi \in \Delta_\mfB$ where $\Delta_\mfB$ denotes the set of automorphisms 
$\pi$ of $\mfB$ which fix some (any) $j \in J_t$.
Altogether, if we denote by $\Delta$ the set of automorphisms $\pi$ of the 
ordered pair $(\mfA, \mfB)$ which pointwise fixes some (any) parameter tuple
$(i,j) \in I_t \times J_t$, that is $\pi(i,j) = (i,j)$, then we have that 
$\Delta = \Delta_\mfA \times \Delta_\mfB$ and $\Pi \cdot X \cdot \Pi^{-1} = X$ 
for every $X \in T$.
In other words, every matrix $X \in T$ in a complement $T$ of the module 
$\HDecomp_{t+1}$ in $\HDecomp_t$ has to be invariant on $\Delta$-orbits.
Of course, this also implies that $T$ itself must be invariant under $\Delta$ 
(but not necessarily under the action of the full automorphism group of 
$(\mfA, \mfB)$; recall that complements do not need to be unique in 
general).

Now, again because of the homogeneity of CFI-structures, we can easily define a 
linear order on the set of all matrices $X\colon I \times J \to \field F$ which 
are invariant under the action of $\Delta$ in counting logic by using at most 
$ck$ many variables and by fixing only a single parameter tuple $(i,j) \in I_t 
\times J_t$. This means that a $\LC^{ck}$-formula that uses a single parameter 
tuple $(i,j) \in I_t \times J_t$ can quantify over all possible 
submodules of $\HDecomp_{t+1}$ that consist of such $\Delta$-invariant matrices
$X \colon I \times J \to \field F$ only.
In particular, this means that a $\LC^{ck}$-formula can fix a complement $T$ of 
$\HDecomp_{t+1}$ in $\HDecomp_t$ together with an ordered basis by fixing a
single parameter tuple $(i,j) \in I_t \times J_t$ only.

This concludes our argument and we obtain the desired decomposition of 
$\HomMatDiag{\mcM}{\mcN}$ as $\bigoplus_{d < m} T_d$ where each of the 
submodules $T_d$ denotes a complement of $\HDecomp_{t+1}$ in 
$\HDecomp_t$ which we can fix, together with an ordered basis, in $\LC^{ck}$ by 
selecting a single parameter tuple $(i,j) \in I_t \times J_t$ only.
Let us stress that these complements $T_d$ may not be unique and, indeed, 
depending on the choice of parameters $(i,j) \in I_t \times J_t$ we may end up
with different complements $T_d$ of $\HDecomp_{t+1}$ in $\HDecomp_t$.
However, this clearly doesn't cause any harm as long as we are only interested 
in the isomorphism types of these complements (the reader should think of this 
as a canonisation procedure: we only need to express the isomorphism type of 
the module $\HomMatDiag{\mcM}{\mcN}$ in $\LC^{ck}$, but we don't need to 
define an explicit isomorphism from the abstract module into an ordered copy).

\paragraph*{(\ref{proofplan5}) Determining the isomorphism type}
We can finally complete our proof of Theorem~\ref{thm:main:definability} by 
putting everything together.
First of all, we saw that in order to show that the $\StabMat{\mcM}$-module
$\HomMat{\mcM}{\mcN}$ contains an invertible matrix $S \in \HomMat{\mcM}{\mcN}$ 
we can equivalently verify that the ``diagonal'' 
$\StabMatDiag{\mcM}$-module $\HomMatDiag{\mcM}{\mcN}$ is cyclic 
(Step~(\ref{proofplan3})).
We then proved as a second step that we can define, by means of a 
$\LC^{ck}$-formula, an ordered basis for the $\field F$-algebra 
$\StabMatDiag{\mcM}$ and we showed, by an application of Maschke's Theorem, that 
this algebra is semisimple (Step~(\ref{proofplan2})).
Using the semi-simplicity of $\StabMatDiag{\mcM}$, we further explained how 
one can decompose the $\StabMatDiag{\mcM}$-module $\HomMatDiag{\mcM}{\mcN}$, in 
a $\LC^{ck}$-definable way, into ``small'' submodules $T_d$, $d < m$, for which 
we can, furthermore, define an ordered basis in $\LC^{ck}$ by fixing a single 
parameter tuple $(i,j) \in I_d \times J_d$.

From this decomposition of $\HomMatDiag{\mcM}{\mcN}$ as $\bigoplus_{d < m} T_d$ 
we can now easily extract the isomorphism type of $\HomMatDiag{\mcM}{\mcN}$ in 
$\LC^{ck}$.
The reason is that we can determine the isomorphism type of each 
submodule $T_d$, $d < m$, individually. This is because we have a 
$\LC^{ck}$-definable ordered basis for each submodule $T_d$, $d < m$, and also 
for the $\field F$-algebra $\StabMatDiag{\mcM}$, which means that we can 
determine the structure coefficients for each submodule $T_d$, $d < m$, in 
$\LC^{ck}$. These structure constants clearly determine the isomorphism types of 
the submodules $T_d$, $d <m$. Having this we know how to  decompose each $T_d$ 
into a direct sum of simple $\StabMatDiag{\mcM}$-modules and thus we know the 
isomorphism type of the full module $\HomMatDiag{\mcM}{\mcN}$ as it is 
determined by the multiplicities of simple $\StabMatDiag{\mcM}$-modules as 
they occur in any decomposition of $\HomMatDiag{\mcM}{\mcN}$ into a direct sum 
of simple modules.
Finally, since the isomorphism type of $\HomMatDiag{\mcM}{\mcN}$ 
determines whether $\HomMatDiag{\mcM}{\mcN}$ is cyclic or not, we have  
completed our proof of Theorem~\ref{thm:main:definability}.

\section{Main results}
In this section we spell out the consequences of the main technical
result, Theorem~\ref{thm:main:cfi}, for approximations of isomorphism
and for logics with linear-algebraic operators.

With regard to the relations $\IMequiv{k}{Q}$ as approximations of
isomorphism, it follows immediately that as long as $Q \neq \Primes$,
i.e.\ $Q$ is not the set of all primes, there is no $k$ for which
$\IMequiv{k}{Q}$ coincides with isomorphism on all structures.

\begin{cor}
  If $Q\neq \Primes$, there is no fixed $k$ such that $\IMequiv{k}{Q}$
  coincides with isomorphism on all structures.
\end{cor}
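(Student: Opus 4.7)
The plan is to produce, for every fixed $k$, a pair of non-isomorphic structures that are nevertheless $\IMequiv{k}{Q}$-equivalent. Since $Q \neq \Primes$, fix a prime $p \notin Q$. For sufficiently large $n$, take $\mfA = \CFIgraph{G_n}{p}{\lambda}$ and $\mfB = \CFIgraph{G_n}{p}{\sigma}$ with $\sum \lambda \neq \sum \sigma$: these are non-isomorphic by Theorem~\ref{thm:cfi-isomorphism}, so the task reduces to proving $\mfA \IMequiv{k}{Q} \mfB$ for $n$ large.

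By the characterisation of $\IMequiv{k}{Q}$ as elementary equivalence in $\LAkLogic(Q)$ (the theorem at the end of Section~\ref{sec:Equiv}, lifted from tuples to structures via disjoint unions and Lemma~\ref{lem:elem-equiv}), it suffices to show that no sentence of $\LAkLogic(Q)$ distinguishes $\mfA$ from $\mfB$. Any such sentence uses only linear-algebraic quantifiers of dimension $\ell \leq k/2$, since each quantifier $\mcQ_f^{t,\ell}$ consumes the $2\ell$ variables of its interpretation. A straightforward extension of the inductive argument in Lemma~\ref{lem:elem-equiv} (tracking the dimension of each quantifier encountered rather than a single fixed $\ell$) then reduces the task to verifying, for every $q \in Q$ and every $\ell \leq k/2$, that $\mfA$ and $\mfB$ are $\linisom{\field F_q}{\ell}{\LAkLogic(Q)}$.

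Fix such $\ell$ and $q$; since $p \notin Q$ and $q \in Q$, we have $\characteristic(\field F_q) = q \neq p$. Invoke Theorem~\ref{thm:main:definability} with this $\ell$ and $k' = t\ell$: there is an $\LC^{ck'}$-sentence $\phi$ such that $(\mfA,\mfB) \models \phi$ if and only if $\HomMat{\mcM}{\mcN}$ contains an invertible matrix over $\field F_q$. Choose $n$ large enough, uniformly in $\ell \leq k/2$, that $\mfA \cequivx{ck'} \mfB$ (possible by Theorem~\ref{thm:cfi92}, as $ck'$ is bounded linearly in $k$). Since $\cequivx{ck'}$ is a congruence under disjoint union, $(\mfA,\mfA) \cequivx{ck'} (\mfA,\mfB)$, and so $\phi$ cannot separate them; but $(\mfA,\mfA) \models \phi$ trivially (the identity matrix is invertible), hence $(\mfA,\mfB) \models \phi$, yielding $\linisom{\field F_q}{\ell}{\LC^{ck'}}$. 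The homogeneity of CFI-structures (Theorem~\ref{thm:homogeneity} and the remark closing Section~\ref{sec:cfi:homogeneous}) guarantees that for $ck'$ above the homogeneity threshold the partitions of $2\ell$-tuples induced by $\cequivx{ck'}$, by $\IMequiv{ck'}{Q}$, and by automorphism orbits coincide. Consequently the standard bases $\CBMat[\mfA;\ell;\LC^{ck'}]$ and $\CBMat[\mfA;\ell;\LAkLogic(Q)]$ are the same on these structures, and the condition $\linisom{\field F_q}{\ell}{\LC^{ck'}}$ upgrades to the required $\linisom{\field F_q}{\ell}{\LAkLogic(Q)}$.

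The main bookkeeping obstacle is that Theorem~\ref{thm:main:cfi} is phrased for a single dimension $\ell$, whereas $\IMequiv{k}{Q}$ is sensitive to all dimensions $\ell \leq k/2$ simultaneously. This is resolved by invoking Theorem~\ref{thm:main:definability} in its more general form (arbitrary $\ell \geq 1$ with $k' \geq t\ell$) once for each such $\ell$ and choosing a single $n$ large enough to work for all of them; homogeneity then bridges the gap between the counting-logic algebras handled by the theorem and the $\LAkLogic(Q)$-algebras appearing in $\linisom$.
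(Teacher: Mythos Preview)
Your proof is correct and follows the same overall strategy as the paper: pick a prime $p\notin Q$, use the CFI-structures over $\field F_p$, and appeal to the main technical theorem to show that the two non-isomorphic structures are $\IMequiv{k}{Q}$-equivalent. The paper's own proof is a two-line sketch that simply cites Theorem~\ref{thm:main:cfi} and asserts ``It follows that $\mfA \IMequiv{k}{Q} \mfB$'', leaving the passage from $\linisom{\field F_q}{\ell}{\LC^k}$ to $\IMequiv{k}{Q}$ implicit.

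What you do differently is unpack exactly that implicit passage: you invoke Theorem~\ref{thm:main:definability} directly (rather than its specialisation Theorem~\ref{thm:main:cfi}) so that you can vary~$\ell$, you handle all dimensions $\ell\le k/2$ simultaneously by choosing $n$ large enough once, and you explicitly use homogeneity to identify the $\LC^{ck'}$-bases with the $\LAkLogic(Q)$-bases so that Lemma~\ref{lem:elem-equiv} applies. This is precisely the bookkeeping the paper elides; your version is a more honest proof of the same corollary by the same route.
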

\begin{proof}
Fix a prime $p \not\in Q$.  Then, for each $k$, we have, by
Theorem~\ref{thm:main:cfi} a pair of structures  $\mfA =
\CFIgraph{G_n}{p}{\lambda}$ and $\mfB = \CFIgraph{G_n}{p}{\sigma}$
that are $\linisom{\field F_q}{\ell}{\LC^{k}}$, for all $q\neq p$,
though $\sum \lambda \neq \sum \sigma$.  It follows that $\mfA
\IMequiv{k}{Q} \mfB$, but $\mfA \not\cong \mfB$, by Theorem~\ref{thm:cfi-isomorphism}.
\end{proof}

It should be noted that this was proved in a special case by
Holm~\cite{Holm10}.  To be precise, we can further parameterise the
equivalence relations $\IMequiv{k}{Q}$ by the parameter $\ell$, as in Definition~\ref{def:logic}.  That
is, in the iterative definition of $\IMequiv{k}{Q}$, we only ever
consider $A^{\ell} \times A^{\ell}$ for some fixed $\ell$.  Then, Holm
shows that in the case when $\ell = 1$, the resulting equivalence
relation does not capture isomorphism whenever $Q \neq \Primes$.  It
was left as an open question whether this could be proved in general.
Our result establishes this, and required substantial new algebraic
machinery.  The interesting open question remaining, of course, is to
establish such a result in the case when $Q = \Primes$.

The consequences for the expressive power of the logic $\LALogic$ are
also immediate.
\begin{cor}\label{cor:undefinable}
   If $Q\neq \Primes$, there is a class of structures that is not
   definable in $\LALogic(Q)$.
\end{cor}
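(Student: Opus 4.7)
The plan is to derive Corollary~\ref{cor:undefinable} directly from Theorem~\ref{thm:main:cfi} together with Lemma~\ref{lem:elem-equiv}. Fix a prime $p \in \Primes \setminus Q$ (which exists since $Q \neq \Primes$) and let $\mcK$ consist of those CFI-structures $\CFIgraph{G_n}{p}{\lambda} \in \CFIclass{\mcF}{p}$ with $\sum_{v \in V} \lambda(v) = 0$ in $\field{F}_p$. By Theorem~\ref{thm:cfi-isomorphism}, $\mcK$ is closed under isomorphism, and for each $G_n \in \mcF$ it strictly separates some non-isomorphic pair over $G_n$ (e.g.\ take $\sum \lambda = 0$ versus $\sum \sigma = 1$).

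Assume, towards a contradiction, that some $\phi \in \LALogic(Q)$ defines $\mcK$. Then $\phi$ uses at most $k_0$ distinct variables and only finitely many linear-algebraic quantifiers, each of dimension at most $\ell_0$, for some fixed $k_0, \ell_0$; after padding the interpretations of any lower-dimensional quantifier with dummy variables, we may regard $\phi$ as a sentence of the $\ell'$-ary fragment of $\mathrm{LA}^{k'}(Q)$ for any $k' \geq k_0$ and $\ell' \geq \ell_0$. Now apply Theorem~\ref{thm:main:cfi}: there is $\varepsilon > 0$ such that for all sufficiently large $n$, any two CFI-structures $\mfA = \CFIgraph{G_n}{p}{\lambda}$ and $\mfB = \CFIgraph{G_n}{p}{\sigma}$ over $G_n$ are $\linisom{\field{F}_q}{\ell'}{\LC^{k'}}$ for every prime $q \neq p$, where $\ell' = \lfloor \varepsilon n \rfloor$ and $k' = 3\ell'$. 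Choose $n$ large enough that $\ell' \geq \ell_0$ and $k' \geq k_0$, and pick $\mfA \in \mcK$, $\mfB \notin \mcK$ over $G_n$.

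To bridge $\LC^{k'}$ and $\mathrm{LA}^{k'}(Q)$, I invoke the homogeneity of $\CFIclass{\mcF}{p}$ from Section~\ref{sec:cfi:homogeneous}: whenever $k' \geq t \ell'$, the partitions of $A^{2\ell'}$ and $B^{2\ell'}$ induced by $\cequivx{k'}$, by $\IMequiv{k'}{Q}$, and by the automorphism orbits all coincide. Hence the standard bases $\CBMat[\mfA; \ell'; \LC^{k'}]$ and $\CBMat[\mfA; \ell'; \mathrm{LA}^{k'}(Q)]$ are literally identical (and likewise for $\mfB$), so the conclusion of Theorem~\ref{thm:main:cfi} upgrades on these CFI-structures to $\linisom{\field{F}_q}{\ell'}{\mathrm{LA}^{k'}(Q)}$ for each $q \neq p$, and in particular for every $q \in Q$. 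Lemma~\ref{lem:elem-equiv} then says that no sentence of the $\ell'$-ary fragment of $\mathrm{LA}^{k'}(Q)$ distinguishes $\mfA$ from $\mfB$, contradicting $\mfA \in \mcK$ and $\mfB \notin \mcK$. I expect the main subtlety to lie in this alignment of parameters: Theorem~\ref{thm:main:cfi} delivers $\linisom{}{}{}$ only at the growing parameters $\ell', k'$, while $\phi$ lives at fixed $k_0, \ell_0$. This is reconciled by the padding argument that absorbs $\phi$ into the large fragment together with the homogeneity-driven identification of the two coherent-configuration bases.
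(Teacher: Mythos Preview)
Your proposal is correct and follows essentially the same route as the paper: fix $p\notin Q$, take the CFI-problem class over $\field F_p$, and derive a contradiction from Theorem~\ref{thm:main:cfi} together with Lemma~\ref{lem:elem-equiv}. Your explicit use of homogeneity to identify $\CBMat[\cdot;\ell';\LC^{k'}]$ with $\CBMat[\cdot;\ell';\mathrm{LA}^{k'}(Q)]$ is exactly the bridge the paper relies on (discussed at the end of Section~\ref{sec:alg-structure}) but leaves implicit in its proof of the corollary; one small slip is the phrase ``only finitely many linear-algebraic quantifiers'', which is false in the presence of infinitary connectives, but harmless, since what you actually need and use is only the bound $\ell_0 \leq k_0/2$ on their dimensions, which follows from the variable bound.
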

\begin{proof}
  Fix a prime $p \not\in Q$ and consider the class $\mcC$ of structures of
  the form $\CFIgraph{G_n}{p}{\lambda}$ where $\sum \lambda = 0$
  (i.e.\ what we called the \emph{CFI-problem}.
  This is an isomorphism-closed class of structures by
  Theorem~\ref{thm:cfi-isomorphism}.  Suppose it were defined by a
  sentence $\phi$ of $\LALogic(Q)$.  Let $\ell$ the maximum dimension of an
  interpretation used with any quantifier in $\phi$ and choose $k$
  such that $k \geq 3\ell$ and $k$ is greater than the number of
  variables in $\phi$.  Then, by Theorem~\ref{thm:main:cfi}, we have
  a structure  $\mfA = \CFIgraph{G_n}{p}{\lambda} \in \mcC$ which is
  $\linisom{\field F_q}{\ell}{\LC^{k}}$ to every structure
  $\CFIgraph{G_n}{p}{\sigma}$.  Letting $\mfB$ be such a structure where $\sigma \neq
  0$, we have, by Lemma~\ref{lem:elem-equiv}, that $\mfB \models \phi$,
  contradicting the assumption that $\phi$ defines~$\mcC$.
\end{proof}

It should be noted that the class of structures $\mcC$ defined in the
proof of Corollary~\ref{cor:undefinable} is decidable in polynomial
time.  This is because the class can be decided by solving systems of
linear equations, for example by Gaussian elimination.  Thus, we know
that $\LALogic(Q)$ cannot express some PTIME property as long as $Q
\neq \Primes$.  Since this logic subsumes any extension of fixed-point
logic with $Q$-linear algebraic operators, we also have the following
conclusion.
\begin{cor}
   If $Q\neq \Primes$, no extension of fixed-point logic with $Q$-linear algebraic
  operators captures PTIME.
\end{cor}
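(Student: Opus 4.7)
The plan is to combine Corollary~\ref{cor:undefinable} with two simple observations: the witnessing class $\mcC$ is polynomial-time decidable, and any reasonable extension of fixed-point logic by $Q$-linear-algebraic operators embeds into $\LALogic(Q)$. Together these immediately yield the separation.

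First I would argue that the class $\mcC$ from the proof of Corollary~\ref{cor:undefinable}, consisting of CFI-structures $\CFIgraph{G_n}{p}{\lambda}$ with $\sum\lambda=0$, lies in \textsc{Ptime}. The relational encoding of a CFI-structure directly exposes the underlying system of linear equations over $\field{F}_p$: vertices of $G_n$ give equations, the edge classes and the relation $R^\lambda$ encode variables and constants, and by Theorem~\ref{thm:cfi-isomorphism} membership in $\mcC$ amounts to testing the consistency of this linear system (equivalently, to checking $\sum\lambda = 0$ on any given representative). This can be done by Gaussian elimination over $\field{F}_p$ in polynomial time.

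Next I would observe that any extension $\mcL$ of fixed-point logic by $Q$-linear-algebraic operators is semantically contained in $\LALogic(Q)$. Fixed-point logic itself unfolds into an infinitary first-order logic with only finitely many variables, and each operator added to $\mcL$, being by assumption a $Q$-linear-algebraic operator (i.e.\ invariant under $\field{F}_p$-vector-space isomorphism for some $p\in Q$), is by the very definitions of Section~\ref{sec:LALogic} one of the Lindström quantifiers $\mcQ^{t,\ell}_f$ of $\LALogic(Q)$, possibly after the standard reductions to square $0$-$1$ matrices and prime fields discussed there. Hence every sentence of $\mcL$ is equivalent to some sentence of $\LALogic(Q)$.

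Putting the two observations together, if some extension $\mcL$ of fixed-point logic by $Q$-linear-algebraic operators captured \textsc{Ptime}, then the polynomial-time class $\mcC$ would be definable by some sentence of $\mcL$, hence by some sentence of $\LALogic(Q)$, contradicting Corollary~\ref{cor:undefinable}. The only non-routine point is making precise what ``extension of fixed-point logic by $Q$-linear-algebraic operators'' means so that the embedding into $\LALogic(Q)$ is unambiguous; but this is already covered by our broad definition of linear-algebraic operator in Section~\ref{sec:LALogic}, which was designed precisely to absorb any such logic. No further technical machinery beyond Theorem~\ref{thm:main:cfi} and its corollaries is required.
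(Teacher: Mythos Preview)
Your proposal is correct and follows essentially the same approach as the paper: you observe that the CFI-class $\mcC$ is decidable in polynomial time (via Gaussian elimination over $\field{F}_p$), that any extension of fixed-point logic by $Q$-linear-algebraic operators is subsumed by $\LALogic(Q)$, and then invoke Corollary~\ref{cor:undefinable}. The paper's own argument is terser but identical in substance.
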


We can say more.  The class $\mcC$ is not just decidable in PTIME, but
also definable in \emph{choiceless polynomial time} (CPT) (see~\cite{Pakusa16}).
We do not define the class CPT here but details may be found in~\cite{BlassGurevichShelah}
  Thus, the following corollary is immediate.
\begin{cor}
   If $Q\neq \Primes$, no extension of fixed-point logic with $Q$-linear algebraic
  operators captures CPT.
\end{cor}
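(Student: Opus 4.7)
The plan is to reuse verbatim the witness class $\mcC$ constructed in the proof of Corollary~\ref{cor:undefinable}: for a fixed prime $p \notin Q$, $\mcC$ consists of those CFI-structures $\CFIgraph{G_n}{p}{\lambda}$ with $\sum\lambda = 0$. Corollary~\ref{cor:undefinable} already shows that $\mcC$ is not definable in $\LALogic(Q)$, and since $\LALogic(Q)$ was set up precisely as an upper bound on the expressive power of any extension of fixed-point logic by $Q$-linear-algebraic operators, the same $\mcC$ is undefinable in any such extension. Hence only one ingredient still needs to be supplied: that $\mcC$ itself is CPT-definable.

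For this, I would simply invoke the result of~\cite{Pakusa16}, where the CFI-problem over prime fields $\field F_p$ is shown to lie in CPT. The idea underlying that result is that CPT, thanks to its use of hereditarily finite sets as data, can access canonical symmetric representatives of the automorphism orbits in $\CFIgraph{G_n}{p}{\lambda}$ and thereby write down the associated linear equation system over $\field F_p$ in a choiceless fashion; Gaussian elimination over $\field F_p$ is then easily simulated in CPT once the system is represented symmetrically. In particular, the condition $\sum \lambda = 0$ becomes a CPT-expressible property of the structure.

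Combining the two ingredients yields the corollary: $\mcC$ is a class definable in CPT but not in any extension of fixed-point logic by $Q$-linear-algebraic operators, so no such extension captures CPT. There is no genuine new obstacle here — the hard algebraic and definability work has already been discharged in Theorem~\ref{thm:main:cfi} and Corollary~\ref{cor:undefinable}, and the CPT-definability of $\mcC$ is imported as a black box from~\cite{Pakusa16}. The step from the PTIME-separation statement to the CPT-separation statement amounts to observing that the separating witness happens to live inside the weaker class CPT, which is exactly what~\cite{Pakusa16} records.
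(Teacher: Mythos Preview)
Your proposal is correct and matches the paper's own argument essentially verbatim: the paper simply remarks that the class $\mcC$ from Corollary~\ref{cor:undefinable} is known to be definable in CPT by~\cite{Pakusa16}, and declares the corollary immediate. Your additional sketch of \emph{why} the CFI-problem lies in CPT goes beyond what the paper supplies, but the logical structure is identical.
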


On the other hand it remains an intriguing open question whether CPT captures all of rank logic, for example.

\bibliographystyle{plain}
\bibliography{matrix-bib}

\end{document}